\newtheorem{definition}{Definition}
\newtheorem{example}{Example}
\newtheorem{lemma}{Lemma}
\newtheorem{remark}{Remark}
\newtheorem{condition}{Condition}
\newtheorem{corollary}{Corollary}
\newtheorem{proposition}{Proposition}
\newcommand{\lt}{\left}
\newcommand{\rt}{\right}
\newcommand{\cd}{\mathcal{C}}
\newcommand{\D}{\mathcal{D}}
\newcommand{\E}{\mathcal{E}}
\newcommand{\G}{\mathcal{G}}
\newcommand{\In}{\mathcal{I}}
\newcommand{\K}{\mathcal{K}}
\newcommand{\n}{\mathcal{N}}
\newcommand{\Sr}{\mathcal{S}}
\newcommand{\V}{\mathcal{V}}
\newcommand{\X}{\mathcal{X}}
\newcommand{\Y}{\mathcal{Y}}
\newcommand{\z}{\mathbb{Z}}
\newcommand{\f}{\mathbb{F}}
\newcommand{\A}{\overline{A}}
\newcommand{\As}[1]{\overline{A_{#1}}}
\newcommand{\B}{\overline{B}}
\newcommand{\C}{\overline{C}}
\newcommand{\I}{\overline{I}}
\newcommand{\Ba}{\overline{B_1}}
\newcommand{\Bc}{\overline{B_3}}
\newcommand{\Bd}{\overline{B_4}}
\newcommand{\U}{\overline{U}}
\newcommand{\fp}{\frac{p-1}{2}}
\newcommand{\fq}{\frac{q-1}{2}}
\newcommand{\mfp}{\frac{p+1}{2}}
\newcommand{\mfq}{\frac{q+1}{2}}
\newcommand{\ordm}[1]{|\text{--}#1|}
\begin{document}
%
\title{On the Ingleton-Violations in Finite Groups}
%
%
%

\author{Wei~Mao, Matthew~Thill, and~Babak~Hassibi,~\IEEEmembership{Member,~IEEE}
\thanks{Portions of this work were presented at the Forty-Seventh Annual Allerton Conference on Communication, Control, and Computing, 2009\cite{vioIngleton} and the 2010 IEEE International Symposium on Information Theory\cite{vioIngletonISIT}. The authors are with the Department of Electrical Engineering,
California Institute of Technology, Pasadena, CA 91125 USA (email: wmao@caltech.edu, mthill@caltech.edu, hassibi@caltech.edu). This work was supported in part by the National Science Foundation under grants CCF-0729203, CNS-0932428 and CCF-1018927, by the Office of Naval Research under 
the MURI grant N00014-08-1-0747, and by Caltech's Lee Center for Advanced Networking.}}

\maketitle

\begin{abstract}
Given $n$ discrete random variables, its entropy vector is the $2^n-1$ dimensional vector obtained from the joint entropies of all non-empty subsets of the random variables. It is well known that there is a one-to-one correspondence between such an 
entropy vector and a certain
group-characterizable vector obtained from a finite group and $n$ of
its subgroups \cite{Entropy_Group}. This correspondence may be useful for characterizing the space of entropic vectors and for designing network codes. If one restricts
attention to abelian groups then not all entropy vectors can be
obtained. This is an explanation for the fact shown by Dougherty et al
\cite{DFZ-insuff} that linear network codes cannot achieve capacity in
general network coding problems (since linear network codes form an
abelian group). All abelian group-characterizable vectors, and by fiat
all entropy vectors generated by linear network codes, satisfy a
linear inequality called the Ingleton inequality. It is therefore of interest to identify groups that violate the Ingleton inequality. In this paper, we
study the problem of finding nonabelian finite groups that yield
characterizable vectors which violate the Ingleton inequality. Using
a refined computer search, we find the symmetric group $S_5$ to be the
smallest group that violates the Ingleton inequality. Careful study of
the structure of this group, and its subgroups, reveals that it belongs to
the Ingleton-violating family $PGL(2,q)$ with a prime power $q \geq 5$, i.e.,
the projective group of $2\times 2$ nonsingular matrices with entries
in $\f_q$. We further interpret this family of groups, and their subgroups, using the theory of group actions and identify the subgroups as certain stabilizers. We also extend the construction to more general groups such as $PGL(n,q)$ and $GL(n,q)$. The families of groups identified here are therefore good candidates for
constructing network codes more powerful than linear network codes, and we discuss some considerations for constructing such group network codes.
\end{abstract}

\begin{IEEEkeywords}
Finite groups, entropy vectors, Ingleton inequality, network coding, group network codes.
\end{IEEEkeywords}

%
\IEEEpeerreviewmaketitle

\section{Introduction}
\label{section:intro}
%
%
%
%

\IEEEPARstart{L}{et} $\n = \{1,2,\dots,n\}$, and let $X_1,X_2,\dots,X_n$ be $n$ jointly
distributed discrete random variables. For any nonempty set $\alpha
\subseteq \n$, let $X_\alpha$ denote the collection of random
variables $\{X_i:i\in\alpha\}$, with joint entropy $h_\alpha\triangleq
H(X_\alpha) = H(X_i;\,i\in\alpha)$. We call the ordered real $(2^n-1)$-tuple
$(h_\alpha:\emptyset \neq \alpha \subseteq \n) \in \mathbb{R}^{2^n-1}$ an
\emph{entropy vector}. The set of all entropy vectors
derived from $n$ jointly distributed discrete random variables is
denoted by $\Gamma_n^*$. It is not too difficult to show that the
closure of this set, i.e., $\overline{\Gamma_n^*}$, is a {\em convex cone}\cite{ZY97}.

The set $\overline{\Gamma_n^*}$ figures prominently in information theory
since it describes the possible values that the joint entropies of a
collection of $n$ discrete random variables can obtain. From a
practical point of view, it is of importance since it can be shown
that the capacity region of any arbitrary multi-source multi-sink {\em
  wired} network, whose graph is acyclic and whose links are discrete
memoryless channels, can be obtained by optimizing a linear function
of the entropy vector over the convex cone $\overline{\Gamma_n^*}$ and a set
of linear constraints (defined by the network)
\cite{yanyeungzhen07,hash07}. Despite this importance, the entropy
region $\overline{\Gamma_n^*}$ is only known for $n=2,3$ random variables
and remains unknown for $n\geq 4$ random variables. Nonetheless, there
are important connections known between $\overline{\Gamma_n^*}$ and
matroid theory (since entropy is a submodular\footnote{A set function $f$ defined on the subsets of $\n$ is \emph{submodular} iff $f_\alpha + f_\beta - f_{\alpha\cap\beta} - f_{\alpha\cup\beta}\geq0$ for all $\alpha,\beta \subseteq \n$.} function.)\cite{dofrze07}, determinantal inequalities
(through the connection with Gaussian random variables) \cite{shha08},
and quasi-uniform arrays \cite{chan01}. However, perhaps most
intriguing is the connection to finite groups which we briefly
elaborate below.

\subsection{Groups and Entropy}
\label{subsec:group}

Throughout this paper we use group theory notation defined in Section~\ref{section:notation}.
Let $G$ be a finite group, and let $G_1,G_2,\dots,G_n$ be $n$ of its
subgroups. For any nonempty set $\alpha \subseteq \n$, the group $G_\alpha
\triangleq \bigcap_{i\in\alpha}G_i$ is a subgroup of $G$. Define $g_\alpha = \log\frac{|G|}{|G_\alpha|}$. We call the ordered real
$(2^n-1)$-tuple $(g_\alpha: \emptyset\neq \alpha \subseteq \n) \in
\mathbb{R}^{2^n-1}$ a (finite) \emph{group characterizable vector}. Let
$\Upsilon_n$ be the set of all group characterizable vectors derived
from $n$ subgroups of a finite group.

The major result shown by Chan and Yeung in \cite{Entropy_Group} is
that $\overline{\Gamma_n^*} =
\overline{\mathrm{cone}(\Upsilon_n)}$, i.e., the closure of
$\Gamma_n^*$ is the same as the closure of the cone generated by
$\Upsilon_n$. Specifically, every group characterizable vector is
an entropy vector, whereas every entropy vector is arbitrarily
close to a scaled version of some group characterizable vector.

To show the first part of this statement, let $\Lambda$
be a random variable uniformly distributed on the elements of $G$ and define $X_i = \Lambda G_i$ (the left coset of $G_i$ in $G$ with representative $\Lambda$) for $i=1,\ldots,n$. Then $X_i$ is uniformly distributed on $G/G_{i}$ and $H(X_{i}) = \log\frac{|G|}{|G_i|}$. To calculate the joint entropy $h_\alpha = H(X_\alpha)$ for a nonempty subset $\alpha \subseteq \n$, let $\X_{\alpha}$ denotes the set of all coset tuples $\{(xG_i:i\in\alpha)\ |\ x\in G\}$. Consider the intersection mapping $\Theta_{\alpha}: \X_{\alpha} \to G/G_\alpha$, where for all $x\in G$,
\begin{equation}\label{equation:intersection_mapping}
\Theta_{\alpha}:(xG_i:i\in\alpha) \mapsto \bigcap_{i\in\alpha}xG_i = xG_\alpha.
\end{equation}
$\Theta_{\alpha}$ is a well defined onto function on $\X_{\alpha}$, and it is one-to-one since if $(xG_i:i\in\alpha)$ and $(x'G_i:i\in\alpha)$ are mapped to the same coset $xG_\alpha= x'G_\alpha$, then $x^{-1}x'\in G_\alpha$ and so $x^{-1}x'\in G_i$ for all $i$, which implies $(xG_i:i\in\alpha) = (x'G_i:i\in\alpha)$. So $H(X_\alpha) = H(\Theta_{\alpha}(X_\alpha))$, and as $\Theta_{\alpha}(X_\alpha) = \Lambda G_{\alpha}$, we have
\[ h_\alpha = H(\Lambda G_\alpha)) = \log\frac{|G|}{|G_\alpha|} = g_\alpha. \]
Thus indeed every group-characterizable vector is an entropy
vector. Showing the other direction, i.e., that every entropy vector
can be approximated by a scaled group-characterizable
vector is more tricky (the interested reader may consult
\cite{Entropy_Group} for the details). Here we shall briefly describe
the intuition.

Consider a random variable $X_1$ with alphabet size $N$ and probability
mass function $\{p_i,i=1,\ldots,N\}$. Now if we make $T$ copies of
this random variable to make sequences of length $T$, the entropy of
$X_1$ is roughly equal to the logarithm of the number of strongly typical
sequences, divided by $T$. These are
sequences where $X_1$ takes its first value roughly $Tp_1$ times, its
second value roughly $Tp_2$ times and so on. Therefore assuming that $T$ is
large enough so that the $Tp_i$ are close to integers (otherwise, we
have to round things) we may roughly write
\[ H(X_1) \approx \frac{1}{T}\log\left(\begin{array}{ccccc} & & T & &
  \\ Tp_1 & Tp_2 & \ldots & Tp_{N-1} & Tp_N \end{array} \right), \]
where the argument inside the $\log$ is the usual multinomial
coefficient. Written in terms of factorials this is
\begin{equation}
H(X_1) \approx \frac{1}{T}\log\frac{T!}{(Tp_1)!(Tp_2)!\ldots
  (Tp_N)!}.
\label{firstent}
\end{equation}
If we consider the group $G$ to be the symmetric group $S_T$, i.e.,
the group of permutations among $T$ objects, then clearly $|G| =
T!$. Now partition the $T$ objects into $N$ sets each with $Tp_1$ to
$Tp_N$ elements, respectively, and define the group $G_1$ to be the
subgroup of $S_T$ that permutes these objects {\em while respecting the
  partition}. Clearly, $|G_1| = (Tp_1)!(Tp_2)!\ldots (Tp_N)!$, which
is the denominator in (\ref{firstent}). Thus, $H(X_1) \approx
\frac{1}{T}\log\frac{|G|}{|G_1|}$, so that the entropy $h_{\{1\}}$ is a scaled
version of the group-characterizable $g_{\{1\}}$. This argument can be made
more precise and can be extended to $n$ random variables---see
\cite{Entropy_Group} for the details. We note, in passing, that this
construction often needs $T$ to be very large, so that the group $G$
and the subgroups $G_i$ are huge.

\subsection{The Ingleton Inequality}

As mentioned earlier, entropy satisfies submodularity and is connected to the notion of matroids.
A matroid is defined by a ground set $S$ and a rank function $r$ (written as $r(\{\cdot\})$ or $r_{\{\cdot\}}$) defined over subsets of $S$, that satisfies the following axioms:
\begin{enumerate}
\item $r$ is always a non-negative integer, and $r(U)\leq |U|$, $\forall U\subseteq S$.
\item $r$ is monotonic: if $U\subseteq W\subseteq S$, then $r(U)\leq r(W)$.
\item $r$ is submodular.
\end{enumerate}
Axioms 2) and 3), together with positiveness, are called the \emph{Shannon inequalities} for a set function. A matroid is defined in a way to extend the notion of a collection of vectors (in some vector space) along with the usual
definition of the rank. It is called {\em representable} if its ground set can be represented as a
collection of vectors (defined over some finite field) along with the usual rank function. Determining whether a matroid is representable or not is, in general, an open problem.

In 1971 Ingleton showed that for $n = 4$, the rank function $r$ of any representable matriod must satisfy the
inequality \cite{Ingleton}
\[ r_{12}+r_{13}+r_{14}+r_{23}+r_{24} \geq
r_1+r_2+r_{34}+r_{123}+r_{124} \]
(where for simplicity we write $r_{ij}$ and $r_{ijk}$ for
$r_{\{i,j\}}$ and $r_{\{i,j,k\}}$, respectively). In fact, these \emph{Ingleton inequalities}, together with the Shannon inequalities and their combinations, are the only inequalities the rank function of a representable matroid needs to satisfy (which are called linear rank inequalities) when $n=4$ (see \cite{Hammer-Ineq}). Furthermore, \cite{Hammer-Ineq} shows that the rank function of any representable matroid is necessarily an entropy vector, but not every linear rank inequality is respected by a general entropy vector. For example, there are entropy vectors that violate the Ingleton inequality (e.g. \cite{Hammer-Ineq,Matus-exVioIngleton}), so that entropy is generally not a representable matroid. Using non-representable matroids, \cite{DFZ-insuff} constructs
network coding problems that cannot be solved by linear network codes
(since linear network codes are, by definition, representable).

When $n\geq5$, there are many more linear rank inequalities besides the Shannon ones. But since the focus of this paper is the simplest case $n=4$ with only one such inequality, we refer the interested readers to the works of Kinser\cite{Kinser-NewIneq}, Dougherty \textit{et al}.\cite{DFZ-NewIneq5up,Dougherty-NewIneq6,DFZ-NewIneq8} and Chan \textit{et al}.\cite{Chan-TruncLinearIneq} for recent development in this area.

From this point on we shall only study the Ingleton inequality, with $n=4$. In the case of entropy vectors, it is written as
\begin{equation}\label{equation:Ingletonh}
h_{12}+h_{13}+h_{14}+h_{23}+h_{24} \geq h_1+h_2+h_{34}+h_{123}+h_{124}.
\end{equation}
The following sufficient condition is proposed in\cite{Hammer-Ineq} for four general random variables $X_1$, $X_2$, $X_3$ and $X_4$ to satisfy \eqref{equation:Ingletonh}:

\begin{lemma}\label{lemma:comm-info}
If there exists a random variable $Z$ that is a \emph{common information} for $X_{1}$ and $X_{2}$, i.e., $H(Z|X_{1}) = H(Z|X_{2}) = 0$ while $H(Z) = I(X_{1};X_{2})$, then \eqref{equation:Ingletonh} is satisfied.
\end{lemma}

In general common information does not exist for two arbitrary random variables, but when the entropies correspond to ranks of vector subspaces, their common information does exist\cite{Hammer-Ineq} and that is why representable matroids respect Ingleton. In Section~\ref{section:conditions} we will prove a similar condition for groups to satisfy Ingleton, by constructing a common information.

As $\overline{\Gamma_n^*} = \overline{\mathrm{cone}(\Upsilon_n)}$, we
know there must exist finite groups, and corresponding subgroups, such
that their induced group-characterizable vectors violate the Ingleton
inequality. In \cite{Chan-GroupCharEntropy} it was shown that abelian groups
cannot violate the Ingleton inequality, thereby giving an alternative
proof as to why linear network codes (and even the more general abelian group network codes (defined below)) cannot achieve capacity on arbitrary networks, as the underlying groups for linear network codes are abelian. So we need to focus on non-abelian groups and their connections to nonlinear codes. Note that in the context of finite groups, the Ingleton inequality can be rewritten as
\begin{equation}\label{equation:Ingletong}
|G_1||G_2||G_{34}||G_{123}||G_{124}| \geq |G_{12}||G_{13}||G_{14}||G_{23}||G_{24}|.
\end{equation}

\subsection{Group Network Codes}
\label{subsec:networkcodes}

A communication network is usually represented by a directed acyclic
graph $\G = (\V,\E)$, where the node set $\V$ and the edge set $\E$
model the communication nodes and channels respectively. Let
$\Sr \subset \V$ be the set of source nodes and
$\D(s)$ be the set of sink nodes demanding source $s$ for each $s \in \Sr$. For any
node $v$ and any edge $e$, $\In(v)$ and $\In(e)$ denote the sets of
incoming edges to $v$ and to the tail node of $e$, respectively.

A network code should include
\begin{enumerate}
\item the assignment of a symbol $Y_s$ from some alphabet $\Y_{s}$ for a source message at each source $s$;
\item the encoding of a symbol $Y_e$ in some alphabet $\Y_{e}$ at each edge $e$, from the symbols on $\In(e)$. Namely, $Y_e = \phi_e\left(Y_f:f\in\In(e)\right)$ for some deterministic encoding function $\phi_e$;
\item the decoding of the symbol $Y_s$ at each $u \in \D(s)$ for all sources $s$, i.e. $Y_s$ is uniquely determined from the symbols on $\In(u)$: $Y_s = \phi_{u,s}(Y_f:f\in\In(u))$ for some decoding function $\phi_{u,s}$.
\end{enumerate}
It is clear that at each edge $e$ the symbol $Y_{e}$ is a deterministic function of the source symbols $\{Y_s: s\in\Sr\}$, which is denoted by $\varphi_{e}$ and is called the \emph{global mapping} at $e$. Also the source random variables $\{Y_s: s\in\Sr\}$ are usually assumed to be independent and uniform on their respective alphabets.

For example, a linear network code is defined as follows: 1) for each $t\in\Sr\cup\E$, the alphabet $\Y_{t}$ is a vector space $F^{d_{t}}$ over a finite field $F$ with some finite dimension $d_{t}$; 2) all encoding/decoding functions are linear: if $t$ is an edge or a sink node, then the encoding/decoding function $\phi_t$ at $t$ can be written as
\[ \phi_t\left(Y_f:f\in\In(t)\right) = \sum_{f\in\In(t)}M_{t,f}Y_{f} \]
for some matrices $M_{t,f}\in F^{d_{t}}\times F^{d_{f}}$. Thus the global mappings at the edges are also linear.

Group network codes were first proposed by Chan in
\cite{Opt_GrpCodes,CapReg_LinAb}, considering the fact that finite
groups can generate the whole entropy region, and noting that linear
network codes are included as a special case. Suppose $G$ is a finite
group, $\{G_e:e\in\E\}$ and $\{G_s:s\in\Sr\}$ are some of its
subgroups. One can construct a network code with $\Y_{t} = G/G_{t}$ for each $t\in\Sr\cup\E$ if the following requirements are met:
\begin{enumerate}[\IEEEsetlabelwidth{\textbf(R3)}]
    \item[\textbf(R1)] \emph{Source independence}: $H(Y_{\Sr}) = \sum_{s\in\Sr}H(Y_{s})$, which means that the cardinalities of $G/G_{\Sr}$ and $\prod_{s\in\Sr}\Y_{s}$ (the Cartesian product of the source alphabets) are equal, where $G_{\Sr} \triangleq \bigcap_{s\in\Sr}G_s$. This is equivalent to $\prod_{s\in\Sr}|G_s| = |G|^{|\Sr|-1}|G_{\Sr}|$.
    \item[\textbf(R2)] \emph{Encoding}: $\forall e\in\E$, $\bigcap_{f\in\In(e)}G_f \leq G_e$.
    \item[\textbf(R3)] \emph{Decoding}: $\forall s\in\Sr$, $\bigcap_{f\in\In(u)}G_f \leq G_s$ for each $u \in \D(s)$.
\end{enumerate}
Moreover, the entropy vector for the network symbols $\{Y_{t}: t\in\Sr\cup\E\}$ is characterizable by the group $G$ and its subgroups $\{G_{t} : t\in\Sr\cup\E\}$.

In Section \ref{section:gnc} we discuss some important considerations necessary when constructing group network codes, such as how to ensure the source independence requirement (R1) above. Appendix~\ref{section:gnc-detail} provides further detailed discussions of group network codes, including the encoding/decoding construction, the induced entropy vectors, as well as the inclusion of linear network codes. We remark that any group network code constructed from an Ingleton-violating group induces entropy vectors that violate the Ingleton inequality, so potentially they are more powerful than linear network codes. As we shall see further below, this is certainly true of the group network codes that can be obtained from the Ingleton-violating families in this paper---the $PGL$ and $GL$ groups, especially since both contain linear network codes as subgroups.

\subsection{Discussion}

Since we know of distributions whose entropy vector violates the
Ingleton inequality, we can, in principle, construct finite groups
whose group-characterizable vectors violate Ingleton. Two such
distributions are Example 1 in \cite{Matus-exVioIngleton}, where the
underlying distribution is uniform over 7 points and the random
variables correspond to different partitions of these seven points,
and the example on page 1445 of \cite{ZY98}, constructed from finite
projective geometry and where the underlying distribution is uniform
over $12\times 13 = 156$ points. Unfortunately, constructing groups
and subgroups for these distributions using the recipe of Section~\ref{subsec:group} results in $T = 29\times 7 = 203$ and $T = 23\times
156 = 3588$, which results in groups of size $203!$ and $3588!$, which are too
huge to give us any insight whatsoever.

These discussions lead us to the following questions.

\begin{enumerate}
\item Could the connection between entropy and groups be a red
  herring? Are the interesting groups too large to give any insight
  into the problem (e.g., the conditions for the Ingleton inequality
  to be violated)?
\item What is the smallest group with subgroups that violates the
  Ingleton inequality? Does it have any special structure?
\item Can one construct network codes from such Ingleton-violating
  groups?
\end{enumerate}

In this paper we address the first two questions, and try to lay some groundwork for answering the third. We
identify the smallest group that violates the Ingleton inequality---it is the symmetric group $S_5$, with 120 elements. Through a thorough investigation of the structure of its subgroups we conclude that it belongs to the family of groups $PGL(2,q)$, with $q\geq5$ being a power of a prime. ($PGL(2,5)$ is isomorphic to $S_5$.) We therefore believe that the connection to groups is not a red herring and that there may be some benefit to it.

Having a ``recipe'' for Ingleton violations, we generalize the family in two directions. Since $PGL(2,q)$ is the quotient group of $GL(2,q)$ modulo the scalar matrices, we explore the subgroups in $GL(2,q)$ and discover several new families of Ingleton violations. On the other hand, the projective general linear group $PGL(n,q)$ can be viewed as the image of a permutation representation induced by the action of the general linear group $GL(n,q)$ on its projective geometry. It turns out that in this context, the Ingleton-violating subgroups of the family $PGL(2,q)$ all have nice interpretations: each of them is the stabilizer for a set of points in the projective geometry. Based on this viewpoint we obtain a few new families of Ingleton violations, including the groups $PGL(n,q)$, $GL(n,q)$, and further give an abstract construction in general 2-transitive groups.

As mentioned in Section~\ref{subsec:networkcodes} we can use these Ingleton-violating groups to contruct network codes, which have the potential of performing better than linear network codes. However, designing the subgroups for a desirable code is not a trivial task, for example we need to satisfy (R1)--(R3) of the previous subsection. We study the source independence requirement for the subgroups, and give some directions on how to construct them.

%

Before we proceed to present the details of our results, we would like to mention some recent developments after our first paper\cite{vioIngleton} on this subject. In \cite{Boston-Nan-NewVioIngelton}, Boston and Nan mainly study symmetric groups and discover many new Ingleton violations in the related groups. Furthermore, using the same group action theoretic approach as above (specifically, designing the subgroups to be the stabilizers of certain sets of points\footnote{In fact, in the original paper of Chan and Yeung\cite{Entropy_Group} the same type of subgroups are also used in to show that every entropy vector can be approximated by a scaled group-characterizable vector.}), they systematically construct subgroups of a symmetric group to violate Ingleton. Many of these new violations are quite effective (see Section~\ref{subsec:efc-Ingleton} for more discussions). Also, while all the Ingleton-violating groups in this paper are non-solvable, \cite{Boston-Nan-NewVioIngelton} shows that there do exist solvable groups that violate Ingleton. Paajanen\cite{Paajanen-NilGrpIngleton}, however, focuses on the subclasses $p$-groups and nilpotent groups and shows that with some technical conditions they satisfy Ingleton. Recall that we have the hierarchy of finite groups
\[ \textbf{Cyclic groups}\ \subset\ \textbf{Abelian groups}\ \subset\ \textbf{Nilpotent groups}\ \subset\ \textbf{Solvable groups}\ \subset\ \textbf{All groups} \]
and that every nilpotent group is a direct product of groups, each of which is a $p$-group for a distinct $p$. Now roughly speaking, we have a guideline for what class of groups one needs to explore to violate Ingleton. For linear rank inequalities in higher dimensions, \cite{Markin-Thomas-Oggier-vioIneq5} considers the case $n=5$ and obtains some results on the groups that satisfy/violate some of these inequalities.
\\

The rest of the paper is organized as follows. Section~\ref{section:notation} provides necessary notations. Section~\ref{section:conditions} describes the computer search process of Ingleton-violating groups and proves several conditions that help pruning the search. Having found the smallest violation instance, Section~\ref{section:presentation} studies its structure using group presentations. Section~\ref{section:pgl} then generalizes the instance to an Ingleton-violating family in $PGL(2,p)$, and then to $PGL(2,q)$, through explicitly constructing the subgroups in the format of matrices. Furthermore, the preimage group $GL(2,q)$ is also examined and 15 new families of Ingleton violating subgroups are identified, in Section~\ref{section:gl}. The original family has a deep relation to the theory of group actions, as disclosed in the more abstract Section~\ref{section:Asch}, which leads to several new violation constructions in this framework. Section~\ref{section:gnc}, however, considers using these groups to build group network codes and obtains some results in that regard. Section~\ref{section:conclusion} concludes this paper.

\section{Notation}
\label{section:notation}

We use the following abstract algebra notations. These are fairly standard (and follow Dummitt and Foote \cite{absAlg}). The interested reader, who may not be familiar with all the concepts below, may refer to \cite{absAlg}, or any other standard abstract algebra textbook.
\begin{description}[\IEEEsetlabelwidth{$\langle g_1,\dots,g_m \rangle$, $\langle S \rangle$}\IEEEusemathlabelsep]
\item[$|G|$] the order (cardinality) of the set/group $G$.
\item[$|g|$] the order of element $g$ = smallest positive integer $m$ s.t. $g^m = 1$.
\item[$x^g$] the conjugate of element $x$ by element $g$ in $G$: $x^g=g^{-1}xg$. (No confusion with the powers of $x$ as $g$ is an element of $G$.)
\item[$X^g$] the conjugate of subset $X$ by element $g$ in $G$: $X^g=\{x^g:x\in X\}$.
\item[$G \cong H$] the group $G$ is isomorphic to the group $H$.
\item[$H \leq G$] $H$ is a subgroup of $G$.
\item[$H \trianglelefteq G$] $H$ is a normal subgroup of $G$, i.e., $H^{g} = H$, $\forall g\in G$.
\item[$gH$] the left coset of the subgroup $H$ in $G$ with representative $g$.
\item[$G/H$] the set of all left cosets of subgroup $H$ in $G$. When $H \trianglelefteq G$, $G/H$ is a group, called the factor group or quotient group.
\item[$HK$ or $H\cdot K$] the ``set product'' of $H,K\subseteq G$: $HK = \{hk:h\in H, k\in K\}$.
\item[$H\times K$] the direct product of groups $H$ and $K$. The elements are the pairs $\{(h,k):h\in H, k\in K\}$ and $(h_{1},k_{1})(h_{2},k_{2}) = (h_{1}h_{2},k_{1}k_{2})$. 
\item[$G^n$] the direct product of $n$ copies of the group $G$.
\item[$H\rtimes K$] the semidirect product of groups $H$ and $K$. The elements are the same as $H\times K$, but $(h_{1},k_{1})\cdot(h_{2},k_{2}) = \lt(h_{1}\cdot\varphi(k_{1})(h_{2}),k_{1}k_{2}\rt)$ where $\varphi$ is a homomorphism of $K$ into the automorphism group of $H$.
\item[$\langle g_1,\dots,g_m \rangle$, $\langle S \rangle$] the group generated by the elements $g_1,\dots,g_m$, and by the set $S$.
\item[$G=\langle S\mid R \rangle$] $\langle S\mid R \rangle$ is a presentation of $G$. $S$ is a set of generators of $G$, while $R$ is a set of relations $G$ should satisfy. See Definition~\ref{def:presentation}.
\item[$1$] the natural number ``1'', identity element of a group, or the trivial group. The meaning should be clear in different contexts with no confusion.
\item[$\z_n$] the integers modulo $n$ $\cong$ the cyclic group of order $n$.
\item[$S_n$] the symmetric group of degree $n$, consisting of all permutations on $n$ points.
\item[$D_{2n}$] the dihedral group of order $2n$.
\item[$\f_q$] the finite field of $q$ elements.
\item[$\z_n^\times$, $\f_q^\times$] the multiplicative group of units of $\z_n$, and of $\f_q$, both consisting of all invertible elements under multiplication. $\f_q^\times$ = all nonzero elements of $\f_q$.
\item[$GL(n,q)$] the general linear group of degree $n$ on $\f_{q}$, which consists of all invertible $n \times n$ matrices with entries from $\f_q$. The identity element for $GL(n,q)$ is usually denoted by $I$ = identity matrix. $|GL(n,q)| = (q^n-1)(q^n-q)(q^n-q^2)\cdots(q^n-q^{n-1})$.
\item[$V_q$] the center of $GL(n,q)$, consisting of the collection of matrices that commute with every matrix in $GL(n,q)$ = all nonzero scalar matrices = $\{\alpha I:\alpha \in\f_q^\times\}$.
\item[$PGL(n,q)$] the projective general linear group = $GL(n,q)/V_q$. $|PGL(n,q)| = |GL(n,q)|/|V_q| = |GL(n,q)|/(q-1)$. In other words, it is the group of all invertible $n\times n$ matrices with entries from $\f_q$, where matrices that are proportional are considered the same group element.
\item[$SL(2,q)$] the special linear group = all matrices in $GL(2,q)$ with determinant 1. $|SL(2,q)|=|PGL(2,q)|$.
\item[$PSL(2,q)$] the projective special linear group = $SL(2,q)/\langle-I\rangle$. $|PSL(2,q)|=|SL(2,q)|/2$.
\end{description}

To simplify expressions in later sections, let $\K_n \triangleq \{0,1,\dots,n-2\}$ for integers $n \geq 2$.

\section{Ingleton Violation: Computer Search and Some Conditions}
\label{section:conditions}

Since the Ingleton inequality \eqref{equation:Ingletong} involves four subgroups of a finite group and their various intersections, designing a small admissible structure is very difficult without an existing example. So we use
computer programs to search for a small instance. Specifically, we use the GAP system\cite{GAP4} to search its ``Small Group'' library, which contains all finite groups of order less than
or equal to 2000, except those of 1024. We pick a group in this library (starting from the smallest, of course), find
all its subgroups, then test the Ingleton inequality for all
4-combinations of these subgroups. This is a tremendous task, as
there are already more than 1000 groups (up to isomorphism) of order less than or equal
to 100, each of which might have hundreds of subgroups, or even more.

It is therefore extremely critical to prune our search. In fact, we
used the following conditions to exclude groups or subgroups in the search,
each of which guarantees that Ingleton is satisfied.

\begin{condition}\label{condition:abelian}
$G$ is abelian.\cite{Chan-GroupCharEntropy}
\end{condition}
\begin{condition}\label{condition:normal}
$G_i \trianglelefteq G$, $\forall i$.\cite{Ingleton_Hom}
\end{condition}
\begin{condition}\label{condition:g1g2}
$G_1G_2 = G_2G_1$, or equivalently $G_1G_2 \leq G$.
\end{condition}
\begin{condition}\label{condition:gor1}
$G_i = 1$ or $G$, for some $i$.
\end{condition}
\begin{condition}\label{condition:distinct}
$G_i = G_j$ for some distinct $i$ and $j$.
\end{condition}
\begin{condition}\label{condition:g1intxg2}
$G_{12} = 1$.
\end{condition}
\begin{condition}\label{condition:subgroup}
$G_i \leq G_j$ for some distinct $i$ and $j$.
\end{condition}

Note that Condition~\ref{condition:normal} subsumes Condition~\ref{condition:abelian}, while Condition~\ref{condition:g1g2}
subsumes Condition~\ref{condition:normal}. Also Conditions~\ref{condition:gor1} and \ref{condition:distinct} are contained in Condition~\ref{condition:subgroup}. Nevertheless, we still list these more restrictive conditions as they are easier to check using computer programs. In addition, Conditions~\ref{condition:abelian}, \ref{condition:g1g2} and
\ref{condition:g1intxg2} are crucial in our program, as they appear in the outer loops and can save a large
amount of search work.

For the above reasons we only list the proofs for Conditions~\ref{condition:g1g2}, \ref{condition:g1intxg2} and \ref{condition:subgroup} below:
\begin{IEEEproof}[Proof~\ref{condition:g1g2}]
Construct random variables $X_i$'s from uniformly
distributed $\Lambda$ on $G$ as in Section~\ref{subsec:group}. As
$G_{1;2} \triangleq G_1G_2 \leq G$, we can similarly construct
random variable $Z = \Lambda G_{1;2}$. In fact $Z$ is a common information for $X_{1}$ and $X_{2}$: since $|G_{1;2}|=
|G_1||G_2|/|G_{12}|$,
\[ H(Z) = H(X_{1})+H(X_{2})-H(X_{1},X_{2}) = I(X_{1};X_{2}). \]
Also $H(Z|X_1) = H(Z|X_2) = 0$ as $G_1,G_2\leq G_{1;2}$. Thus Ingleton is satisfied by Lemma~\ref{lemma:comm-info}.
\end{IEEEproof}

In the proof above we used the group-entropy correspondence in Section~\ref{subsec:group} to translate the problem to the entropy domain. Henceforth, in order to show that a group satisfies Ingleton, we shall either prove \eqref{equation:Ingletong} directly, or equivalently prove \eqref{equation:Ingletonh} using this correspondence. Furthermore, observe that the Ingleton inequality has symmetries between subscripts 1
and 2 and between 3 and 4, i.e. if we interchange the subscripts 1
and 2, or 3 and 4, the inequality stays the same. Thus if we prove
conditions for some $i \in \{1,2\}$ and $j \in \{3,4\}$, we
automatically get conditions for all $(i,j) \in
\{1,2\}\times\{3,4\}$. So without loss of generality, we will
just prove conditions for the case $(i,j)=(1,3)$ when these symmetries apply.

\begin{IEEEproof}[Proof~\ref{condition:g1intxg2}]
Realize that \eqref{equation:Ingletonh} can be rewritten as
\begin{equation}\label{equation:Ingletonsubmod}
\delta_{13,14}+\delta_{23,24}+\delta_{134,234}-\delta_{123,124} \geq 0,
\end{equation}
where for $\emptyset \neq
\alpha,\beta \subseteq \n$,
\[ \delta_{\alpha,\beta} \triangleq h_\alpha + h_\beta - h_{\alpha\cap\beta} - h_{\alpha\cup\beta}. \]
For example, $\delta_{134,234} = h_{134} + h_{234}
- h_{34} - h_{1234}$. By submodularity of entropies, all $\delta_{\alpha,\beta} \geq
0$. If $G_{12} = 1$, then $\delta_{123,124} = 0$ and
(\ref{equation:Ingletonsubmod}) holds.
\end{IEEEproof}
\begin{IEEEproof}[Proof~\ref{condition:subgroup}]
$(i,j) = (1,2)$ implies Condition~\ref{condition:g1g2}. $(i,j) = (1,3)$ implies
$\delta_{123,124} = 0$ in (\ref{equation:Ingletonsubmod}). $(i,j) = (3,1)$
implies $\delta_{123,234} = 0$ and so $\delta_{123,234} \leq \delta_{12,24}$, which further transforms to
$\delta_{123,124} \leq \delta_{23,24}$, thus (\ref{equation:Ingletonsubmod}) holds. For $(i,j) = (3,4)$, \eqref{equation:Ingletong} becomes 
\[ |G_1||G_2||G_3||G_{123}||G_{124}| \geq |G_{12}||G_{13}||G_{14}||G_{23}||G_{24}|, \]
which is true as $G_2\geq G_{24}$ and by submodularity, $|G_1||G_{124}|\geq |G_{12}||G_{14}|$ and $|G_3||G_{123}|\geq |G_{13}||G_{23}|$.
\end{IEEEproof}

\section{The Smallest Violation Instance and the Group Presentation}
\label{section:presentation}

Using GAP we found the smallest group that violates Ingleton is $G =
S_5$, which has 60 sets of violating subgroups up to subscript symmetries. Further examination shows that these 60 sets of subgroups are in fact all conjugates of each other, thus are virtually the same in terms of group
structure. We list below
some information from GAP about one representative:\footnote{The permutations are written in cycle notation, e.g. $(1,2)(3,4,5)$ is the
permutation on the set $\{1,2,3,4,5\}$ that makes the following mapping: $1\mapsto 2,\ 2\mapsto 1,\ 3\mapsto 4,\ 4\mapsto 5,\ 5\mapsto 3$. Also GAP's convention for permutations is used throughout this paper, i.e. permutations are applied to an element from the right.}
\[ \begin{array}{lll}
G_1 = \big\langle (3,4,5), (1,2)(4,5) \big\rangle & \cong S_3 \cong D_6 & |G_1| = 6\\
G_2 = \big\langle (1,2,3,4,5), (1,4,3,5) \big\rangle & \cong \z_5 \rtimes
\z_4 & |G_2| = 20\\
G_3 = \big\langle (2,3), (1,3,4,2) \big\rangle & \cong D_8 & |G_3| = 8\\
G_4 = \big\langle (2,4), (1,2,5,4) \big\rangle & \cong D_8 & |G_4| = 8\\
G_{12} = \big\langle (1,2)(3,5) \big\rangle & \cong \z_2 & |G_{12}| = 2\\
G_{13} = \big\langle (1,2)(3,4) \big\rangle & \cong \z_2 & |G_{13}| = 2\\
G_{14} = \big\langle (1,2)(4,5) \big\rangle & \cong \z_2 & |G_{14}| = 2\\
G_{23} = \big\langle (1,3,4,2) \big\rangle & \cong \z_4 & |G_{23}| = 4\\
G_{24} = \big\langle (1,2,5,4) \big\rangle & \cong \z_4 & |G_{24}| = 4\\
G_{34} = 1 & & |G_{34}| = 1\\
G_{123} = 1 & & |G_{123}| = 1\\
G_{124} = 1 & & |G_{124}| = 1.\\
\end{array} \]
Simple calculation shows that
\[ |G_1||G_2||G_{34}||G_{123}||G_{124}| = 120 < 128 =
|G_{12}||G_{13}||G_{14}||G_{23}||G_{24}|, \]
so Ingleton is violated.
Also we can check that $G_1$--$G_4$ indeed generate $G$.

To illustrate the structure of these subgroups, we use the group
cycle graph. See Fig.~\ref{figure:flowerstructure}, where the
dash-dotted lines denote the pairwise intersections of subgroups
excluding identity.
\begin{figure}[!t]
\centering
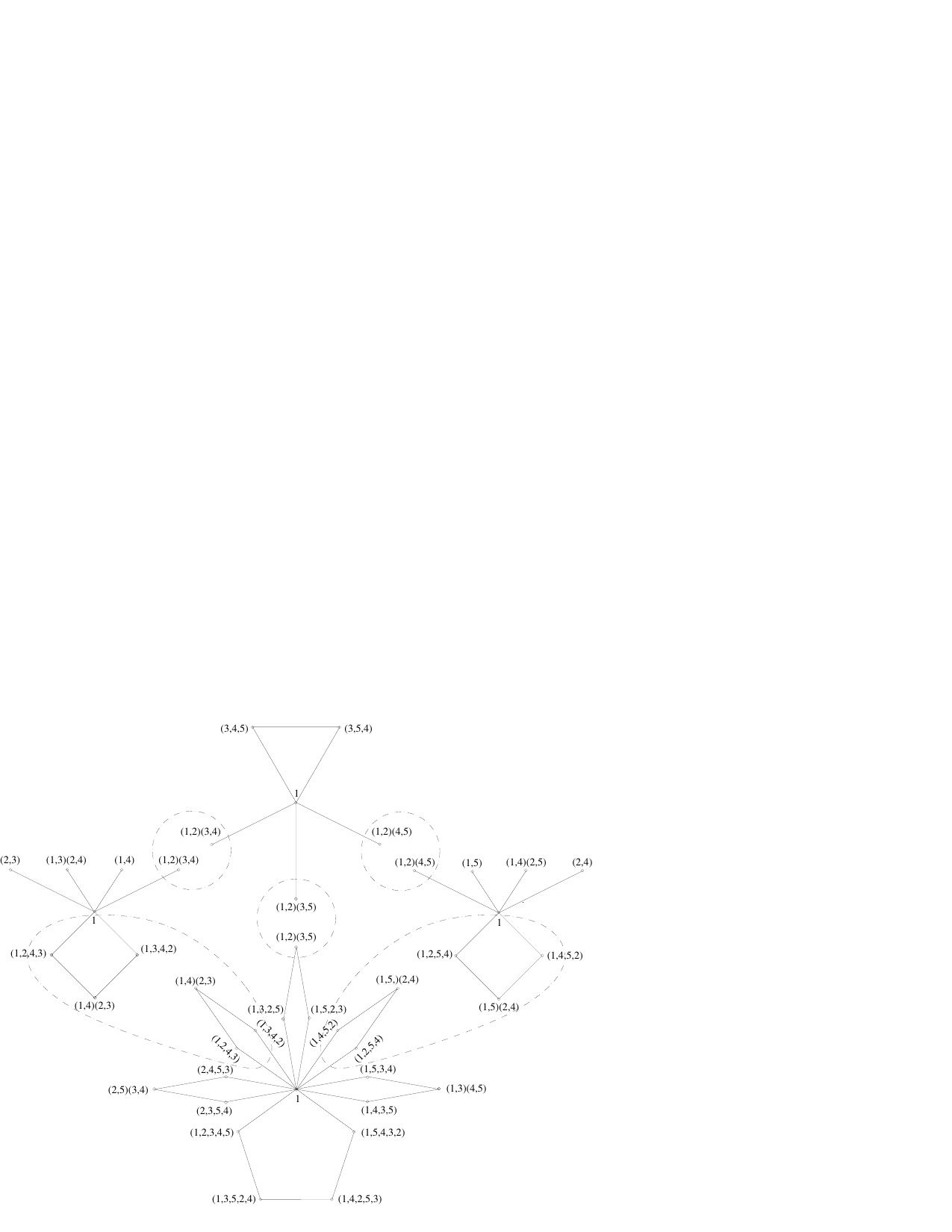
\caption{The cycle graph of the Ingleton violating subgroups of $S_5$}
\label{figure:flowerstructure}
\end{figure}
From the cycle graph we can obtain more structural information which
GAP does not show us directly. First, not only is $G_2$ a semidirect
product of two cyclic groups $\big\langle (1,2,3,4,5)\big\rangle \cong \z_5$
and $\big\langle (1,4,3,5)\big\rangle \cong \z_4$, but also $\big(G_2\setminus\big\langle (1,2,3,4,5)\big\rangle\big)
\bigcup\{1\}$ is the union of subgroups which are all isomorphic to (in fact, conjugate to)
$\big\langle (1,4,3,5)\big\rangle$ and have trivial pairwise intersections.
We say $G_2$ has a ``flower'' structure in this case. Second,
$G_4$ is the conjugate of $G_3$ by $(3,4,5)$. In
particular, there is a conjugacy relation between the order-4 generators of $G_{3}$ and $G_{4}$:
$(1,3,4,2)^{(3,4,5)} = (1,4,5,2) = (1,2,5,4)^{-1}$.

In order to generalize these subgroups to a family of violations,
we seek a parameterized group presentation for $G$ that retains the above structures. Although these group presentations are abstract, each of them can be input to GAP to yield an isomorphic concrete group, and Ingleton inequality can be checked against the corresponding subgroups. Observing that $|G_{23}|$ and
$|G_{24}|$ (both equal to 4) contribute most to the right-hand side ($RHS$)
of (\ref{equation:Ingletong}), we may try to let the ``petals'' of
$G_2$ (conjugates of $\big\langle (1,4,3,5)\big\rangle$) grow while keeping
other structures fixed.\footnote{This approach is a little conservative, but it is
the only successful extension according to our GAP trials. For
example, one may try to expand $G_1$ at the same time, but the
structures of $G_3$ and $G_4$ usually collapse.} In the rest of this section, we start from a presentation of $G_{2}$ and then extend it to the whole group $G$.

Let us first define a presentation of a group. For a precise definition one needs to introduce the concept of free groups, which we will skip. The interested readers may consult abstract algebra textbooks, e.g. \cite{absAlg,presGroup}. Here we only give an informal but useful definition.

\begin{definition}[Group Presentation]
\label{def:presentation}
A set $S$ of \emph{generators} of a group $G$ is a subset of $G$, such that every group element can be written as a finite product of elements of $S$ and their inverses. An equation satisfied in $G$ involving only $S\cup\{1\}$ is called a \emph{relation} in $G$ among $S$. Let $R$ be a set of such relations. We say $G$ has a \emph{presentation}
\[ \langle\, S \mid R\, \rangle \]
if $G$ is the largest (``freest'') group generated by $S$ subject only to the relations $R$. (Formally, the group $G$ is said to have the above presentation if it is isomorphic to the quotient of a free group $F$ on $S$ by the normal subgroup of $F$ generated by the relations $R$.)
\end{definition}

For example, consider a presentation $\langle x\mid x^{n}=1 \rangle$. Any group generated by $x$ contains only the powers of $x$, but by the relation $x^{n}=1$ the order of such a group cannot exceed $n$. Among these groups the cyclic group $\z_{n}$ has the maximum order, hence has the above group presentation.

\subsection{Presentation of $G_{2}$}
\label{subsec:presentation-g2}

Let $G_2$ be generated by two elements $a$ and $b$, with a normal subgroup $N = \langle
a\rangle \cong\z_n$ and another subgroup $H = \langle b\rangle
\cong\z_m$, for some integers $m,n$. This gives
us a presentation
\begin{equation}\label{equation:presentation_sdprod}
G_2 = \big\langle a,b\,\big|\,a^n=b^m=1,\, a^b=a^s\big\rangle
\end{equation}
for some $0<s<n$. In order to violate Ingleton as much as possible,
we may wish for $n$ to be small while $m$ is large. However, the flower
structure of $G_2$ may limit the choices of $n$ and $m$. First of
all, for this presentation to be a semidirect product, we need $s^m\equiv 1\pmod n$
(see \cite[Sec 5.4]{presGroup}), i.e., 
\begin{equation}\label{equation:s-order-m}
s\in \z_n^\times,\quad |s|\,\big|\,m,
\end{equation}
where $|s|$ denotes the order of $s$ in the multiplicative group $\z_n^\times$. As a consequence, $|G_2| = mn$, $H\bigcap N = 1$, and by the relations in \eqref{equation:presentation_sdprod} we also have
\begin{equation}\label{equation:ai-bk-ai-sk}
(a^i)^{b^{k}}=a^{is^k},\quad \forall i,k \in \z.
\end{equation}
Moreover, we need $(G_2\setminus N) \bigcup\{1\}$
to be the union of subgroups which are all isomorphic to $H$ with
trivial pairwise intersections.

One possible way to achieve this is to restrict $H^{g_1}\bigcap H^{g_2}
= 1,\ \forall g_1\neq g_2\in N$, as in our original example.
This is equivalent to $H^g\bigcap H = 1$, $\forall g\in
N\setminus\{1\}$. If this is the case, then there will be $|N|=n$
``petals'' of size $m$ in $G_2$, and the total number of nonidentity
elements will equal $n(m-1) = nm - n = |G_2\setminus N|$, and then indeed
the flower structure would be achieved.
Pick two nonidentity elements $h_1 = b^l\in H$, $h_2 =
(b^k)^{a^i}\in H^{a^i}$ for some $0<k,l<m$ and some $0<i<n$. Then
\[ h_1 = h_2\ \Leftrightarrow\  a^{-i}b^ka^i=b^l\  \Leftrightarrow\
a^{-i}(a^i)^{b^{-k}}b^k=b^l\  \Leftrightarrow\
a^{-i}a^{is^{-k}}=b^{l-k}\  \Leftrightarrow\  a^{(s^{-k}-1)i}=b^{l-k}. \]
In the last equation, $LHS\in N$ and $RHS\in H$. But $H\bigcap N = 1$ forces that $a^{(s^{-k}-1)i}=b^{l-k}=1$,
i.e. $l=k$ and $n\,\big|\,(s^{-k}-1)i$.

To guarantee that $H^{a^i}\bigcap H = 1$
, we must have $m\leq |s|$. Otherwise if we let
$0<k=|s|<m$, then $s^{-k} \equiv 1 \pmod n$ and so $n\,\big|\,(s^{-k}-1)i$ is satisfied. This means that by choosing $k=l= |s|$, we have found a nonidentity element $h_2 = (b^k)^{a^i} = b^l = h_1$ in $H^{a^i}\bigcap H$. Therefore $m\leq |s|$ and as $|s|\,\big|\,m$ by \eqref{equation:s-order-m}, $m=|s|$. In particular, $m\leq |\z_n^\times| \leq n-1$.

For $m$ to be as large as possible, $s$ should be a primitive root
modulo $n$, which makes $m=|\z_n^\times|$. Pick $n=p$ for some prime $p$, then $m = |\z_p^\times| = p-1$ achieves the upper bound $m \leq n-1$. Also in this case, if $0<k<m=|s|$ and $0<i<n=p$, then
$n\,\big|\,(s^{-k}-1)i$ requires $p\,\big|\,i$ or $p\,\big|\,(s^{-k}-1)$. Since $p>i$, the latter must be true, which implies that $|s|\,\big|\,k$. But this is a contradiction since $0<k<|s|$. So indeed
we have $H^{g}\bigcap H = 1$, $\forall g\in N$, and the flower structure
is realized. Furthermore, to make $H$ nontrivial, we need $p>2$. Thus with such a choice of parameters, the presentation of $G_2$ becomes
\begin{equation}\label{equation:presentationg2}
G_2 = \big\langle a,b\,\big|\,a^p=b^{p-1}=1,\, a^b=a^s\big\rangle,
\end{equation}
where $p>2$ is a prime and $s$ is a primitive root modulo $p$.

\subsection{Presentation of $G$}

The next step is to extend the presentation \eqref{equation:presentationg2} to the whole group $G$
generated by $G_1$--$G_4$, with the structure in Fig.~\ref{figure:flowerstructure}. Consider the dihedral groups $G_3$ and
$G_4$. The subgroups of rotations are just $H^{a_3}$ and $H^{a_4}$,
respectively, for some $a_3=a^{k_3},a_4=a^{k_4}\in N$. Also $G_3$
and $G_4$ each shares one element of reflection with the dihedral
group $G_1$, while the remaining reflection of $G_1$ is just
$(b^\fp)^{a_1}$ in $G_2$, for some $a_1=a^{k_1}\in N$.
Thus if we can determine the generator of the subgroup of rotations
of $G_1$, then all elements of $G_1$--$G_4$ are determined. In
other words, if we introduce an element $c$ as the generator of
rotations of $G_1$, then all elements from $G_1$--$G_4$ can be
express as products of $a,b,c$ and their inverses. Define
\begin{equation}\label{equation:b1b3b4}
b_1 = (b^\fp)^{a^{k_1}},\quad b_3 = b^{a^{k_3}},\quad
b_4 = b^{a^{k_4}}
\end{equation}
for some integers $k_1,k_3,k_4$. If in Fig.~\ref{figure:flowerstructure} we let $a,b,c,b_1,b_3,b_4$ correspond with the elements specified in Table~\ref{table:grp-elm-corrsp},
\begin{table}[!t]
\caption{Correspondence of group elements}
\label{table:grp-elm-corrsp}
\centering
\begin{tabular}{c|c|c|c|c|c}
\hline
$a$ & $b$ & $c$ & $b_{1}$ & $b_{3}$ & $b_{4}$\\
\hline
$(1,2,3,4,5)$ & $(1,4,3,5)$ & $(3,4,5)$ & $(1,2)(3,5)$ & $(1,3,4,2)$ & $(1,2,5,4)$\\
\hline
\end{tabular}
\end{table}
then the subgroups and the whole group in our presentation should be
\begin{equation}\label{equation:presentationsubgrps}
G_1 = \big\langle c,b_1\big\rangle,\quad G_2 = \big\langle a,b\big\rangle,\quad G_3
= \big\langle b_1c^2,b_3\big\rangle,\quad G_4 = \big\langle
b_1c,b_4\big\rangle,\quad G = \big\langle a,b,c\big\rangle.
\end{equation}
As $G_1\cong D_6$, we should have the relation $c^3=(cb_1)^2=1$. Furthermore, for
$G_3$ and $G_4$ to be dihedral groups, we need
$(b_3 \cdot b_1c^2)^2=(b_4 \cdot b_1c)^2=1$.

At this point we can try to plug in the presentation with these relations to GAP to find a concrete group. But still there are too many parameters to choose, especially when $p$ is large, the choices of $k_{1},k_{3},k_{4}$ are numerous. Also for a fixed $p$ not many such combinations yield successful Ingleton violations, according to our GAP trials. Therefore we need to ultilize more structural information from Fig.~\ref{figure:flowerstructure} to obtain more restrictions on $k_{1},k_{3}$ and $k_{4}$.

Observe that in the original violation, $G_4$ is the conjugate of $G_3$ by $(3,4,5)$,
and $(1,3,4,2)^{(3,4,5)} = (1,2,5,4)^{-1}$. In our presentation this translates to
$b_3^c = b_4^{-1}$, according to Table~\ref{table:grp-elm-corrsp}. With this new relation, we claim that $(b_3 \cdot b_1c^2)^2=(b_4 \cdot b_1c)^2=1$ is satisfied if and only if
$k_3 - k_1 \equiv k_1 - k_4 \pmod p$. In fact, as $|b_1| = 2$,
$c^3=(cb_1)^2=1$, we have $cb_1 = b_1c^2$ and $b_1c = c^2b_1$. Using these relations we can establish
the following equalities:
\[ (b_3 \cdot b_1c^2)^2 = b_3b_1c^{-1}b_3cb_1 = b_3b_1b_4^{-1}b_1, \]
\[ (b_4 \cdot b_1c)^2 = b_4b_1cb_4c^{-1}b_1 = b_4b_1b_3^{-1}b_1 = \big((b_3b_1b_4^{-1}b_1)^{-1}\big)^{b_1}. \]
So $(b_3 \cdot b_1c^2)^2=1$ if and only if $(b_4 \cdot b_1c)^2=1$. Using \eqref{equation:ai-bk-ai-sk} and the fact that $b^{\fp} = (b^{\fp})^{-1}$ and plugging (\ref{equation:b1b3b4}) in, we have
\begin{align*}
b_3b_1b_4^{-1}b_1 &= b^{a^{k_{3}}}(b^{\fp})^{a^{k_{1}}}(b^{-1})^{a^{k_{4}}}(b^{\fp})^{a^{k_{1}}} \\
& = a^{-k_{3}}ba^{k_{3}-k_{1}}b^{\fp}a^{k_{1}-k_{4}}b^{-1}a^{k_{4}-k_{1}}b^{\fp}a^{k_{1}} \\
& = a^{-k_{3}}\cdot ba^{k_{3}-k_{1}}b^{-1}\cdot b^{\fp}\cdot ba^{k_{1}-k_{4}}b^{-1}\cdot a^{k_{4}-k_{1}}b^{\fp}a^{k_{1}} \\
& = a^{-k_{3}}\cdot a^{(k_{3}-k_{1})s^{-1}}\cdot b^{\fp}\cdot a^{(k_{1}-k_{4})s^{-1}}\cdot a^{k_{4}-k_{1}}b^{\fp}a^{k_{1}} \\
& = a^{(k_{3}-k_{1})s^{-1}-k_{3}}\cdot (b^{\fp})^{-1}a^{(k_{1}-k_{4})(s^{-1}-1)}b^{\fp}\cdot a^{k_{1}} \\
& = a^{(k_{3}-k_{1})s^{-1}-k_{3}}\cdot a^{(k_{1}-k_{4})(s^{-1}-1)s^{(p-1)/2}}\cdot a^{k_{1}} \\
& = a^{[(k_3-k_1)+(k_1-k_4)s^{(p-1)/2}](s^{-1}-1)}.
\end{align*}
Since $s$ is a primitive root modulo $p$, $|s^{(p-1)/2}|=2$. As
$\z_p^\times$ is cyclic of an even order $p-1$, it is clear that
there is a unique element of order $2$. But $-1$ has order 2 in $\z_p^\times$, so $s^{(p-1)/2} \equiv -1 \pmod p$ and
\[ (b_3 \cdot b_1c^2)^2 = b_3b_1b_4^{-1}b_1 = a^{[(k_3-k_1)-(k_1-k_4)](s^{-1}-1)}. \]
Now $p\nmid(s^{-1}-1)$ as $s\neq 1$, which implies
\[ (b_3 \cdot b_1c^2)^2 = 1\ \Leftrightarrow\ p\,\big|\,[(k_3-k_1)-(k_1-k_4)]\
\Leftrightarrow\ k_3 - k_1 \equiv k_1 - k_4 \pmod p. \]
This condition on $k_{1},k_{3}$ and $k_{4}$ tells us that the petals $G_{23}$ and $G_{24}$ of $G_2$ should be symmetric (modulo $p$) w.r.t. $G_{12}$, i.e. $G_{23}$, $G_{12}$ and $G_{24}$ should be equally spaced.\footnote{With this symmetry it is very easy for GAP to produce the desired structures, even with arbitrary choices of $k_1$ and $k_3$.}

In sum, our analysis leads to the following presentation:
\begin{equation}\label{equation:presentationg}
G = \big\langle a,b,c\,\big|\,a^p=b^{p-1}=c^3=1,\, a^b=a^s,\,
(cb_1)^2=b_3^cb_4=1\big\rangle
\end{equation}
where $p$ is an odd prime, $s$ is a primitive root modulo $p$, $k_3
- k_1 \equiv k_1 - k_4 \pmod p$. If our extension of the subgroup
structures succeeds, then the orders of subgroups and intersections
would be: $|G_1| = 6$, $|G_2| = p(p-1)$, $|G_3| = |G_4| = 2(p-1)$,
$|G_{12}| = |G_{13}| = |G_{14}| = 2$, $|G_{23}| = |G_{24}| = p-1$,
$|G_{34}| = |G_{123}| = |G_{124}| =1$. Hence $LHS$ of
(\ref{equation:Ingletong}) $=6p(p-1)$ while $RHS = 8(p-1)^2$, and so when
$p\geq5$, Ingleton should be violated.

\section{Explicit Violation Construction with $PGL(2,p)$ and $PGL(2,q)$}
\label{section:pgl}

Feeding the above presentation to GAP, we find that for $p = 5,7,\dots,23$ the outcome is a finite group that violates the Ingleton inequality.\footnote{The capability of the testing program is primarily limited by hardware. When $p$ is too large the program runs out of memory.} Moreover, with GAP we verified for the first few primes (up to $p = 11$) that this group is isomorphic to the projective general linear group $PGL(2,p)$. This leads us to conjecturing that $PGL(2,p)$ is a family of Ingleton-violating groups. In fact, with an explicit identification of the generators in \eqref{equation:presentationg} with matrices in $PGL(2,p)$, we prove that $PGL(2,p)$ is indeed a family of Ingleton-violating groups for primes $p\geq 5$, by directly constructing their violating subgroups in \eqref{equation:presentationsubgrps} in the form of matrices. These matrix subgroups all have clear interpretations. Furthermore, once we have the formats of these subroups, we extend them to the Ingleton-violating family $PGL(2,q)$ for all finite field order $q\geq 5$.

\subsection{The Family $PGL(2,p)$}
\label{subsec:pgl(2,p)}

First we introduce some necessary notations. Let $p$ be an odd prime. For $A\in GL(2,p)$, let $\A$ denote the
left coset of $A$ in $GL(2,p)$ with respect to the center $V_p = \{\alpha
I:\alpha \in\f_p^\times\}$. Thus $\A = \B$ if and only if each entry
of $A$ is a nonzero constant multiple of the corresponding entry of
$B$. $A^T$ denotes the transpose of $A$ as usual.
We denote the elements of $\f_p$ by ordinary integers, but the
addition and multiplication, as well as equality, are modulo $p$. Furthermore, $-k$ and $k^{-1}$ denotes the additive and multiplicative inverses of $k$ in $\f_p$ respectively. If $s\in \f_p$, and $A$ has multiplicative order $p$, then $A^s$ simply indicates the $s$-th power of $A$, where $s$ is viewed as an integer.

We start by identifying the generators in $PGL(2,p)$ that correspond to presentation~\eqref{equation:presentationg}. Consider the following matrices in $GL(2,p)$:
\[ A = \begin{bmatrix} 1 & 0 \\ 1 & 1 \end{bmatrix},\qquad
B = \begin{bmatrix} 1 & 0 \\ 0 & t \end{bmatrix},\qquad
C = \begin{bmatrix} 0 & 1 \\ -1 & -1 \end{bmatrix} \]
where $t$ is a primitive root modulo $p$, i.e. a generator of
$\f_p^\times$. Our guess is that $\A,\B,\C$ correspond to the generators
$a,b,c$ in (\ref{equation:presentationg}) respectively. The powers
of these matrices are:
\[ A^k = \begin{bmatrix} 1 & 0 \\ k & 1 \end{bmatrix},\quad
B^k = \begin{bmatrix} 1 & 0 \\ 0 & t^k \end{bmatrix},\quad
C^2 = \begin{bmatrix} -1 & -1 \\ 1 & 0 \end{bmatrix},\quad
C^3 = I \]
for any integer $k$. Thus 
$\big|\A\big| = p$, $\big|\B\big| = p-1$, and $\big|\C\big| = 3$. Also,
\[ A^B = B^{-1}AB = \begin{bmatrix} 1 & 0 \\ t^{-1} & 1 \end{bmatrix} = A^s, \]
where $s = t^{-1}$ is also a primitive root modulo $p$. So $\A^{\B}
= \A^s$. Next we let
\[ B_1 = (B^\fp)^{A^{k_1}} =
\begin{bmatrix} 1 & 0 \\ -k_1 & 1 \end{bmatrix}
\begin{bmatrix} 1 & 0 \\ 0 & -1 \end{bmatrix}
\begin{bmatrix} 1 & 0 \\ k_1 & 1 \end{bmatrix} =
\begin{bmatrix} 1 & 0 \\ -2k_1 & -1 \end{bmatrix}, \]
where we calculated $t^\fp = -1$ as it is the unique
element of order $2$ in $\f_p^\times$. Now check
\[ CB_1 = \begin{bmatrix} -2k_1 & -1 \\ 2k_1-1 & 1 \end{bmatrix},\quad
(CB_1)^2 = \begin{bmatrix} 4k_1^2-2k_1+1 & 2k_1-1 \\ -(2k_1-1)^2 & 2-2k_1 \end{bmatrix}. \]
Thus if we want $\big(\C\Ba\big)^2 = \I$, $k_1$ must be $2^{-1}=\mfp$. In this case,
\[ B_1 =\begin{bmatrix} 1 & 0 \\ -1 & -1 \end{bmatrix},\quad
CB_1 = \begin{bmatrix} -1 & -1 \\ 0 & 1 \end{bmatrix},\quad
\big(\C\Ba\big)^2 = \I. \]
Let $B_3=B^{A^{k_3}}$, $B_4=B^{A^{k_4}}$. As $k_3 - k_1 = k_1 - k_4$, we have $k_3=1-k_4$.
\[  B^{A^k} = \begin{bmatrix} 1 & 0 \\ k(t-1) & t \end{bmatrix},\quad
B_3C\cdot B_4 = \begin{bmatrix} 0 & 1 \\ -t & k_3(t-1)-t \end{bmatrix}
\begin{bmatrix} 1 & 0 \\ k_4(t-1) & t \end{bmatrix}, \]
whose $(1,1)$-entry is $k_4(t-1)$. If we want
$\Bc^{\C}\Bd=\I$, i.e., $\Bc\C\Bd=\C$, $k_4$ must be
$0$ since the $(1,1)$-entry of $C$ is $0$ and $t\neq 1$. So $k_3=1-k_4=1$,
\[ B_3 = \begin{bmatrix} 1 & 0 \\ t-1 & t \end{bmatrix},\quad
B_4 = \begin{bmatrix} 1 & 0 \\ 0 & t \end{bmatrix} = B,\quad
\Bc\C\Bd = \overline{\begin{bmatrix} 0 & 1 \\ -t & -1 \end{bmatrix}
\begin{bmatrix} 1 & 0 \\ 0 & t \end{bmatrix}} =
\overline{\begin{bmatrix} 0 & t \\ -t & -t \end{bmatrix}} = \C. \]

So far for $\A,\B,\C$ we have verified all the relations in
(\ref{equation:presentationg}). We can also prove that they are
actually a set of generators for $PGL(2,p)$. Observe that each
matrix in $GL(2,p)$ can be written as a product of some elementary matrices,
which are
\[ \begin{bmatrix} 1 & 0 \\ \alpha & 1 \end{bmatrix},\quad
\begin{bmatrix} 1 & \beta \\ 0 & 1 \end{bmatrix},\quad
\begin{bmatrix} 1 & 0 \\ 0 & t^i \end{bmatrix},\quad
\begin{bmatrix} t^j & 0 \\ 0 & 1 \end{bmatrix} \]
where $\alpha,\beta\in\f_p$ and $i,j\in\K_p$. They are generated by $A,A^T,B$ and $t^{-1}B$ respectively. So $PGL(2,p)$ is generated
by $\A,\overline{A^T}$ and $\B$. Now check
\[ B_1C = \begin{bmatrix} 0 & 1 \\ 1 & 0 \end{bmatrix},\quad
A^{B_1C} = \begin{bmatrix} 1 & 1 \\ 0 & 1 \end{bmatrix} = A^T, \]
thus $\A,\B$ and $\C$ generate $PGL(2,p)$. Hence setting $s = t^{-1}$,
$k_1 = \mfp$, $k_3 = 1$, $k_4 = 0$, we see that $PGL(2,p)$ is a quotient of the group $G$ in \eqref{equation:presentationg}, whose generators $\A,\B$ and $\C$ correspond precisely to the
generators $a,b$ and $c$ of $G$.

\begin{remark}
Note that we have not proved that (\ref{equation:presentationg}) is
a presentation of $PGL(2,p)$. To do that, one must show
that the order of the group generated by $a,b,c$ in (\ref{equation:presentationg}) is no
more than $|PGL(2,p)| = (p-1)p(p+1)$, which we have not yet been able to prove.
However, identifying possible corresponding generators still gives
us a way to explicitly construct the subgroups to violate Ingleton.
\end{remark}

Now we can write out the subgroups in $PGL(2,p)$ corresponding to
subgroups in (\ref{equation:presentationsubgrps}).

$G_1 = \big\langle\C,\Ba\big\rangle$. Note that $\big|\C\big|=3$, $\big|\Ba\big|=2$, and
$\big(\C\Ba\big)^2 = \I$, so $\C\Ba = \Ba\big(\C\big)^2$ and $G_1$ has at
most $6$ elements $\{\big(\Ba\big)^i\big(\C\big)^j: 0\leq i <2,\; 0\leq j < 3\}$.
Calculating these elements we can see $|G_1|=6$ exactly and thus indeed $G_1 \cong D_6 \cong S_3$:
\[ G_1 = \left\{\I,\quad
\overline{\begin{bmatrix} 0 & 1 \\ -1 & -1 \end{bmatrix}},\quad
\overline{\begin{bmatrix} -1 & -1 \\ 1 & 0 \end{bmatrix}},\quad
\overline{\begin{bmatrix} 1 & 0 \\ -1 & -1 \end{bmatrix}},\quad
\overline{\begin{bmatrix} 0 & 1 \\ 1 & 0\end{bmatrix}},\quad
\overline{\begin{bmatrix} -1 & -1 \\ 0 & 1 \end{bmatrix}} \right\}. \]

$G_2 = \big\langle\A,\B\big\rangle$. We claim that $G_2$ is the subgroup of lower triangular
matrices\footnote{We would end up with upper triangular matrices for $G_2$ if $A^T$ were used in place of $A$. But the two resulting groups are actually conjugate to each other, e.g. consider conjugating by $B_{1}C$.} in $GL(2,p)$ modulo $V_p$, i.e.,
\[ G_2 = \left\{\left.\overline{\begin{bmatrix} 1 & 0 \\ \alpha & \beta \end{bmatrix}}\ \right|
\alpha\in\f_p,\ \beta\in\f_p^\times\right\}. \]
As $A,B$ are lower triangular, any element in $G_2$ is a lower
triangular matrix modulo $V_p$. On the other hand, $\forall
\alpha\in\f_p,\;\beta\in\f_p^\times$, then $\beta = t^l$ for some integer
$l$. So
\[ \begin{bmatrix} 1 & 0 \\ \alpha & \beta \end{bmatrix} =
A^{\alpha}B^l \Rightarrow \overline{\begin{bmatrix} 1 & 0 \\ \alpha & \beta\end{bmatrix}} = \A^{\alpha}\B^l \in G_2. \]
Thus $|G_2| = p(p-1)$ and $G_2$ has presentation~\eqref{equation:presentationg2}. Therefore, as proved in Section~\ref{subsec:presentation-g2}, $G_2 \cong \z_p\rtimes\z_{p-1}$ and it achieves the desired flower structure.

$G_3 = \big\langle\Ba\big(\C\big)^2,\Bc\big\rangle = \big\langle\C\Ba,\Bc\big\rangle$. Note
that $\big|\C\Ba\big|=2$, $\big|\Bc\big|=\big|\B\big|=p-1$, also
\[ B_3^k  = \begin{bmatrix} 1 & 0 \\ t^k-1 & t^k \end{bmatrix},\quad
B_3^{-1} = \begin{bmatrix} 1 & 0 \\ t^{-1}-1 & t^{-1} \end{bmatrix}, \]
\[ \Bc\cdot\C\Ba =
\overline{\begin{bmatrix} -1 & -1 \\ 1-t & 1 \end{bmatrix}} =
\overline{\begin{bmatrix} -t^{-1} & -t^{-1} \\ t^{-1}-1 & t^{-1} \end{bmatrix}}
= \C\Ba\big(\Bc\big)^{-1}, \]
so $G_3$ has at most $2(p-1)$ elements $\{\big(\C\Ba\big)^i\big(\Bc\big)^j: 0\leq i
<2,\; 0\leq j < p-1\}$. Calculating these elements we can see
$|G_3|=2(p-1)$ exactly and so $G_3 \cong D_{2(p-1)}$:
\[ G_3 = \left\{\left.\big(\Bc\big)^k =
\overline{\begin{bmatrix} 1 & 0 \\ t^k-1 & t^k \end{bmatrix}},\quad
\C\Ba\big(\Bc\big)^k = \overline{\begin{bmatrix} -1 & -1 \\ 1-t^{-k} & 1 \end{bmatrix}}\ \right| k\in\K_p \right\}. \]

$G_4 = \big\langle\Ba\C,\Bd\big\rangle$. Note that $\big|\Ba\C\big|=2$, $\big|\Bd\big|=\big|\B\big|=p-1$. Moreover,
\[ \Bd\cdot\Ba\C = \overline{\begin{bmatrix} 0 & 1 \\ t & 0 \end{bmatrix}} =
\overline{\begin{bmatrix} 0 & t^{-1} \\ 1 & 0 \end{bmatrix}} =\Ba\C\big(\Bd\big)^{-1}, \]
so $G_4$ has at most $2(p-1)$ elements $\{\big(\Ba\C\big)^i\big(\Bd\big)^j: 0\leq i
<2,\; 0\leq j < p-1\}$. Calculating these elements we can see
$|G_4|=2(p-1)$ exactly and so $G_4 \cong D_{2(p-1)}$:
\[ G_4 = \left\{\left.\big(\Bd\big)^k = \overline{\begin{bmatrix} 1 & 0 \\ 0 & t^k \end{bmatrix}},\quad
\Ba\C\big(\Bd\big)^k = \overline{\begin{bmatrix} 0 & t^k \\ 1 & 0\end{bmatrix}}\ \right| k\in\K_p \right\}. \]
These are all diagonal and anti-diagonal matrices in $GL(2,p)$ modulo $V_p$. Note that we have already verified $\big(\Bc\big)^{\C}=\Bd^{-1}$, also $\big(\C\Ba\big)^{\C} = \Ba\C$, thus indeed $G_{4}=G_{3}^{\C}$ as in the original instance (Fig.~\ref{figure:flowerstructure}).

With all four subgroups explicitly written, we can easily write down
the intersections:
\[ G_{12} = \big\langle\Ba\big\rangle =
\left\{\I,\quad\overline{\begin{bmatrix} 1 & 0 \\ -1 & -1\end{bmatrix}}\right\} \cong \z_2,\quad
G_{13} = \big\langle\C\Ba\big\rangle =
\left\{\I,\quad\overline{\begin{bmatrix} -1 & -1 \\0 & 1\end{bmatrix}}\right\} \cong \z_2, \]
\[ G_{14} = \big\langle\Ba\C\big\rangle =
\left\{\I,\quad\overline{\begin{bmatrix} 0 & 1 \\ 1 & 0\end{bmatrix}}\right\} \cong \z_2,\quad
G_{23} = \big\langle\Bc\big\rangle =
\left\{\left.\overline{\begin{bmatrix} 1 & 0 \\ t^k-1 & t^k \end{bmatrix}}\ \right| k\in\K_p \right\} \cong \z_{p-1}, \]
\[ G_{24} = \big\langle\Bd\big\rangle =
\left\{\left.\overline{\begin{bmatrix} 1 & 0 \\ 0 & t^k \end{bmatrix}}\ \right| k\in\K_p \right\} \cong \z_{p-1},\quad
G_{34} = G_{123} = G_{124} = 1. \]
\[ |G_{12}|= |G_{13}| = |G_{14}|=2,\quad |G_{23}|=|G_{24}|=p-1. \]
So in (\ref{equation:Ingletong}), indeed $LHS =
|G_1||G_2||G_{34}||G_{123}||G_{124}| = 6p(p-1)$ and $RHS =
|G_{12}||G_{13}||G_{14}||G_{23}||G_{24}| = 8(p-1)^2$, hence $LHS - RHS =
2(p-1)(4-p)$. Thus Ingleton is violated when $p\geq 5$, and the subgroup
structures of $S_5\cong PGL(2,5)$ are exactly reproduced.

\subsection{The Family $PGL(2,q)$}
\label{subsec:pgl(2,q)}

With the explicit matrix forms of the Ingleton-violating subgroups, we can further extend the above violation to $PGL(2,q)$, for all finite field order $q\geq 5$. For a finite field $\f_q$, we know that $q=p^m$ for some prime $p$ (the characteristic of $\f_q$) and some integer $m$.
Since $\f_p$ is the prime subfield of $\f_q$, $GL(2,p)$ is a subgroup of $GL(2,q)$, which induces an isomorphic copy of  $PGL(2,p)$ as a subgroup of $PGL(2,q)$. Therefore, using the same subgroups of $PGL(2,p)$ as in the previous section, we obtain a trivial Ingleton violation in $PGL(2,q)$ whenever the characteristic $p\geq 5$. Nevertheless, by extending the interpretations of these subgroups to $PGL(2,q)$, we can obtain a more general (nontrivial) violation, for each finite field order $q\geq 5$.

In the field $\f_q$, we continue to use the ordinary integers with modular arithmetic to represent the prime subfield $\f_p$. With this convention, all the matrices and subgroups in Section~\ref{subsec:pgl(2,p)} are well defined\footnote{The only problem that may arise is when $p=2$, $B_1 = (B^\fp)^{A^{k_1}}$ is not well defined. But we can circumvent that by directly working with the final matrix form of $B_1$.}, although now the cosets are taken with respect to $V_q$ rather than $V_p$.
These subgroups constitute a trivial embedding of our previous violation in $PGL(2,q)$. However, in $PGL(2,q)$, the previous sets of generators do not guarantee that $G_2$ is the full subgroup of all lower triangular matrices, nor that $G_4$ contains all the diagonal and anti-diagonal matrices.

To preserve these interpretations of the subgroups, we need to make some adjustment to the generators of $G_{2}$. Redefine $t$ to be a primitive element of $\f_q$, i.e. $t$ generates $\f_q^\times$. Then $\big|\B\big|= q-1$. Also instead of a single $A$, we need to introduce more matrices to generate the subgroup $N \triangleq \left\{\left.\As{\alpha}\, \right| \alpha\in\f_q\right\}$, where for each $\alpha\in\f_q$ we define
\[ A_\alpha = \begin{bmatrix} 1 & 0 \\ \alpha & 1 \end{bmatrix}. \]
Clearly $A_\alpha A_\beta =A_{\alpha+\beta}$, and $A_\alpha^k = A_{k\alpha}$ for each integer $k$. Thus $\big|\As{\alpha}\big|=p$ for each $\alpha\in\f_q^\times$. Observe that $\f_q$ is an $m$-dimensional vector space over $\f_p$, let $\left(\xi_1, \xi_2,\ldots,\xi_m\right)$ be a basis. Then $\forall\alpha\in\f_q$, $\alpha = \sum_{i=1}^mk_i\xi_i$ for some $k_1,k_2,\ldots,k_m\in\f_p$ and $A_\alpha = \prod_{i=1}^mA_{\xi_i}^{k_i}$. Also $\big\langle \As{\xi_i} \big\rangle\bigcap \big\langle \As{\xi_j} \big\rangle = 1$ for distinct $i$ and $j$. Thus
\[ N = \big\langle\ \As{\xi_1},\ \As{\xi_2},\ \ldots,\ \As{\xi_m}\ \big\rangle \cong \big\langle\ \As{\xi_1}\ \big\rangle\times\big\langle\ \As{\xi_2}\ \big\rangle\times\ldots\times\big\langle\ \As{\xi_m}\ \big\rangle \cong \z_p^m. \]
Actually, $N$ is isomorphic to the additive group of the vector space $\f_q$ over $\f_p$ (Also see Section~\ref{subsec:embedding}). 

Let $G_2 = \big\langle\As{\xi_1},\As{\xi_2},\ldots,\As{\xi_m},\B\big\rangle = \big\langle N,\B\big\rangle$. Similar to the previous section, it is easy to show that now $G_2$ is indeed the subgroup of all lower triangular matrices modulo $V_q$. Furthermore, for any $\alpha\in\f_q$, we have $\As{\alpha}^{\B} = \As{t^{-1}\alpha}$, so $N \trianglelefteq G_2$ and $G_2 = NH$, where $H\triangleq\big\langle\B\big\rangle$. Also $N\bigcap H = 1$, thus $G_2 \cong N\rtimes H \cong \z_p^m\rtimes\z_{q-1}$. 
\begin{figure}[!t]
\centering
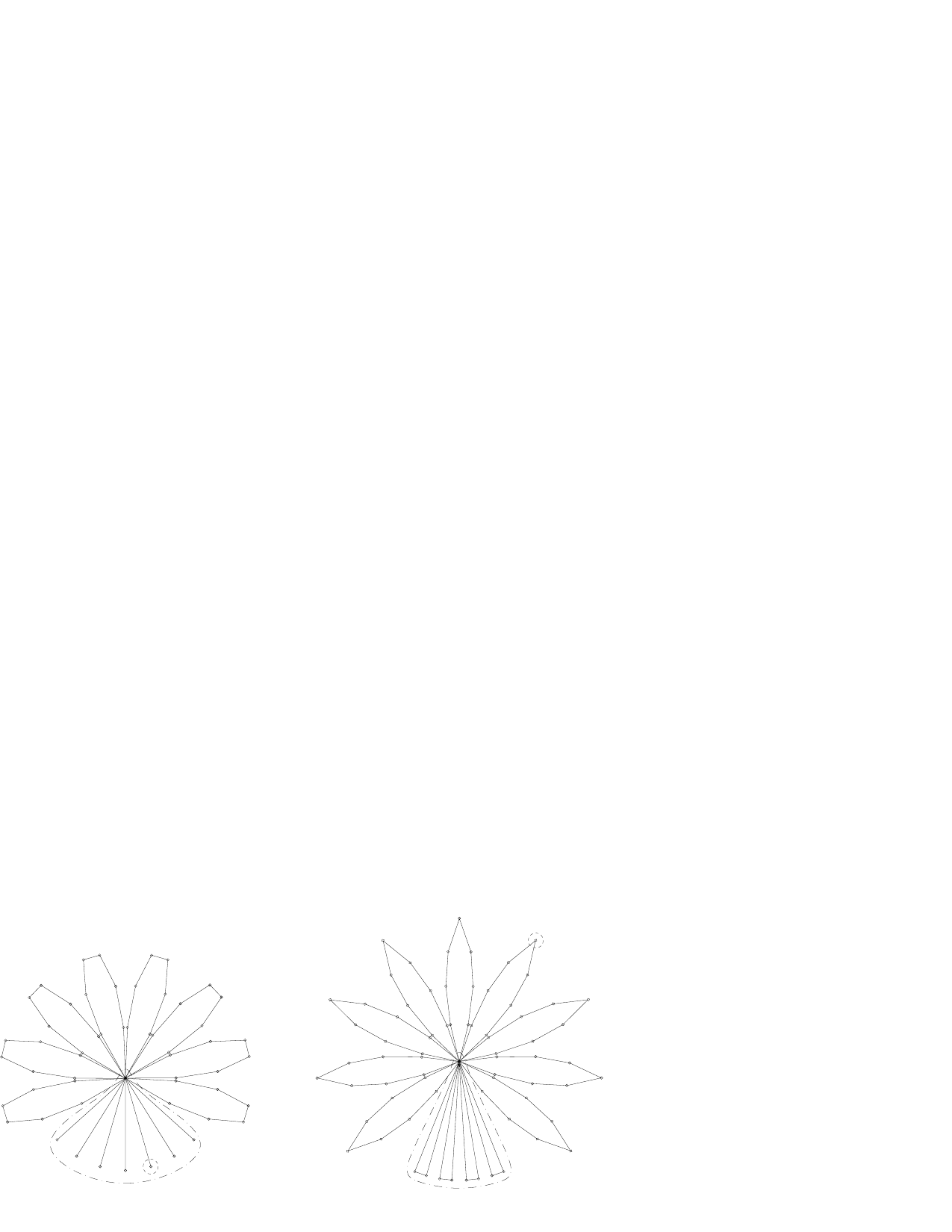
\caption{The generalized flower structures. The center point of each cycle graph denotes the identity element.}
\label{figure:genflowerstructure}
\end{figure}
Although in general $G_2$ does not have presentation~\eqref{equation:presentation_sdprod} or \eqref{equation:presentationg2} anymore since $N$ is not necessarily cyclic, we can prove that it does have a ``generalized flower structure'' when $q>2$, i.e. $(G_2\setminus N) \bigcup\{\I\}$ is the union of subgroups which are all isomorphic to $H$ with trivial pairwise intersections. Similar to the analysis of the $G_2$ in Section~\ref{subsec:presentation-g2}, it suffices to show that $H^{\As{\alpha}}\bigcap H = 1$, $\forall \As{\alpha}\in N\setminus\{\I\}$. But this is true since for each $\alpha\in\f_q^\times$ and some integers $k,l\in\K_q$,
\[ (\B^k)^{\As{\alpha}} = \B^l \iff \B^k\cdot\As{\alpha} = \As{\alpha}\cdot\B^l \iff
\overline{\begin{bmatrix} 1 & 0 \\ t^k\alpha & t^k \end{bmatrix}} = 
\overline{\begin{bmatrix} 1 & 0 \\ \alpha & t^l \end{bmatrix}} \iff k = l = 0.\]

Fig.~\ref{figure:genflowerstructure} shows two representative generalized flower structures of $G_2$, for $q=8$ and $q=9$. In each cycle graph of $G_2$, there are $|N|=q$ petals and one ``root system'' (encircled by the dash-dotted line), which is the normal subgroup $N$. Every petal is a conjugate of $H$ and has size $q-1$. 
Since $N$ has $q-1$ nonidentity elements, each having order $p$, the root system consists of $(q-1)/(p-1)$ trivially intersecting ``roots/tubers'', each of which is a $p$-cycle.
Note that when $m=1$, there is only one root/tuber, as in the original flower structure in Fig.~\ref{figure:flowerstructure}.

Now using the same matrices 
\[ C = \begin{bmatrix} 0 & 1 \\ -1 & -1 \end{bmatrix},\quad
B_1 = \begin{bmatrix} 1 & 0 \\ -1 & -1 \end{bmatrix},\quad
B_3 = B^{A_1} = \begin{bmatrix} 1 & 0 \\ t-1 & t \end{bmatrix},\quad
B_4 = B = \begin{bmatrix} 1 & 0 \\ 0 & t \end{bmatrix} \]
as in Section~\ref{subsec:pgl(2,p)} (except that $t$ now generates  $\f_q^\times$ instead of $\f_p^\times$), we write down the following subgroups:

$G_1 = \big\langle\C,\Ba\big\rangle \cong D_6 \cong S_3$. (Same as in Section~\ref{subsec:pgl(2,p)}.)

$G_2 = \big\langle\As{\xi_1},\As{\xi_2},\ldots,\As{\xi_m},\B\big\rangle = \big\langle N,\B\big\rangle \cong \z_p^m\rtimes\z_{q-1}$, which consists of all lower triangular matrices in $GL(2,q)$ modulo $V_q$.

$G_3 = \big\langle\Ba\big(\C\big)^2,\Bc\big\rangle = \big\langle\C\Ba,\Bc\big\rangle$. Now $\big|\Bc\big|=q-1$, and we still have $\Bc\cdot\C\Ba = \C\Ba\big(\Bc\big)^{-1}$, so
\[ G_3 = \left\{\left.\big(\Bc\big)^k =
\overline{\begin{bmatrix} 1 & 0 \\ t^k-1 & t^k \end{bmatrix}},\quad
\C\Ba\big(\Bc\big)^k = \overline{\begin{bmatrix} -1 & -1 \\ 1-t^{-k} & 1 \end{bmatrix}}\ \right| k\in\K_q \right\} \cong D_{2(q-1)}. \]

$G_4 = \big\langle\Ba\C,\Bd\big\rangle$. Now $\big|\Bd\big|=q-1$ and $\Bd\cdot\Ba\C = \Ba\C\big(\Bd\big)^{-1}$, so
\[ G_4 = \left\{\left.\big(\Bd\big)^k = \overline{\begin{bmatrix} 1 & 0 \\ 0 & t^k \end{bmatrix}},\quad
\Ba\C\big(\Bd\big)^k = \overline{\begin{bmatrix} 0 & t^k \\ 1 & 0\end{bmatrix}}\ \right| k\in\K_q \right\} \cong D_{2(q-1)}, \]
which comprises all diagonal and anti-diagonal matrices in $GL(2,q)$ modulo $V_q$.

Next we find the intersections: $G_{12} = \big\langle\Ba\big\rangle$, $G_{13} = \big\langle\C\Ba\big\rangle$, and
$G_{14} = \big\langle\Ba\C\big\rangle$, which are all isomorphic to $\z_2$;
$G_{23} = \big\langle\Bc\big\rangle$ and $G_{24} = \big\langle\Bd\big\rangle$, both of which are isomorphic to $\z_{q-1}$; and
$G_{34} = G_{123} = G_{124} = 1$.

The orders of the four subgroups are $|G_1|=6$, $|G_2| = q(q-1)$, $|G_3|=|G_4|=2(q-1)$, and for the intersections
$|G_{12}|= |G_{13}| = |G_{14}|=2$, $|G_{23}|=|G_{24}|=q-1$, $|G_{34}| = |G_{123}| = |G_{124}| = 1$. So in (\ref{equation:Ingletong}), $LHS = |G_1||G_2||G_{34}||G_{123}||G_{124}| = 6q(q-1)$, while $RHS =
|G_{12}||G_{13}||G_{14}||G_{23}||G_{24}| = 8(q-1)^2$. Thus $LHS - RHS =
2(q-1)(4-q)$ and Ingleton is violated when $q\geq 5$.

\begin{remark}
Depending on the characteristic $p$ of $\f_q$, the intersection $G_{12} = \big\langle\Ba\big\rangle$ might lie in either the petals or the roots of $G_2$, as depicted by the dashed circles in Fig.~\ref{figure:genflowerstructure}. If $p\neq2$, then $q$ is odd and $\Ba=(\B^\fq)^{\As{k_1}}$ where $k_1=2^{-1}=\mfp$, so $G_{12}$ is on the petal $H^{\As{k_1}}$. Whereas if $p=2$, then $-1=1$ and $\Ba=\As{1}\in N$, so $G_{12}$ becomes a root. Note that the patterns of the other intersections are not changed for different $q$'s.
\end{remark}

\begin{remark}
We can also show that $\As{\xi_1}, \As{\xi_2}, \ldots, \As{\xi_m}, \B$ and $\C$ generate $PGL(2,q)$, using the same argument as in the previous section. The only difference is that the elementary matrices of $GL(2,q)$ are now generated by $A_{\xi_1},A_{\xi_1}^T,\ldots,A_{\xi_m},A_{\xi_m}^T,B$ and $t^{-1}B$. But as $A_\alpha^{B_1C} = A_\alpha^T$, $\forall \alpha\in\f_q$, we see that $PGL(2,q)$ is indeed generated by the desired elements.
\end{remark}

In Section~\ref{section:Asch}, we will see that these subgroups have more fundamental interpretations in the framework of group actions and groups of Lie type: each subgroup is the stabilizer for a special set of points in the underlying projective geometry of $PGL(2,q)$.

\subsection{Discussion}
\label{subsec:efc-Ingleton}


To measure ``how much'' the Ingleton inequality is violated, or how effective a set of subgroups is in terms of violating Ingleton, we need to compare the difference of the two sides of \eqref{equation:Ingletonh} for the corresponding entropy vector, i.e.
\[ \Delta_{h} \triangleq h_1+h_2+h_{34}+h_{123}+h_{124} - (h_{12}+h_{13}+h_{14}+h_{23}+h_{24}). \]
Translating to the finite group context it equals $\log\frac{RHS}{LHS}$ of \eqref{equation:Ingletong}. Thus we can make the following definition to measure the extent to which Ingleton is violated.
\begin{definition}
For a 4-tuple of subgroups $\tau = (G_i : 1\leq i\leq 4)$, we define the \emph{Ingleton ratio} to be
\begin{equation}\label{equation:Ingletonr}
r(\tau)=\frac{|G_{12}||G_{13}||G_{14}||G_{23}||G_{24}|}{|G_1||G_2||G_{34}||G_{123}||G_{124}|}.
\end{equation}
\end{definition}

Clearly $\Delta_{h} = \log r$ and Ingleton is violated iff $r>1$. The family $PGL(2,q)$ have the Ingleton ratio
\[ r = \frac{4(q-1)}{3q}, \]
which approaches $4/3$ when $q$ is large.

However, the Ingleton ratio is not precise enough to characterize the effectiveness of an Ingleton violation instance. Observe that $\overline{\Gamma_n^*}$ is a cone, and in fact, as remarked in\cite{Boston-Nan-NewVioIngelton}, adding an entropy vector to itself yields another entropy vector. Thus we can arbitrarily increase the Ingleton ratio by joining copies of a violation instance. For example, if $\tau = (G_i : 1\leq i\leq 4)$ is such an instance, for each integer $N$ let $G' = \times_{k=1}^{N}G \triangleq G\times\cdots\times G$ be the direct product of $N$ copies of $G$ and define $\tau' = (G_i' : 1\leq i\leq 4)$ with $G_{i}' = \times_{k=1}^{N}G_{i}$ for each $i$. Then the Ingleton ratio $r(\tau') = [r(\tau)]^{N}$, which grows unbounded when $N\to\infty$.

Therefore we need to consider the scaled version of $\Delta_{h}$, to be able to measure the effectiveness of an Ingleton violation. In \cite{DFZ-NonShannon4} Dougherty \textit{et al.} use the full joint entropy $h_{1234}$ as a scaling factor to avoid the problem above:
\begin{definition}
For an entropy vector $h = (h_\alpha:\emptyset \neq \alpha \subseteq \{1,2,3,4\})$, define the \emph{Ingleton score} to be
\[ \sigma(h) = -\frac{\Delta_{h}}{h_{1234}}. \]
In the context of groups, the Ingleton score of a 4-tuple $\tau$ of subgroups of $G$ is
\[ \sigma(\tau) = \frac{-\log r(\tau)}{\log(|G|/|G_{1234}|)}. \]
\end{definition}

Note that Ingleton fails iff $\sigma<0$, and a lower score means a larger violation. Essentially this definition forms a ray starting from the origin and passing through the point in $\mathbb{R}^{2^4-1}$ corresponding to an entropy vector, then finds its intersection with the hyperplane $h_{1234}=1$ and computes $-\Delta_{h}$ for that point to measure the Ingleton violation. The best Ingleton score in the family $PGL(2,q)$ is attained when $q = 13$, with $\sigma = -0.0270$. In \cite{Boston-Nan-NewVioIngelton} many violations obtained have lower Ingleton scores, hence are more effective than $PGL(2,q)$. In \cite{DFZ-NonShannon4} a conjecture concerning the lowest Ingleton score attainable by an arbitrary entropy vector is proposed, but has been refuted recently by Mat\'u\v s and Csirmaz\cite{Matus-Csirmaz-Rej4atom}.

A perhaps more geometrically meaningful scaling factor is the 2-norm of the entropy vector, as proposed in \cite{Shadbakht-Hassibi-MCMC}:
\begin{definition}
For an entropy vector $h = (h_\alpha:\emptyset \neq \alpha \subseteq \{1,2,3,4\})$, define the \emph{Ingleton violation index} to be
\[ \iota(h) = \frac{\Delta_{h}}{\|h\|_{2}} = \frac{\Delta_{h}}{\sqrt{h^{T}h}}. \]
\end{definition}

Essentially this definition measures the ``sine'' of the angle between an entropy vector and the Ingleton hyperplane $\Delta_{h} = 0$. The Ingleton inequality fails iff $\iota>0$, and a larger index means a larger violation. Note that two entropy vectors might have the same violation index but different Ingleton scores, and vice versa. The best Ingleton violation index in the family $PGL(2,q)$ is again attained when $q = 13$, with $\iota = 0.0082$, whereas for an arbitrary entropy vector the best $\iota$ found in literature is $0.0276$ using quasi-uniform distributions\cite{Shadbakht-thesis}.\\


Next we discuss two directions for generalizing the above Ingleton-violating family and finding new violations. On the one hand, $PGL(2,q)$ is the quotient group of $GL(2,q)$, so supposedly $GL(2,q)$ should have a richer choice of subgroups violating Ingleton inequality. This approach is explored in the next section. On the other hand, since the subgroups in the $PGL(2,q)$ family have simple but fundamental interpretations in terms of group actions, we can generalize them in this framework. In particular, we obtain two new families of violations in $PGL(n,q)$ for general $n$, and further generalize to an abstract construction using 2-transitive groups. Since this approach is more abstract and requires more background knowledge, we defer it to Section~\ref{section:Asch}.

\section{Ingleton Violations in $GL(2,q)$}
\label{section:gl}

As $PGL(2,q)$ is the quotient group of $GL(2,q)$ modulo the subgroup $V_q$ of scalar
matrices, naturally one may ask if the general linear groups also violate
Ingleton. In fact, the following lemma shows that there is at least one set of subgroups in $GL(2,q)$ that violates Ingleton for all finite field orders $q \ge 5$:

\begin{lemma}\label{lemma:preimg}
If $G$ is a finite group with a normal subgroup $N$ such that $H \triangleq G/N$ has a set of Ingleton-violating subgroups, then the preimages of these subgroups under the natural homomorphism $g \mapsto gN$ are subgroups of $G$ that also violate Ingleton.
\end{lemma}
\begin{proof}
Let $(H_i: 1\leq i\leq 4)$ be a set of Ingleton-violating subgroups in $H$. Define $G_i$ to be the preimage of $H_i$ under the natural homomorphism, then $G_i$ is a group containing $N$ for each $i$. By the Lattice Isomorphism Theorem (see e.g. \cite{absAlg}), for any nonempty subset $\alpha \subseteq \{1, 2, 3, 4\}$, $G_\alpha / N = H_\alpha$, and so $|G_\alpha | = |H_\alpha| \cdot |N|$. Thus by checking the orders in \eqref{equation:Ingletong}, $(G_i: 1\leq i\leq 4)$ also violate Ingleton.
\end{proof}

Searching with GAP, we find $GL(2,5)$ to be the smallest
general linear group that violates Ingleton. Up to subscript symmetries and
conjugations, it has 15 sets of Ingleton-violating subgroups. We would like to analyze their structures and generalize them for $q\geq 5$ if possible.

Throughout this section, we always assume $q$ is a finite field order, and $p$ is the characteristic of $\f_q$. We begin our analysis by identifying the preimages of the Ingleton-violating subgroups in the previous section under the natural homomorphism
\[ \pi:GL(2,q)\to GL(2,q)/V_q = PGL(2,q), \]
according to Lemma~\ref{lemma:preimg}. With no surprise, when $q=5$ these correspond to one of the 15 violation instances in $GL(2,5)$, and they take on nice matrix structures similar to the subgroups in Section~\ref{section:pgl}. Based on this set of subgroups we have 10 other instances, all of which are essentially its variants: each instance differs from the preimages at exactly one subgroup (either $G_1$ or $G_2$). These 11 violation instances can be easily extended to families of Ingleton-violating subgroups in $GL(2,q)$ for $q\geq 5$, sometimes with an extra condition. The remaining 4 instances cannot be derived directly from the preimages; however, they are interrelated and all their subgroups are equal or conjugate to some known subgroups from the previous instances. They also generalize to Ingleton-violating families in $GL(2,q)$ with some extra conditions.

Table~\ref{table:conditionspq} summarizes how the generalization of these instances depends on the values of $p$ and $q$. We can see that when $p=2$, these 15 instances collapse to only 6 dinstinct ones; also some instances need specific conditions on $p$ and $q$ to violate Ingelton.

%
\begin{table}[!t]
\caption{\hspace{0.1in}(a) Identical Instances when $p=2$\hspace{0.4in} (b) Cases when Ingleton is not Violated}
\label{table:conditionspq}
\hspace{0.045\textwidth}\begin{minipage}[t]{0.5\textwidth}\centering
\begin{tabular}{|c|c|}
  \hline
  Instance & Identical\vspace*{-1.5ex} \\
  No. & Instance(s) \\
  \hline
  1 & 5 \\
  \hline
  2 & 3, 4 \\
  \hline
  6 & 8, 10 \\
  \hline
  7 & 9, 11 \\
  \hline
  12 & 13 \\
  \hline
  14 & 15 \\
  \hline
\end{tabular}
\end{minipage}
\hspace{-0.1\textwidth}\begin{minipage}[t]{0.5\textwidth}\centering
\begin{tabular}{|c|c|c|}
	\hline
    & \phantom{$\fq$ odd} & $p\neq2$, \\
   Instance No. & $p=3$ & $\fq$ odd \\
  \hline
   8, 9 & & $\times$ \\
  \hline
   12, 14& $\times$ & \\
  \hline
   13, 15& $\times$ & $\times$ \\
  \hline
\end{tabular}
\end{minipage}
\end{table}

\begin{table}[!t]
\caption{Orders of Subgroups and Intersections}
\label{table:order}
\centering
\begin{tabular}{c|c|c|c|c|c|c|c|c|c}
\hline
Ins. No.&
$|G_1|$     &$|G_2|$        &$|G_3|$        &$|G_{34}|$                 &$|G_{123}|$    &
$|G_{12}|$  &$|G_{13}|$     &$|G_{23}|$     &$LHS - RHS$ in \eqref{equation:Ingletong}\\
\hline
0&
6           &$q(q-1)$       &$2(q-1)$       &1                          &1              &
2           &2              &$q-1$          &$2(q-1)(4-q)$\\
\hline
1&
$6(q-1)$    &$q(q-1)^2$     &$2(q-1)^2$     &$q-1$                      &$q-1$          &
$2(q-1)$    &$2(q-1)$       &$(q-1)^2$      &$2(q-1)^6(4-q)$\\
\hline
2,4&
6           &$q(q-1)^2$     &$2(q-1)^2$     &$q-1$                      &1              &
2           &2              &$(q-1)^2$      &$2(q-1)^3(4-q)$\\
\hline
3&
12          &$q(q-1)^2$     &$2(q-1)^2$     &$q-1$                      &2              &
4           &4              &$(q-1)^2$      &$16(q-1)^3(4-q)$\\
\hline
5&
$3(q-1)$    &$q(q-1)^2$     &$2(q-1)^2$     &$q-1$                      &$\fq$          &
$q-1$       &$q-1$          &$(q-1)^2$      &$\frac{1}{4}(q-1)^6(4-q)$\\
\hline
6--9&
$6(q-1)$    &$q(q-1)$       &$2(q-1)^2$     &$q-1$                      &1              &
2           &$2(q-1)$       &$q-1$          &$2(q-1)^3(4-q)$\\
\hline
10,11&
$6(q-1)$    &$2q(q-1)$      &$2(q-1)^2$     &$q-1$                      &2              &
4           &$2(q-1)$       &$2(q-1)$       &$16(q-1)^3(4-q)$\\
\hline
12--15&
6           &$q(q-1)$       &$q(q-1)$       &1                          &1              &
2           &2              &$q-1$          &$2(q-1)(4-q)$\\
\hline
8',9'&
$6(q-1)$    &$q(q-1)$       &$2(q-1)^2$     &$q-1$                      &2              &
2           &$2(q-1)$       &$q-1$          &$8(q-1)^3(2q+1)$\\
\hline
13',15'&
6           &$q(q-1)$       &$q(q-1)$       &2                          &1              &
1           &1              &$q-1$          &$(q-1)(11q+1)$\\
\hline
\end{tabular}
\end{table}

In Table~\ref{table:order}, the orders of the subgroups for the cases we have explored in $PGL(2,q)$ and $GL(2,q)$ are listed. No.~0 denotes the instance in $PGL(2,q)$, and No.~1--15 denote the generalizations of the 15 violation instances in $GL(2,5)$ to $GL(2,q)$. Since all instances have the subgroup order symmetries
\[ |G_3|=|G_4|,\quad |G_{123}|=|G_{124}|,\quad |G_{13}|=|G_{14}|,\quad |G_{23}|=|G_{24}|, \]
only one of each pair of orders is listed. Note that when $p=2$, there are only 6 such dinstinct generalizations, which are Instances~1, 2, 6, 7, 12 and 14. Thus for the order calculation of all other instances in $GL(2,q)$ assume $p\neq2$. Moreover, No.~8', 9', 13' and 15' correspond to Instances~8, 9, 13 and 15 when $p\neq2$ but $\fq$ is odd, in which case Ingleton is satisfied. Finally, the order calculation for Instances~12--15 only works for $p\neq3$. From Table~\ref{table:order}, we can calculate that all violation instances in the table have the same Ingleton ratio
$r = 4(q-1)/(3q)$, which is the same as the family $PGL(2,q)$. But the scaling factors for both the Ingleton score and the violation index are no larger than $PGL(2,q)$ in these instances, so they are no more effective.

In the following, we present all of these extended violation families, with Section~\ref{subsec:vio1} being the set of preimage subgroups, Sections~\ref{subsec:vio2-5} and \ref{subsec:vio6-11} its 10 variants, and Section~\ref{subsec:vio12-15} the remaining 4 instances. We continue to use the notations from Section~\ref{section:pgl} with $t$ being a primitive element of $\f_q$, but we redefine
\[ N = \left\{\left.A_\alpha \right| \alpha\in\f_q\right\} =
\langle A_{\xi_1},A_{\xi_2},\ldots,A_{\xi_m} \rangle \cong \langle A_{\xi_1}\rangle\times\langle A_{\xi_2} \rangle\times\ldots\times\langle A_{\xi_m}\rangle \cong \z_p^m. \]
In addition, we introduce the following matrices and subgroups in $GL(2,q)$ to facilitate our presentation:
\[ B' = \begin{bmatrix} -1 & 0 \\ 0 & t \end{bmatrix},\qquad
P = \begin{bmatrix} t & 0 \\ 0 & 1 \end{bmatrix},\qquad
P' = \begin{bmatrix} t & 0 \\ 0 & -1 \end{bmatrix}, \]
\[ M = \langle C, B_1\rangle = \left\{I,\ \begin{bmatrix} 0 & 1 \\ -1 & -1 \end{bmatrix},\
\begin{bmatrix} -1 & -1 \\ 1 & 0 \end{bmatrix},\ \begin{bmatrix} 1 & 0 \\ -1 & -1 \end{bmatrix},\ \begin{bmatrix} 0 & 1 \\ 1 & 0\end{bmatrix},\ \begin{bmatrix} -1 & -1 \\0 & 1 \end{bmatrix} \right\}, \]
\[ K = \langle N, B \rangle =\left\{\left.\begin{bmatrix} 1 & 0 \\ \alpha & \beta \end{bmatrix}\right|
\begin{array}{l} \alpha \in \f_q,\\ \beta \in \f_q^\times \end{array}\right\},\quad
K' = \langle N, B' \rangle =\left\{\left.\begin{bmatrix} (-1)^k & 0 \\ \alpha & t^k \end{bmatrix}\right|
\begin{array}{l} \alpha \in \f_q,\\ k \in \K_q \end{array}\right\}, \]
\[ J = \langle N, P \rangle =\left\{\left.\begin{bmatrix} \beta & 0 \\ \alpha & 1 \end{bmatrix}\right|
\begin{array}{l} \alpha \in \f_q,\\ \beta \in \f_q^\times \end{array}\right\},\quad
J' = \langle N, P' \rangle =\left\{\left.\begin{bmatrix} t^k & 0 \\ \alpha & (-1)^k \end{bmatrix}\right|
\begin{array}{l} \alpha \in \f_q,\\ k \in \K_q \end{array}\right\}. \]
Note that when $p=2$, we have $-1=1$, so $B'=B$, $P'=P$, and $K'=K$, $J'=J$. Also note that $M$ and $K$ precisely correspond to $G_1$ and $G_2$ in Section~\ref{section:pgl}, respectively. The group $M$ is isomorphic to $D_6 \cong S_3$, while the other four groups are all semidirect products $\z_p^m\rtimes \z_{q-1}$, with $K\cong J$ and $K'\cong J'$. Moreover, $K$ and $J$ have generalized flower structures for all $q>2$. However, if $p\neq2$, $K'$ and $J'$ only have flower structures when $\fq$ is even, in which case they are also isomorphic to $K$. (See Section~\ref{subsec:app_subgrp} in Appendices for proofs.) This turns out to be a necessary condition to violate Ingleton in all the instances where $K'$ and $J'$ are involved.

\subsection{Instance 1: The Preimage Subgroups}
\label{subsec:vio1}

To obtain the preimage $H_0$ of a subgroup $H\leq PGL(2,q)$ under $\pi$, we can generate $H_0$ in $GL(2,q)$ with the generators of $H$ (without overlines) and $tI$, since $V_q = \langle tI\rangle \cong \z_{q-1}$.

$G_1 = \langle tI, C, B_1 \rangle = \langle V_q, M \rangle$. Since $V_q$ is the center of $GL(2,q)$ and intersects $M$ trivially, $G_1$ is a direct product: $G_1 = \{t^kX\,|\,X\in M, k \in \K_q\} \cong V_q\times M \cong \z_{q-1}\times S_3$.

$G_2 = \langle tI, A_{\xi_1},A_{\xi_2},\ldots,A_{\xi_m}, B \rangle = \langle tI, N, B \rangle = \langle V_q, K \rangle$. $G_2$ is the subgroups of all lower triangular matrices in $GL(2,q)$, and as $V_q\bigcap K = 1$, we have $G_2 \cong V_q\times K \cong \z_{q-1}\times (\z_p^m\rtimes \z_{q-1})$.

$G_3 = \langle tI, B_1C^2, B_3 \rangle = \langle tI, CB_1, B_3 \rangle = \langle CB_1, T \rangle$, where $T = \langle tI, B_3\rangle$. As $V_q\bigcap\langle B_3\rangle=1$, we have $T = {\{t^kB_3^m\,|\,k,m \in \K_q\}} \cong V_q\times \langle B_3\rangle \cong \z_{q-1}\times\z_{q-1}$. It is easy to check that $(t^kB_3^m)^{CB_1} = t^{k+m}B_3^{-m}\in T$, so $G_3 = \langle CB_1 \rangle\cdot T$ and $T\trianglelefteq G_3$. Furthermore, $|CB_1|=2$ and $T\bigcap\langle CB_1\rangle=1$, thus $G_3\cong T \rtimes \langle CB_1\rangle \cong (\z_{q-1}\times\z_{q-1})\rtimes\z_2$ and
\[ G_3 = \left\{\left. t^k\begin{bmatrix} 1 & 0 \\ t^m-1 & t^m \end{bmatrix},\quad
t^{k+m}\begin{bmatrix} -1 & -1 \\ 1-t^{-m} & 1 \end{bmatrix}\right|k,m \in \K_q\right\}.\]

$G_4 = \langle tI, B_1C, B_4 \rangle = \langle tI, B_1C, B \rangle = \langle B_1C, D \rangle$, where $D = \langle tI, B\rangle$. Since $V_q\bigcap\langle B\rangle=1$, we have $D = {\{t^kB^m\,|\,k,m \in \K_q\}} = $ \{all diagonal matrices in $GL(2,q)$\} $\cong V_q\times \langle B\rangle \cong \z_{q-1}\times\z_{q-1}$. Note that \[ \begin{bmatrix} \alpha & 0 \\ 0 & \beta \end{bmatrix}^{B_1C} =
\begin{bmatrix} \beta & 0 \\ 0 & \alpha \end{bmatrix} \in D, \]
so $G_4 = \langle B_1C \rangle\cdot D$ and $D\trianglelefteq G_4$. Since $|B_1C|=2$ and $D\bigcap\langle B_1C\rangle=1$, $G_4 \cong D \rtimes \langle B_1C\rangle \cong {(\z_{q-1}\times\z_{q-1})\rtimes\z_2}$. Actually $G_4$ is the subgroups of all diagonal and anti-diagonal matrices in $GL(2,q)$:
\[ G_4 = \left\{\left.\begin{bmatrix} \alpha & 0 \\ 0 & \beta \end{bmatrix}, \begin{bmatrix} 0 & \beta \\ \alpha & 0 \end{bmatrix}\right|\alpha,\beta\in\f_q^\times\right\}.\]

Calculating the intersections, we have
$G_{12} = \langle tI, B_1 \rangle \cong V_q\times \langle B_1\rangle$,
$\ G_{13} = \langle tI, CB_1 \rangle \cong V_q\times \langle CB_1\rangle$ and
$G_{14} = \langle tI, B_1C \rangle \cong V_q\times \langle B_1C\rangle$, all of which are isomorphic to $\z_{q-1}\times\z_2$. Also, $G_{23} = T,\ G_{24} = D$ and $G_{34} = G_{123} = G_{124} = \langle tI \rangle = V_q$.

From the calculation in Table~\ref{table:order}, Ingleton is violated when $q\geq 5$.

\subsection{Instances 2--5: Variants with Different $G_1$'s}
\label{subsec:vio2-5}

In all the instances in this section, only $G_1$ is different from Instance~1; it is now a \textit{proper} subgroup of $\langle tI, C, B_1 \rangle$ (see Table~\ref{table:2-5}, where the generator-form for these groups is used to better demonstrate the subgroup relations). When $p\neq2$, these instances are all distinct; however, when $p=2$, clearly Instances~3 and 4 collapse to Instance~2, while Instance~5 becomes Instance~1. From Table~\ref{table:order}, we can see that they all violate Ingleton when $q\geq 5$.

\begin{table}[!t]
\caption{$G_1$ for Instances 2--5}
\label{table:2-5}
\centering
\begin{tabular}{c|c|c|c|c}
  \hline
  Ins. No. & 2 & 3 & 4 & 5 \\
  \hline
  $G_1$ & $\langle C, B_1\rangle$ & $\langle -C, B_1 \rangle$ & $\langle C, -B_1\rangle$ & $\langle C, tB_1 \rangle$ \\
  \hline
\end{tabular}
\end{table}

\subsubsection{Instance 2}

$G_1 = M$.

$G_{12} = \langle B_1 \rangle$, $\ G_{13} = \langle CB_1 \rangle$ and $G_{14} = \langle B_1C \rangle$ are  all isomorphic to $\z_2$, and $G_{123} = G_{124} = 1$.

\subsubsection{Instance 3}

$G_1 = \langle -C, B_1 \rangle$.

We only consider the case $p\neq2$, since otherwise this is the same as Instance~2. As $|C|=3$, we have $(-C)^3=-I$ and $(-C)^4=C$. Thus $G_1 = \langle -I, C, B_1 \rangle = \langle -I, M \rangle \cong \langle -I \rangle \times M \cong \z_2\times S_3\cong D_{12}$, since $\langle -I \rangle$ is a subgroup of $V_q$ and intersects $M$ trivially. So $G_1 = \{\pm X\,|\,X\in M\}$.

Now $G_{12} = \langle -I, B_1 \rangle \cong \langle -I \rangle \times \langle B_1 \rangle$,
$\ G_{13} = \langle -I, CB_1 \rangle \cong \langle -I \rangle \times \langle CB_1 \rangle$ and
$G_{14} = \langle -I, B_1C \rangle \cong \langle -I \rangle \times \langle B_1C \rangle$,
all of which are isomorphic to $\z_2\times\z_2$. Furthermore, $G_{123} = G_{124} = \langle -I \rangle \cong \z_2$.

\subsubsection{Instance 4}

$G_1 = \langle C, -B_1\rangle$.

Here we also need only consider the case $p\neq2$. Observe that $|C|=3$, $\ \ordm{B_1}=2$ and $\left(C\cdot(-B_1)\right)^2=(CB_1)^2=I$. This gives us $G_1 = \left\{I,C,C^2,-B_1,-B_1C,-CB_1\right\}$, so $G_1 \cong D_6 \cong S_3$.

For the intersections, we have $G_{12} = \langle -B_1 \rangle$, $\ G_{13} = \langle -CB_1 \rangle$ and
$G_{14} = \langle -B_1C \rangle$ all isomorphic to $\z_2$, and $G_{123} = G_{124} = 1$.

\subsubsection{Instance 5}

$G_1 = \langle C, tB_1 \rangle$.

When $p=2$, $q$ is even. Since $|B_1|=2$ and $|t|=q-1$, we have $(tB_1)^q = tI$ and $(tB_1)^{q-1} = B_1$. Thus $G_1 = \langle tI,C,B_1 \rangle$ and this instance is the same as Instance~1.

Now assume $p\neq2$. As $q$ is odd, $|tB_1|=q-1$. When $k$ is even, $(tB_1)^k = t^kI$ and so $C^{(tB_1)^k} = C$. Otherwise $(tB_1)^k = t^kB_1$, then $C^{(tB_1)^k} = B_1CB_1 = C^{-1}$ since $(CB_1)^2=I$. So $G_1 = \langle tB_1 \rangle\cdot\langle C \rangle$ and $\langle C \rangle\trianglelefteq G_1$. Furthermore, $\langle tB_1 \rangle\bigcap\langle C \rangle=1$ and $|C|=3$, thus $G_1 \cong \langle C \rangle \rtimes \langle tB_1 \rangle \cong \z_3\rtimes\z_{q-1}$ and
$G_1 = \left\{t^kI,\, t^kC,\, t^kC^2\ \big|\ k \text{ even, } k\in\K_q\right\}
\bigcup\left\{t^kB_1,\, t^kB_1C,\, t^kCB_1\ \big|\ k \text{ odd, } k\in\K_q\right\}$.

The intersections are: $G_{12} = \langle tB_1 \rangle$, $\ G_{13} = \langle tCB_1 \rangle$ and $G_{14} = \langle tB_1C \rangle$ are all isomorphic to $\z_{q-1}$, and $G_{123} = G_{124} = \langle t^2I \rangle \cong \z_\fq$.

\subsection{Instances 6--11: Variants with Different $G_2$'s}
\label{subsec:vio6-11}

In all the instances in this section, only $G_2$ is different from Instance~1; it is now a \textit{proper} subgroup of $\langle tI, N, B \rangle$ (see Table~\ref{table:6-11}). It is easy to see that these instances are distinct when $p\neq2$; otherwise Instances~8 and 10 collapse to Instance~6, while Instances~9 and 11 become Instance~7. Thus in the analysis of Instances~8--11, we assume $p\neq2$. From Table~\ref{table:order}, Instances 6, 7, 10, 11 violate Ingleton whenever $q\geq 5$; however, if $p\neq2$, Instances 8 and 9 only violate Ingleton when in addition $\fq$ is even. Please refer to Section~\ref{subsec:app_89int} in Appendices for the calculation of subgroup intersections in Instances 8 and 9.

\begin{table}[!t]
\caption{$G_2$ for Instances 6--11}
\label{table:6-11}
\centering
\begin{tabular}{c|c|c|c|c|c|c}
  \hline
  Ins. No. & 6 & 7 & 8 & 9 & 10 & 11 \\
  \hline
  $G_2$ & $\langle N, B \rangle$ & $\langle N, P \rangle$ & $\langle N, B' \rangle$ & $\langle N, P' \rangle$ & $\langle -I, N, B \rangle$ & $\langle -I, N, P \rangle$ \\
  \hline
\end{tabular}
\end{table}

\subsubsection{Instance 6}

$G_2 = K$.

In this case, $G_{12} = \langle B_1 \rangle \cong \z_2$ and $G_{123} = G_{124} = 1$. Also $G_{23} = \langle B_3 \rangle$ and
$G_{24} = \langle B \rangle$, both of which are isomorphic to $\z_{q-1}$.

\subsubsection{Instance 7}

$G_2 = J$.

Here $G_{12} = \langle -B_1 \rangle \cong \z_2$, $G_{123} = G_{124} = 1$. Also, $G_{23} = \langle t^{-1}B_3 \rangle$ and $G_{24} = \langle  P \rangle$, both isomorphic to $\z_{q-1}$.

\subsubsection{Instance 8}

$G_2 = K'$.
\[ G_{12} = \left\{\begin{array}{cl}\langle B_1 \rangle \cong \z_2 & \text{if }\fq\text{ is even}\\
    \langle -I \rangle \cong \z_2 & \text{otherwise}\end{array}\right.,\quad
G_{123} = G_{124} = \left\{\begin{array}{cl}1 & \text{if }\fq\text{ is even}\\
    \langle -I \rangle \cong \z_2 & \text{otherwise}\end{array}\right.. \]
In this case, $G_{23} = \langle -B_3^\mfq \rangle$ and $G_{24} = \langle B' \rangle$ are both isomorphic to $\z_{q-1}$.

\subsubsection{Instance 9}

$G_2 = J'$.
\[ G_{12} = \left\{\begin{array}{cl}\langle -B_1 \rangle \cong \z_2 & \text{if }\fq\text{ is even}\\
    \langle -I \rangle \cong \z_2 & \text{otherwise}\end{array}\right.,\quad
G_{123} = G_{124} = \left\{\begin{array}{cl}1 & \text{if }\fq\text{ is even}\\
    \langle -I \rangle \cong \z_2 & \text{otherwise}\end{array}\right.. \]
Here $G_{23} = \langle tB_3^{\frac{q-3}{2}} \rangle$ and $ G_{24} = \langle P' \rangle$ are isomorphic to $\z_{q-1}$.

\subsubsection{Instance 10}

$G_2 = \langle -I, N, B \rangle$.

Now we have $G_2 = \langle -I, K \rangle \cong \langle -I \rangle \times K \cong \z_2\times (\z_p^m\rtimes \z_{q-1})$, since $\langle -I \rangle \bigcap K = 1$. Thus $G_2 = \{\pm X\,|\,X\in K\}$.

For the intersections, we have $G_{12} = \langle -I, B_1 \rangle \cong \z_2\times\z_2$ and
$G_{123} = G_{124} = \langle -I \rangle \cong \z_2$. Also,
$G_{23} = \langle -I, B_3 \rangle \cong \langle -I \rangle \times \langle B_3 \rangle$ and
$ G_{24} = \langle -I, B \rangle \cong \langle -I \rangle \times \langle B \rangle$, both isomorphic to $\z_2\times\z_{q-1}$.

\subsubsection{Instance 11}

$G_2 = \langle -I, N, P \rangle$.

Here $G_2 = \langle -I, J \rangle \cong \langle -I \rangle \times J \cong \z_2\times (\z_p^m\rtimes \z_{q-1})$, since $\langle -I \rangle \bigcap J = 1$. Thus $G_2 = \{\pm X\,|\,X\in J\}$.

Moreover, $G_{12} = \langle -I, -B_1 \rangle = \langle -I, B_1 \rangle \cong  \z_2\times\z_2$ and $G_{123} = G_{124} = \langle -I \rangle \cong \z_2$. Also, $G_{23} = \langle -I, t^{-1}B_3 \rangle \cong \langle -I \rangle \times \langle t^{-1}B_3 \rangle$ and $G_{24} = \langle -I, P \rangle \cong \langle -I \rangle \times \langle P \rangle$ are both isomorphic to $\z_2\times\z_{q-1}$.

\subsection{Instances 12--15}
\label{subsec:vio12-15}

\begin{table}[!t]
\caption{Subgroups for Instances 12--15}
\label{table:subgroups}
\centering
\begin{tabular}{c|c|c|c|c}
\hline
Ins. No. & $G_1$ & $G_2$ & $G_3$ & $G_4$\\
\hline
12 & $M$ & $\langle N,B \rangle^{\protect\phantom{E}}$ & $\langle N,P \rangle^E$ & $\langle N,P \rangle^Q$\\
\hline
13 & $M$ & $\langle N,B' \rangle^{\protect\phantom{E}}$ & $\langle N,P' \rangle^E$ & $\langle N,P' \rangle^Q$\\
\hline
14 & $M$ & $\langle N,P \rangle^E$ & $\langle N,B \rangle^{\protect\phantom{E}}$ & $\langle N,B \rangle^W$\\
\hline
15 & $M$ & $\langle N,P' \rangle^E$ & $\langle N,B' \rangle^{\protect\phantom{E}}$ & $\langle N,B' \rangle^W$\\
\hline
\end{tabular}
\end{table}

For these last four instances, $G_1$ is always $M$, $G_2$--$G_4$ are equal or conjugate to one of $K,K',J,J'$, as listed in Table~\ref{table:subgroups}. Thus $G_2$--$G_4$ are all semidirect products $\z_p^m\rtimes \z_{q-1}$ and the structures of $G_3$ and $G_4$ are different from all previous instances. The conjugators $E,Q,W$ and the elements of new subgroups are listed as follows.
\[ E = \begin{bmatrix} -1 & 1 \\ 1 & 0 \end{bmatrix},\qquad
Q = \begin{bmatrix} 2 & 1 \\ 1 & 0 \end{bmatrix},\qquad
W = \begin{bmatrix} 0 & 1 \\ -1 & 1 \end{bmatrix}. \]
\[ J^E = \langle N,P \rangle^E = 
\left\{\left.\begin{bmatrix} 1-v & v \\ 1-u-v & u+v \end{bmatrix}\right|
\begin{array}{l} u\in\f_q^\times, \\ v\in\f_q \end{array}\right\}, \]
\[ (J')^E = \langle N,P' \rangle^E = \left\{\left.
\begin{bmatrix} (-1)^j-\alpha & \alpha \\ (-1)^j-t^j-\alpha & t^j+\alpha \end{bmatrix}\right|
\begin{array}{l} \alpha\in\f_q, \\ j\in\K_q \end{array}\right\}, \]
\[ J^Q = \langle N,P \rangle^Q =
\left\{\left.\begin{bmatrix} 1+2y & y \\ 2(x-2y-1) & x-2y \end{bmatrix}\right|
\begin{array}{l} x\in\f_q^\times, \\ y\in\f_q \end{array}\right\}, \]
\[ (J')^Q = \langle N,P' \rangle^Q = \left\{\left.
\begin{bmatrix} (-1)^i+2\beta & \beta \\ 2\left(t^i-2\beta-(-1)^i\right) & t^i-2\beta \end{bmatrix}\right|
\begin{array}{l} \beta\in\f_q, \\ i\in\K_q \end{array}\right\}, \]
\[ K^W = \langle N,B \rangle^W =
\left\{\left.\begin{bmatrix} x & y \\ 0 & 1 \end{bmatrix}\right|
\begin{array}{l} x\in\f_q^\times, \\ y\in\f_q \end{array}\right\} =
\left\{\left.X^T\,\right|\,X\in J\right\}, \]
\[ (K')^W = \langle N,B' \rangle^W = \left\{\left.
\begin{bmatrix} t^i & \beta \\ 0 & (-1)^i \end{bmatrix}\right|
\begin{array}{l} \beta\in\f_q, \\ i\in\K_q \end{array}\right\} =
\left\{\left.X^T\,\right|\,X\in J'\right\}. \]

As mentioned in Table~\ref{table:conditionspq}, Instances 12--15 do not violate Ingleton when $p=3$. The reasons are as follows. If $p=3$, then $2=-1$, so $E=Q$ and $M\leq J^E$.
Thus in Instance~12 we have $G_3=G_4$ and $G_1\leq G_3$, while in Instances~13 and 14 we have $G_3=G_4$ and $G_1\leq G_2$ respectively. So these three instances satisfy Conditions~\ref{condition:distinct} and/or \ref{condition:subgroup}. Instance~15, however, satisfies Condition~\ref{condition:g1g2} in this case (see Section~\ref{subsec:app_15p3} in Appendices).

Besides, we also need $p\neq2$ to make Instances~13 and 15 distinct: otherwise they collapse to Instances~12 and 14 respectively. Thus in the rest of this section, we always assume $p\neq3$, while for Instances~13 and 15 we assume $p>3$. From Table~\ref{table:order}, Instances 12 and 14 violate Ingleton when $q\geq 5$ (and of course, $p\neq3$), while if $p\neq2$, Instances 13 and 15 only violate Ingleton when in addition $\fq$ is even. Please refer to Section~\ref{subsec:app_1215int} in Appendices for the intersection calculations.

\subsubsection{Instance 12}

$G_2 = K, G_3 = J^E, G_4 = J^Q$.

We have $G_{12} = \langle B_1 \rangle$, $\ G_{13} = \langle B_1C \rangle$ and $G_{14} = \langle CB_1 \rangle$ all isomorphic to $\z_2$, and $G_{34} = G_{123} = G_{124} = 1$. Furthermore,
\[ G_{23} = \left\{\left.\begin{bmatrix} 1 & 0 \\ 1-t^j & t^j \end{bmatrix}\right| j \in \K_q \right\} = \langle P \rangle^E,\quad
G_{24} =  \left\{\left.\begin{bmatrix} 1 & 0 \\ 2(t^i-1) & t^i \end{bmatrix}\right| i \in \K_q \right\} = \langle P \rangle^Q \]
both are isomorphic to $\z_{q-1}$.

\subsubsection{Instance 13}

$G_2 = K', G_3 = (J')^E, G_4 = (J')^Q$.

When $\fq$ is even, $G_{12},G_{13},G_{14}$ and $G_{34}$ are the same as in Instance 12. Otherwise $G_{12} = G_{13} = G_{14} = 1$ and $G_{34} = \langle -I \rangle \cong \z_2$. $G_{123}$ and $G_{124}$ are always trivial. Also,
\[ G_{23} = \left\{\left.\begin{bmatrix} (-1)^j & 0 \\ (-1)^j-t^j & t^j \end{bmatrix}\right|
j \in \K_q \right\} = \langle P' \rangle^E,\quad
G_{24} =  \left\{\left.\begin{bmatrix} (-1)^i & 0 \\ 2\left(t^i-(-1)^i\right) & t^i \end{bmatrix}\right|
i \in \K_q \right\} = \langle P' \rangle^Q \]
are both isomorphic to $\z_{q-1}$.

\subsubsection{Instance 14}

$G_2 = J^E, G_3 = K, G_4 = K^W$.

Observe that $G_2$ and $G_3$ are obtained from swapping the corresponding subgroups from Instance 12. Therefore $G_{12}$ and $G_{13}$ are also swapped while $G_{23}$ remains the same. It turns out that
$G_{14},G_{34},G_{123}$ and $G_{124}$ are also the same as in Instance 12. Furthermore,
\[ G_{24} =  \left\{\left.\begin{bmatrix} t^i & 1-t^i \\ 0 & 1 \end{bmatrix}\right| i \in \K_q \right\} = \langle B \rangle^W \cong \z_{q-1}. \]

\subsubsection{Instance 15}

$G_2 = (J')^E, G_3 = K', G_4 = (K')^W$.

In this case, $G_2$ and $G_3$ from Instance 13 are swapped to yield the corresponding subgroups here. So $G_{12}$ and $G_{13}$ are also swapped while $G_{23}$ stays the same. Moreover, $G_{14},G_{34},G_{123}$ and $G_{124}$ are the same as in Instance 13, both when $\fq$ is even and otherwise. Finally,
\[ G_{24} =  \left\{\left.\begin{bmatrix} t^i & (-1)^i-t^i \\ 0 & (-1)^i \end{bmatrix}\right| i \in \K_q \right\} = \langle B' \rangle^W \cong \z_{q-1}. \]

\section{Interpretation and Generalizations of Violation in $PGL(2,q)$ using Theory of Group Actions}
\label{section:Asch}
Instead of invertible matrices, we can also regard a general linear group as the group of all invertible linear transformations on a vector space. In this section, we take this point of view and consider the actions of linear groups on their corresponding projective geometries. Such actions induce a permutation representation for each general linear group on its projective geometry, and the projective linear groups are naturally defined in this framework. Using the theory of group actions, we show that the Ingleton violation in $PGL(2,q)$ from Section~\ref{section:pgl} has a nice interpretation: each subgroup is some sort of stabilizer for a set of points in the projective geometry. Furthermore, based on this understanding, we generalize the construction in $PGL(2,q)$ to two new families of Ingleton violations in $PGL(n,q)$ for a general $n$.\footnote{Note that with Lemma~\ref{lemma:preimg}, the families in $PGL(n,q)$ can also be easily extended to families of violations in $GL(n,q)$.} Finally, we provide an abstract construction in 2-transitive groups generalizing these ideas.

Throughout this section we assume basic knowledge in the theory of group actions, which can be found in standard group theory textbooks. In particular, we make extensive use of the orbit-stabilizer theorem, which says the order of the orbit of an element is equal to the index of it stabilizer (see e.g. \cite[Sec. 4.1, Prop. 2]{absAlg}). Most notations are standard abstract algebra notations, see e.g. \cite{absAlg}; the rest are introduced when they first appear. Note that this section is more abstract than the others and assumes more background knowledge in abstract algebra.

This section is mostly based on Prof. M.~Aschbacher's correspondences with us. We have furnished various details and explanations for clarity.

\subsection{Preliminaries for Linear Groups}

Let $V$ be an $n$-dimensional vector space over a field $F$. Recall $GL(V)$ and $SL(V)$ are the general linear group and special linear group on $V$, respectively. They are examples of groups of Lie type, a notion which is not totally well defined.

Each group $G$ of Lie type possesses a \emph{building}, a simplicial complex on which $G$ is represented as a group of automorphisms. A (abstract) \emph{simplicial complex} consists of a set $X$ of \emph{vertices} together with a collection of nonempty subsets of $X$ called \emph{simplices}; the only axiom says that each nonempty subset of a simplex is a simplex.

\begin{example}
Let $X$ be a partially ordered set. The \emph{order complex} of $X$ is the simplicial complex with vertex set $X$ and with the simplices the nonempty chains in the poset.
\end{example}

\begin{example}
The \emph{projective geometry} $PG(V)$ of $V$ is the poset of nonzero proper subspaces of $V$, partially ordered by inclusion. The building of $GL(V)$ and $SL(V)$ is the order complex of this poset. Of course $GL(V)$ permutes the subspaces of $V$, supplying a representation of $GL(V)$ on $PG(V)$ whose kernel is the subgroup of scalar maps. The images of $GL(V)$ and $SL(V)$ in the automorphism group $Aut(PG(V))$ are the \emph{projective general linear group} $PGL(V)$ and \emph{projective special linear group} $PSL(V)$. Write $GL(n,F)$, $SL(n,F)$, $PGL(n,F)$, $PSL(n,F)$ for the corresponding group when $\mathrm{dim}(V) = n$ and the field is $F$.
\end{example}

\begin{example}
Specialize to the case $n = 2$. Then $PG(V)$ consists of the \emph{points} of $V$; i.e. the 1-dimensional subspaces of $V$. This is the so-called \emph{projective line}. Let $\X = \{x_{1},x_{2}\}$ be a basis of $V$. We regard the projective line as $\Omega = F \cup \{\infty\}$, where $\infty$ denotes $Fx_{1}$ and for $e \in F$, $e$ denotes $F(ex_{1} + x_{2})$. Then given an invertible matrix
\[ M(a,b,c,d) = \begin{bmatrix} a & b \\ c & d \end{bmatrix} \]
in $GL(V)$, one can check that, subject to the identification of $PG(V)$ with $\Omega$, $M(a,b,c,d)$ acts on $\Omega$ via
\[ M(a,b,c,d): x \mapsto \frac{ax+b}{cx+d}, \]
where arithmetic involving $\infty$ is suitably interpreted; e.g. $(a\infty + b)/(c\infty + d) = a/c$ if $c \neq 0$ and $\infty$ if $c = 0$. So we can regard $PGL(V ) = PGL(2,F)$ as the group of these projective linear maps $M(a, b, c, d)$, $ad - bc \neq 0$ on the projective line $\Omega$.

The following result is well known and easy to prove:

\begin{lemma}\label{lem:sh3trans}
$PGL(2,F)$ is \emph{sharply 3-transitive} on the projective line $PG(V)$. That is, $PGL(V)$ is transitive on ordered 3-tuples of distinct points, and only the identity fixes three points.
\end{lemma}

Next we introduce several types of subgroups for these linear groups.

A \emph{Borel subgroup} of a group $G$ of Lie type is the stabilizer of a maximal simplex in its building.

\begin{example}
A maximal simplex in $PG(V)$ is a flag $\tau = ( 0<V_{1} <\cdots<V_{n-1} <V )$, where $\mathrm{dim}(V_{k}) = k$. If we pick a basis $\X = \{x_{1},...,x_{n}\}$ for $V$ such that $V_{k} = \langle x_{i} : 1 \leq i \leq k \rangle$, then the Borel subgroup stabilizing $\tau$ is the subgroup whose matrices with respect to $\X$ are the upper triangular invertible matrices.
\end{example}

Let $G = PGL(2,F)$. By definition, the stabilizers $G_{Fx_{1}} = G_{\infty}$ and $G_{Fx_{2}} = G_{0}$ are both Borel subgroups of $G$. The matrices of these subgroups are upper triangular and lower triangular respectively. As $G$ is transitive on $\Omega$, for each of $u=\infty,0$ we have the bijection $gG_{u} \mapsto g(u)$ of the coset space $G/G_{u}$ with $\Omega$ (by orbit-stabilizer theorem).
\end{example}

Buildings have certain special subcomplexes called \emph{apartments}. For a group $G$ of Lie type, the pointwise stabilizer of an apartment is called a \emph{Cartan subgroup} of $G$.

\begin{example}
In the projective geometry, the apartments are of the form $\Sigma(\X)$ for $\X = \{x_{1},\cdots,x_{n}\}$ a basis for $V$, where $\Sigma(\X)$ consists of the subspaces spanned by nonempty proper subsets of $\X$. The matrices in the Cartan subgroup stabilizing $\Sigma(\X)$ are the diagonal matrices.

Suppose $n = 2$. Then $\Sigma(\X) = \{Fx_{1},Fx_{2}\} = \{\infty,0\}$ is just a pair of points. The \emph{global stabilizer} $G(u,v)$ of a pair of points is the subgroup of $G$ permuting the 2-subset $\{u, v\}$. In $G = PGL(2,F)$ it is (usually) the normalizer of the Cartan subgroup and dihedral. Furthermore, $G_{0} \cap G(0,\infty) = G_{0,\infty}$ is a Cartan subgroup isomorphic to the multiplicative group $F^{\times}$ of $F$.
\end{example}

Let $G$ be $GL(V)$ or $PGL(V)$ in the rest of this section.

An element of $GL(V)$ is {\it unipotent} if all its eigenvalues are 1. A subgroup of $GL(V)$ is {\it unipotent} if all its elements are unipotent. The {\it unipotent radical} $Q(H)$ of a subgroup $H$ of $GL(V)$ is the largest normal unipotent subgroup of $H$. For example if $F$ is finite of characteristic $p$, then $Q(H)$ is the largest normal $p$-subgroup of $H$. Passing to images in $PGL(V)$, we have the corresponding notions in that group also.

A subgroup $H$ of $G$ is a {\it parabolic} if $H$ is the stabilizer of a simplex in the projective geometry $PG(V)$. Thus for example Borel subgroups are parabolics, and indeed the parabolics are the overgroups of the Borel subgroups.

\begin{example}\label{example:parabolic}
Let $F=\mathbb{F}_q$, $U$ an $m$-dimensional subspace of $V$ with $0<m<n$, $G=GL(V)$, and $H=N_G(U)$ the (global) stabilizer of $U$ in $G$. As $\{U\}$ is a simplex in $PG(V)$, $H$ is a parabolic. Pick a complement $W$ to $U$ in $V$, and let $\X_{1}$ and $\X_{2}$ be bases for $U$ and $W$ respectively. Then the matrices of $H$ with respect to $\X_{1}\cup\X_{2}$ have the form
$\begin{bmatrix} K & L \\ 0 & R \end{bmatrix}$
with $K$ and $R$ invertible. Define
\[ q_n=q^{n(n-1)/2},\quad M_k=\prod_{i=1}^k(q^i-1) \]
for $1\leq k\leq n$, then
\[ |GL(k,q)|=q_{k}M_{k}, \]
\[ |H|=|GL(m,q)|\cdot|GL(n-m,q)|\cdot q^{m(n-m)}=q_nM_mM_{n-m}. \]
Furthermore, in $PGL(V)$ the image of $H$ has order $q_nM_mM_{n-m}/(q-1)$.
\end{example}

\subsection{Interpretation of the Ingleton Violation in $PGL(2,q)$}
Let $F = \mathbb{F}_{q}$ and $G = PGL(2,q) = PGL(2,\mathbb{F}_{q})$. In the Ingleton violation construction in Section~\ref{section:pgl} we have a 4-tuple of subgroups $\rho = (G_{i} : 1 \leq i \leq 4)$ of $G$. The group $G_{2} = G_{Fx_{2}} = G_{0}$ is a Borel subgroup. The subgroups $G_{3}$ and $G_{4}$ are isomorphic to the dihedral group $D_{2(q-1)}$ of order $2(q-1)$, and their intersection $G_{2i}$ with $G_{2}$ is cyclic of order $q-1$ and with $G_{34}$ of order 1. This forces $G_{2i}$, $i = 3,4$, to be distinct Cartan subgroups of $G_{2}$, and hence $G_{i} = G(0,e_{i})$ for some $e_{i} \in F$. In fact from the forms of the matrices in $G_{3}$ and $G_{4}$ it is easy to check that $e_{3} =-1$ and $e_{4} =\infty$.

Finally $G_{1} \cong S_{3}$ with $G_{1i}$ being the three subgroups of $G_{1}$ of order 2 for $2 \leq i \leq 4$. For $2\leq i\leq 4$ let $G_{1i} = \langle t_{i} \rangle$, and for $1\leq j\leq 4$ let $\Delta_{j}$ be the orbit of $G_{j}$ on $\Omega$ containing $0$. Then $|\Delta_{j}|=|G_{j} :G_{2j}|=n_{j}$ where $n_{3} =n_{4} =2$ and $n_{1} =3$. Indeed $\Delta_{i} =\{0,t_{i}(0)\}$ for $i=3,4$, with $\Delta_{3} =\{0,-1\}$ and $\Delta_{4} = \{0,\infty\}$. Then as $G_{1} =\langle t_{3},t_{4}\rangle$ and $n_{1} =3$, $\Delta=\Delta_{1} =\{0,-1,\infty\}$. But as $G$ is sharply 3-transitive, the global stabilizer $G(\Delta)$ is isomorphic to $S_{3}$. Hence $G_{1} = G(\Delta)$, and is determined by $G_{2}$, $G_{3}$ and $G_{4}$.

Hence the 4-tuple $\rho$ is determined by the ordered triple $(0,-1,\infty)$ with the four subgroups being various (global) stabilizers on it. Furthermore, given an arbitrary ordered triple $(\alpha,\beta,\gamma)$ of distinct points in $\Omega$, we can construct a 4-tuple $\rho'$ in the same fashion, where $G_2=G_{\alpha}$, $G_3=G(\alpha,\beta)$, $G_4=G(\alpha,\gamma)$, and $G_1=G(\alpha,\beta,\gamma)$. Since $G$ is 3-transitive on $\Omega$, by the same element in $G$ all four subgroups in $\rho'$ are conjugate to their counterparts in $\rho$. In particular, the new tuple $\rho'$ also violates Ingleton.

With respect to the ``flower structure'' of $G_{2} = G_{0}$, this follows from the fact that $G_{0}$ is a Frobenius group on $\Omega' = \Omega-\{0\}$. That is, $G_{0}$ is a transitive permutation group on $\Omega'$ in which the maximum number of fixed points of a nonidentity element is 1. (This is guaranteed by the sharp 3-transitivity of $G$.) Then by a theorem of Frobenius, the identity 1 of $G_{0}$, together with the set of elements with no fixed points, forms a normal subgroup $K$ called the \emph{Frobenius kernel} of the Frobenius group. In our case, $K$ is the subgroup $N$ in Sections~\ref{section:presentation} and \ref{section:pgl}, which is the unipotent radical of the Borel subgroup $G_{0}$ and is isomorphic to the additive group of the field $F$. Also $G_{0}-K$ is partitioned by the sets $G_{0,a}-\{1\}, a \in \Omega'$; these are the $|\Omega'| = q$ petals in the flower. The subgroups $G_{0,a}$ are the $q$ Cartan subgroups contained in $G_{0}$, and each is isomorphic to $F^{\times}$.

\subsection{Generalizations in $PGL(n,q)$}

Let $\tau = (G_i : 1\leq i\leq 4)$ be a family of subgroups of a finite group $G$. The Ingleton inequality~\eqref{equation:Ingletong} fails  iff
\[ |G_1G_2|<\frac{|G_{13}G_{23}||G_{14}G_{24}|}{|G_{34}|}. \]
In all constructions we will consider in this section, $G_i=G_{1i}G_{2i}$ for $i=3,4$ and $|G_3|=|G_4|$. Also $|G_1G_2|=|G_1:G_{12}||G_2|$. Hence in such constructions Ingleton is violated iff
\begin{equation}\label{equation:IngletonVioCond}
|G_1:G_{12}||G_2|<\frac{|G_3|^2}{|G_{34}|},
\end{equation}
and the Ingleton ratio~\eqref{equation:Ingletonr} becomes
\[ r(\tau)=\frac{|G_3|^2}{|G_1:G_{12}||G_2||G_{34}|}. \]

Now we explore three different approaches trying to extend the $PGL(2,q)$ family of violations $\rho$ to $PGL(n,q)$.

\subsubsection{Generalization 1}

Let $G=PGL(n,q)$ with $n\geq 3$. It is easy to see that $G$ is doubly transitive on the points of $PG(V)$ and transitive on triples of independent points. Let $P_i$, $2\leq i\leq 4$, be independent points in $V$, $\Delta_i=\{P_2,P_i\}$ for $i=3,4$, and $\Delta =\{P_2,P_3,P_4\}$. Set $G_2=N_G(P_2)$, $G_i=N_G(\Delta_i)$, $i=3,4$, and $G_1=N_G(\Delta)$. Let $\tau = (G_i : 1\leq i\leq 4)$.

Now $G_2$ is a parabolic and by Example~\ref{example:parabolic},
\begin{equation}\label{equation:3indepPtsG2}
|G_2|=q_nM_{n-1}.
\end{equation}
Next $D=P_2+P_3+P_4$ is a 3-dimensional subspace of $V$, so by Example~\ref{example:parabolic} again, $|N_G(D)|=q_nM_3M_{n-3}/(q-1)$. Further through calculation of the preimages in $GL(n,q)$ we have
\[ |N_G(D):G_1|=\frac{|GL(3,q)|}{6(q-1)^3} = \frac{q^3M_3}{6(q-1)^3}, \]
since $G_{1}$ acts as the symmetric group on $\Delta$ of order 3, and for each pair of points there are $q-1$ different choices of mappings. So
\begin{equation}\label{equation:3indepPtsG1}
|G_1|=\frac{|N_G(D)|\cdot 6(q-1)^3}{q^3M_3}=\frac{6q_nM_{n-3}(q-1)^2}{q^3}.
\end{equation}
As $G_1$ is transitive on $\Delta$ of order 3, $|G_1:G_{12}|=3$. Therefore
\begin{equation}\label{equation:3indepPtsLHS}
|G_1:G_{12}||G_2|=3|G_2|=3q_nM_{n-1}.
\end{equation}
Also for $i=3,4$, $G_{i}$ and $G_{1i}$ are both transitive on $\Delta_i$ of order 2, so $|G_{i}: G_{2i}| = |G_{1i}: G_{12i}| = 2$. Thus $|G_{1i}G_{2i}| = |G_{1i}: G_{12i}||G_{2i}| = |G_{i}|$ and  $G_i=G_{1i}G_{2i}$ for $i=3,4$. Since $G$ is doubly transitive on the points, $G_{3}$ is conjugate to $G_{4}$ and so $|G_3|=|G_4|$. Further $U=P_2+P_3$ is a 2-dimensional subspace of $V$, so by Example~\ref{example:parabolic}, $|N_G(U)|=q_nM_2M_{n-2}/(q-1)$. Also by calculating the preimages $|N_G(U):G_3|=|GL(2,q)|/(2(q-1)^2)=qM_2/(2(q-1)^2)$, so
\begin{equation}\label{equation:3indepPtsG3}
|G_3|=\frac{|N_G(U)|\cdot 2(q-1)^2}{qM_2}=\frac{2q_nM_{n-2}(q-1)}{q}.
\end{equation}
Finally $G_{34}=G_{\Delta}$ is the pointwise stabilizer of $\Delta$. Since $G_1$ is 3-transitive on $\Delta$, $|G_1:G_{34}| = 3! =6$. So by \eqref{equation:3indepPtsG1}:
\begin{equation}\label{equation:3indepPtsG34}
|G_{34}|=\frac{q_nM_{n-3}(q-1)^2}{q^3}.
\end{equation}
It follows from \eqref{equation:3indepPtsLHS}, \eqref{equation:3indepPtsG3}, and \eqref{equation:3indepPtsG34} that \eqref{equation:IngletonVioCond} is satisfied iff
\[ 3q_nM_{n-1}<\frac{4q_n^2M_{n-2}^2(q-1)^2\cdot q^3}{q^2\cdot q_nM_{n-3}(q-1)^2}=4q_nqM_{n-2}(q^{n-2}-1) \]
which holds iff $3(q^{n-1}-1)<4q(q^{n-2}-1)$ iff
\begin{equation}\label{equation:3indepPtsVioCond}
q^{n-1} - 4q + 3 > 0.
\end{equation}
This inequality holds when $n\geq 4$ or $n=3$ and $q\geq 4$.

Since $G$ is transitive on all triples of independent points, all 4-tuples in this generalization are conjugate to each other.

The Ingleton ratio is
\[ r(\tau) =\frac{4q_n^2M_{n-2}^2(q-1)^2\cdot q^3}{q^2\cdot 3q_nM_{n-1}\cdot q_nM_{n-3}(q-1)^2}=\frac{4q(q^{n-2}-1)}{3(q^{n-1}-1)}, \]
which approaches $4/3$ for large $q$ or $n$. Whereas in the original instance $\rho$, $r(\rho)=4(q-1)/(3q)$, which has the same asymptotics. But the scaling factors for both the Ingleton score and the violation index are usually larger than $PGL(2,q)$, so in general $\tau$ is less effective in violating Ingleton.

\subsubsection{Generalization 2}

As usual let $F=\f_{q}$ and $G=PGL(n,q)$, with $n\geq 2$. Let $P_i$, $2\leq i\leq 4$, be distinct but dependent points in $V$. Thus $P_{i}=Fx_{i}$, $i=2,3$, for two independent vectors $x_{2},x_{3} \in V$, and $P_{4}=Fx_{4}$, where $x_{4} = ex_{2}+x_{3}$ for some $e\in F$. Let $U$, $\Delta$, $\Delta_i$, $i=3,4$, and $G_i$, $1\leq i\leq 4$, be defined the same as in Generalization~1. Note that when $n=2$ this is our original construction $\rho$.

From Generalization~1, $|G_2|=q_nM_{n-1}$ and $|N_G(U)|=q_nM_2M_{n-2}/(q-1)$. Since $U$ is a 2-dimensional subspace of $V$, $PGL(U)$ is sharply 3-transitive on the points of $U$ by Lemma~\ref{lem:sh3trans}. Now as $\Delta$ is a set of three distinct points in $U$, its global stabilizer in $PGL(U)$ is isomorphic to $S_{3}$. Thus $G_{1}$ is 3-transitive on $\Delta$. Observe that each vector in $\{x_{i}:2\leq i\leq 4\}$ is a unique linear combination of the other two, with both coefficients nonzero. Then fixing a permutation of $\{P_i:2\leq i\leq 4\}$, there are only $q-1$ linear transformations in $GL(U)$ that respect this permutation. Hence $|N_G(U):G_1|=|GL(2,q)|/(6(q-1)) = qM_2/(6(q-1))$, and
\begin{equation}\label{equation:3depPtsG1}
|G_1|=\frac{|N_G(U)|\cdot 6(q-1)}{qM_2}=\frac{6q_nM_{n-2}}{q}.
\end{equation}
$G_1$ is transitive on $\Delta$, while for $i=3,4$, $G_{i}$ and $G_{1i}$ are both transitive on $\Delta_i$. $G$ is doubly transitive on the points of $PG(V)$. Thus from arguments in Generalization~1 we have $|G_1:G_{12}||G_2|=3q_nM_{n-1}$, $G_i=G_{1i}G_{2i}$ for $i=3,4$, and $|G_3|=|G_4|$. Also $|G_3|=2q_nM_{n-2}(q-1)/q$. Since $G_{34}=G_{\Delta}$ is of index 6 in $G_1$, by \eqref{equation:3depPtsG1}:
\[ |G_{34}|=\frac{q_nM_{n-2}}{q}. \]
Thus \eqref{equation:IngletonVioCond} is satisfied iff
\[ 3q_nM_{n-1}<\frac{4q_n^2M_{n-2}^2(q-1)^2\cdot q}{q^2\cdot q_nM_{n-2}}=\frac{4q_nM_{n-2}(q-1)^2}{q} \]
which holds iff $3q(q^{n-1}-1)<4(q-1)^2$ iff
\begin{equation}\label{equation:3depPtsVioCond}
3q\sum_{i=0}^{n-2}q^{i} - 4q + 4 < 0.
\end{equation}
When $n=2$, this inequality holds iff $q> 4$. When $n>2$, however, it always fails because $3q^{2} - q + 4 > 0$ for all $q$.

Therefore, the original instance $\rho$ is the only successful case in this construction, with Ingleton ratio $r(\rho)=4(q-1)/(3q)$.

\subsubsection{Generalization 3}

Again take $G=PGL(n,q)$ with $n\geq 3$. Let $U_2$ be a point of $V$, $U_i$, $i=3,4$, distinct 2-dimensional subspaces of $V$ with $U_3\cap U_4=U_2$, and $U_1=U_3+U_4$ the 3-dimensional subspace of $V$ generated by $U_3$ and $U_4$. Set $G_i=N_G(U_i)$ for $1\leq i\leq 4$, and $\lambda =(G_i : 1\leq i\leq 4)$. Then all the $G_i$ are parabolics with $|G_2|=q_nM_{n-1}$ from \eqref{equation:3indepPtsG2}, $|G_3|=|G_4|=q_nM_2M_{n-2}/(q-1)$, and $|G_1|=q_nM_3M_{n-3}/(q-1)$. As $G_1$ is transitive on the $(q^{3}-1)/(q-1) = q^2+q+1$ points in $U_1$, $|G_1:G_{12}|=q^2+q+1$, so
\[ |G_1:G_{12}||G_2|=(q^2+q+1)q_nM_{n-1}. \]
For $i=3,4$, $G_{i}$ and $G_{1i}$ are both transitive on the $(q^{2}-1)/(q-1) = q+1$ points in $U_i$, so $G_i=G_{1i}G_{2i}$ for $i=3,4$. Also $G_{34}$ is the subgroup of $G$ fixing $U_2$ and the points $U_3/U_2$ and $U_4/U_2$ of the quotient space $U_1/U_2$; in particular it is a subgroup of $G_{1}$. If we pick a basis $\X_{1}=\{x_{3},x_{2},x_{4}\}$ for $U_{1}$ such that $U_{2}=\langle x_{2} \rangle$ and $U_{i}=\langle x_{2}, x_{i} \rangle$ for $i=3,4$, then elements of $G_{34}$ correspond to the linear transformations in $GL(U_{1})$ whose matrices with respect to $\X_{1}$ take the form
\[ \begin{bmatrix}
a & 0 & 0 \\
x & b & y \\
0 & 0 & c
\end{bmatrix}, \]
where $a,b$ and $c$ are nonzero. So $|G_1:G_{34}|=|GL(3,q)|/(q^2(q-1)^3)=qM_3/(q-1)^3$, and
\[ |G_{34}|=\frac{|G_1|}{qM_3/(q-1)^3}=\frac{q_nM_3M_{n-3}\cdot (q-1)^3}{(q-1)\cdot qM_3}=\frac{q_nM_{n-3}(q-1)^2}{q}. \]
It follows that \eqref{equation:IngletonVioCond} is satisfied iff
\[ (q^2+q+1)q_nM_{n-1}<\frac{q_n^2M_2^2M_{n-2}^2\cdot q}{(q-1)^2\cdot q_nM_{n-3}(q-1)^2}=q_nq(q+1)^2(q^{n-2}-1)M_{n-2}, \]
which holds iff $(q^2+q+1)(q^{n-1}-1)<q(q+1)^2(q^{n-2}-1)$ iff
\[ q^n-q^3-q^2+1>0, \]
which holds iff $n\geq 4$.

The Ingleton ratio is
\[ r(\lambda) = \frac{q_n^2M_2^2M_{n-2}^2\cdot q}{(q-1)^2\cdot (q^2+q+1)q_nM_{n-1}\cdot q_nM_{n-3}(q-1)^2}=\frac{q(q+1)^2(q^{n-2}-1)}{(q^2+q+1)(q^{n-1}-1)}, \]
which approaches $1$ for large $q$ and $(q+1)^2/(q^2+q+1)$ (which is smaller than $4/3$) for large $n$. So this generalization seems less effective than the other two.

\subsection{Generalizations in General 2-transitive Groups}
\label{subsec:2-trans}

In the following we generalize the Ingleton violation $\rho$ in $PGL(2,q)$ to a more abstract construction, which includes Generalizations~1 and 2 as special cases.

Let $G$ be a doubly transitive group on a set $\Omega$ of order $l\geq 3$, let $\alpha$ and $\beta$ be distinct points in $\Omega$, and assume $\gamma\in\Omega -\{\alpha,\beta\}$ such that the global stabilizer $G(\Delta)$ of $\Delta =\{\alpha,\beta,\gamma\}$ acts as the symmetric group on $\Delta$ (which is clearly the case when $G$ is 3-transitive). Let $G_2=G_{\alpha}$, $G_3=G(\alpha,\beta)$, $G_4=G(\alpha,\gamma)$, and $G_1=G(\Delta)$. Set $\mu = (G_i : 1\leq i\leq 4)$.

Let $k=\big|G_{\alpha,\beta}\big|$, $d=|G_{\Delta}|$, $\Gamma$ the orbit of $\gamma$ under the action of $G_{\alpha,\beta}$, and $c=|\Gamma|$. Observe that $c=|G_{\alpha,\beta}:G_{\Delta}|=k/d$ and $c\leq l-2$ as $\Gamma\subseteq\Omega -\{\alpha,\beta\}$. Further $c=l-2$ iff $G$ is 3-transitive.

Since $G$ is 2-transitive on $\Omega$, $G_2$ is transitive on $\Omega -\{\alpha\}$ and so $|G_2:G_{\alpha,\beta}| = l-1$. Also $|G_1:G_{12}|=3$ as $G_1$ is transitive on $\Delta$, thus 
\[ |G_1:G_{12}||G_2|=3|G_2|=3(l-1)k. \]
Next $G_{3}$ is conjugate to $G_{4}$ by 2-transitivity of $G$ and for $i=3,4$, $G_{i}$ and $G_{1i}$ are both transitive on $\Delta_i$ of order 2, so $G_{1i}G_{2i}=G_i$ and $|G_i|=2k$ for $i=3,4$. Finally $G_{34}=G_{\Delta}$ is of order $d$. Thus
\[ |G_3|^2/|G_{34}|=4k^2/d=4kc, \]
so condition \eqref{equation:IngletonVioCond} is satisfied iff $3(l-1)k<4kc$ iff
\begin{equation}\label{equation:2transVioCond}
3(l-1) < 4c.
\end{equation}
Further the Ingleton ratio $r(\mu)=4c/(3(l-1))$.

If $G$ is 3-transitive then $c=l-2$, so $3(l-1)<4c=4(l-2)$ iff $l>5$. Further $r(\mu)=4(l-2)/(3(l-1))$.

Both Generalization~1 and 2 fit in this construction, with  $\rho$ being the only 3-transitive case. In Generalization~1, $l=(q^n-1)/(q-1)$ and by independence of points in $\Delta$,
\[ c=\frac{(q^n-1)-(q^2-1)}{q-1}=\frac{q^2(q^{n-2}-1)}{q-1}, \]
so by \eqref{equation:2transVioCond}, \eqref{equation:IngletonVioCond} is satisfied iff
\[ 3(\frac{q^n-1}{q-1}-1)<\frac{4q^2(q^{n-2}-1)}{q-1}, \]
which gives \eqref{equation:3indepPtsVioCond}. In Generalization~2, $l$ has the same value, but since $GL(U)$ is 3-transitive on the $(q^{2}-1)/(q-1) = q+1$ points of $U$, $c = q+1-2 = q-1$. Then by \eqref{equation:2transVioCond}, \eqref{equation:IngletonVioCond} is satisfied iff
\[ 3(\frac{q^n-1}{q-1}-1)<4(q-1), \]
which gives \eqref{equation:3depPtsVioCond}.

We see that the 3-transitive groups give rise to simple and effective Ingleton violation constructions. This category of groups include the alternating and symmetric groups, the groups $PGL(2,q)$ with $l=q+1$, the Mathieu groups, the affine groups of degree $2^e$ (which are the semidirect product of an $e$-dimensional vector space $E$ over $\mathbb{F}_2$ by $GL(E)$), and the subgroup of the affine group for $e=4$ where the complement is $A_7$ rather than $GL(4,2)\cong A_8$.

\section{Considerations for Constructing Group Network Codes}
\label{section:gnc}

We can use our Ingleton-violating groups to build group network codes. From Appendix~\ref{section:gnc-detail}, the resulting entropy vectors are characterizable by the subgroups used, thus they are capable of violating the Ingleton inequality. In contrast, the entropy vectors of linear network codes always respect Ingleton. Furthermore, let $G$  be any of $PGL(n,p)$, $PGL(n,q)$, $GL(n,p)$ or $GL(n,q)$. We will show in the following that linear network codes can be embedded in the group network codes constructed with direct products of copies of $G$. Apparently a direct product of any copies of an Ingleton-violating group still violates Ingleton, thus such classes of group network codes are strictly more powerful than linear network codes.

To construct a group network code, the choices of subgroups are not arbitrary: they should meet requirements (R1)--(R3). In particular, (R1) limits what subgroups can be associated with the sources: they need to satisfy
\begin{equation}\label{equation:indep}
\prod_{s\in\Sr}|G_s| = |G|^{|\Sr|-1}|G_\Sr|.
\end{equation}
When this is the case, we simply say the subgroups $\{G_s:s\in\Sr\}$ are independent in $G$. We will study the constructions of independent source subgroups in the context of $PGL(2,q)$ and $GL(2,q)$ (since they have simpler structures than the other higher-degree linear groups), and also provide a universal source subgroup construction for direct products of groups.

\subsection{Embeddings of Linear Network Codes}
\label{subsec:embedding}

As discussed in Appendix~\ref{subsec:lnc-inclusion}, linear network codes are a special type of group network codes. In particular, they are determined by the underlying additive group structure. The direct sum $V$ of source vector spaces can be called the \emph{ambient vector space} of a linear network code. Let $(V,+)$ denote the additive group of $V$. If we can find a finite group $G$ such that $(V,+)\leq G$, then the linear network code is said to be \emph{embedded} in the group network codes using $G$, since we can use subgroups of $G$ to construct an equivalent group network code.

Consider a linear network code with ambient vector space $V=\f_q^d$ for some $d$ and $q$, where $q=p^m$ for some prime $p$ and some integer $m$. Observing that $\f_q$ is an $m$-dimensional vector space over $\f_p$, we can establish the following facts:
\begin{enumerate}
	\item[i)] $\left(\f_p,+\right) \cong \z_p$,
	\item[ii)] $\left(\f_q,+\right) \cong \left(\f_p,+\right)^m \cong \z_p^m$,
	\item[iii)] $(V,+) \cong \left(\f_q,+\right)^d \cong \z_p^{md}$.
\end{enumerate}
Thus $(V,+)$ is embedded in the direct product of $m \cdot d$ copies of a group $G$, provided that $G$ contains an element of order $p$---by Cauchy's theorem, this condition is equivalent to $p$ divides $|G|$. It then follows that linear network codes over $\f_q$ are embedded in the group network codes using direct products of copies of $G^m$. In particular, let $G$ be any of the linear groups $PGL(2,p)$, $PGL(2,q)$, $GL(2,p)$ or $GL(2,q)$. We have the following embeddings in these groups, using properties of the matrix $A$ and the subgroup $N$:
\begin{enumerate}
	\item In $PGL(2,p)$, $\big|\A\big| = p$. So $(V,+) \cong \big\langle\A\big\rangle^{md} \leq PGL(2,p)^{md}$.
	\item In $GL(2,p)$, $|A| = p$. So $(V,+) \cong \langle A\rangle^{md} \leq GL(2,p)^{md}$.
	\item In $PGL(2,q)$, $N = \left\{\left.\As{\alpha}\, \right| \alpha\in\f_q\right\} \cong \z_p^m$. So $(V,+) \cong N^d \leq PGL(2,q)^d$.
	\item In $GL(2,q)$, $N = \left\{\left.A_\alpha \right| \alpha\in\f_q\right\} \cong \z_p^m$. So $(V,+) \cong N^d \leq GL(2,q)^d$.
\end{enumerate}
Therefore, we also have the corresponding network code embeddings. Furthermore, these  results for the degree-$2$ linear groups are easily extended to degree $n$, since the former are subgroups of the latter.

\subsection{Sources Independence Requirement Considerations}

If we want to utilize the Ingleton-violating groups $PGL(2,q)$ and $GL(2,q)$ to construct network codes, we need to find their independent subgroups. GAP searching shows that up to conjugation, $PGL(2,5)$ has 16 independent pairs of subgroups, 1 triple and no quadruple. For $GL(2,5)$, the numbers are 86, 14 and 0, respectively. It might be desirable to use some of the Ingleton-violating subgroups as sources, but we find no independent pairs in any violation instance in either $PGL(2,5)$ or $GL(2,5)$. Furthermore, we can prove the following negative results:

\begin{lemma}
Let $i,j \in \{1,2,3,4\}$ and $(i,j) \ne (3,4)$. For four random variables $X_1$, $X_2$, $X_3$ and $X_4$, if $X_i$ and $X_j$ are independent, then the Ingelton inequality \eqref{equation:Ingletonh} is satisfied.
\end{lemma}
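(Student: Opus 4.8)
The plan is to recast the Ingleton inequality \eqref{equation:Ingletonh} into the equivalent mutual-information form
\[ I(X_1;X_2) \le I(X_1;X_2|X_3) + I(X_1;X_2|X_4) + I(X_3;X_4). \]
This is obtained purely by algebra: writing $I(X_1;X_2)=h_1+h_2-h_{12}$, $I(X_1;X_2|X_3)=h_{13}+h_{23}-h_3-h_{123}$, $I(X_1;X_2|X_4)=h_{14}+h_{24}-h_4-h_{124}$, and $I(X_3;X_4)=h_3+h_4-h_{34}$, one checks that the displayed inequality rearranges term-by-term into \eqref{equation:Ingletonh}. The advantage of this form is that its right-hand side is a sum of three nonnegative quantities, so it suffices to bound $I(X_1;X_2)$ by any one of them. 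Using the subscript symmetries $1\leftrightarrow 2$ and $3\leftrightarrow 4$, under which this form is invariant, every admissible pair $\{i,j\}\ne\{3,4\}$ reduces to one of the two representative cases $\{i,j\}=\{1,2\}$ or $\{i,j\}=\{1,3\}$.

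For the case $\{1,2\}$, independence of $X_1$ and $X_2$ gives $I(X_1;X_2)=0$, and since the right-hand side above is nonnegative the inequality holds immediately. For the case $\{1,3\}$, I would invoke the interaction-information identity
\[ I(X_1;X_2) - I(X_1;X_2|X_3) = I(X_1;X_3) - I(X_1;X_3|X_2), \]
which follows since both sides equal $h_1+h_2+h_3+h_{123}-h_{12}-h_{13}-h_{23}$. Independence of $X_1$ and $X_3$ forces $I(X_1;X_3)=0$, so the right side equals $-I(X_1;X_3|X_2)\le 0$, whence $I(X_1;X_2)\le I(X_1;X_2|X_3)$. Adding the remaining nonnegative terms $I(X_1;X_2|X_4)+I(X_3;X_4)$ then recovers the equivalent Ingleton form, completing this case. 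The symmetric pairs $\{1,4\},\{2,3\},\{2,4\}$ follow by relabeling.

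The argument is short, and the only genuine ``insight'' is choosing the correct equivalent form and conditioning variable; there is no real obstacle once \eqref{equation:Ingletonh} is rewritten as above. It is worth noting, however, exactly why the excluded pair $\{3,4\}$ resists this approach: independence of $X_3$ and $X_4$ kills only the single term $I(X_3;X_4)$ and leaves the bound $I(X_1;X_2)\le I(X_1;X_2|X_3)+I(X_1;X_2|X_4)$, which can fail because conditioning on $X_3$ or $X_4$ may create dependence between $X_1$ and $X_2$. This explains why the hypothesis $(i,j)\ne(3,4)$ is necessary and why that case is genuinely different.
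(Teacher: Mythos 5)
Your proof is correct and is essentially the paper's argument in mutual-information dress: the three nonnegative terms on the right of your rewritten inequality are exactly the three submodularity inequalities the paper invokes, and both proofs reduce to the representative cases $\{1,2\}$ and $\{1,3\}$ by the same subscript symmetries. The only difference is that you work out the $\{1,3\}$ case explicitly via the interaction-information identity (which, after expanding, is the same single submodularity inequality $h_{12}+h_{23}\geq h_2+h_{123}$), whereas the paper dismisses it as ``similar.''
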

\begin{proof}
By symmetry of \eqref{equation:Ingletonh}, we only need to prove the result for when $(i,j) = (1,2)$ or $(1,3)$. In the first case, $h_{12} = h_1 + h_2$, so
\begin{align*}
h_{12}+h_{13}+h_{14}+h_{23}+h_{24} &\geq  h_1 + h_2 + h_3 + h_{123} + h_4 + h_{124}\\
&\geq h_1+h_2+h_{34}+h_{123}+h_{124},
\end{align*}
where we used $h_{13}+h_{23} \geq h_3+h_{123}$ and $h_{14}+h_{24} \geq h_4+h_{124}$ by submodularity of entropy.
The second case is similar.
\end{proof}

\begin{corollary}
There is no independent triple or quadruple in a set of four subgroups that violates \eqref{equation:Ingletong}.
\end{corollary}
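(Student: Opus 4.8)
The plan is to derive this as an immediate corollary of the preceding Lemma; the only real content is to translate \emph{independence of subgroups} into the pairwise-independence hypothesis of that Lemma, and then to verify a small combinatorial fact about subsets of $\{1,2,3,4\}$.

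First I would record what independence means in entropy terms. For a subcollection indexed by $\beta \subseteq \{1,2,3,4\}$, the requirement (R1), namely $\prod_{i\in\beta}|G_i| = |G|^{|\beta|-1}|G_\beta|$ where $G_\beta = \bigcap_{i\in\beta}G_i$, becomes, under the correspondence $h_\alpha = g_\alpha = \log\frac{|G|}{|G_\alpha|}$, precisely the identity $\sum_{i\in\beta} h_i = h_\beta$. This is exactly mutual independence of the random variables $\{X_i : i\in\beta\}$ built from $G$ and the $G_i$ as in Section~\ref{subsec:group}. Hence an independent triple or quadruple of subgroups is the same thing as a mutually independent triple or quadruple of those random variables.

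Second, I would use the elementary fact that mutual independence of a family forces pairwise independence of every pair inside it, together with the combinatorial observation that every $3$-element subset of $\{1,2,3,4\}$, as well as the full set, contains at least one unordered pair $\{i,j\}$ different from $\{3,4\}$. Indeed $\{3,4\}$ is the unique pair excluded by the Lemma, and no triple can have all of its pairs equal to $\{3,4\}$. Consequently, from any independent triple or quadruple I extract an independent pair $X_i, X_j$ with $\{i,j\}\neq\{3,4\}$.

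Finally I would invoke the Lemma above: independence of such a pair forces the Ingleton inequality \eqref{equation:Ingletonh}, equivalently \eqref{equation:Ingletong}, to hold. This contradicts the assumption that the four subgroups violate \eqref{equation:Ingletong}, so no independent triple or quadruple can exist. There is no genuine obstacle here; the only points requiring care are the bookkeeping in the first step, confirming that subgroup independence matches mutual independence of the associated random variables, and the verification that the excluded pair $\{3,4\}$ cannot exhaust all pairs of any triple.
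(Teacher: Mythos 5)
Your proposal is correct and follows exactly the route the paper intends: the corollary is stated without proof as an immediate consequence of the preceding lemma, and your argument --- translating the subgroup independence condition $\prod_{i\in\beta}|G_i| = |G|^{|\beta|-1}|G_\beta|$ into $\sum_{i\in\beta}h_i = h_\beta$, i.e.\ mutual independence of the associated random variables, extracting a pairwise independent pair $\{i,j\}\neq\{3,4\}$ (which every triple and the quadruple must contain), and invoking the lemma --- is precisely the implicit justification. The only addition you make is spelling out the bookkeeping, which is accurate.
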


On another note, if we want to use the Ingleton-violating subgroups in the network, Proposition~\ref{proposition:source_intersection} in Appendix~\ref{section:gnc-detail} tells us that their intersection should contain the intersection of all the source subgroups. Since in $PGL(2,q)$ the intersection of the Ingleton-violating subgroups is trivial, we need to find trivially intersecting independent subgroups to serve as sources. In $PGL(2,5)$, there are 4 such pairs and no such triples. At least one of these pairs also extends to a general family:

\begin{proposition}
Let $U = \begin{bmatrix} 0 & -1 \\ t & 0 \end{bmatrix}\in GL(2,q)$, where $t$ is a primitive element in $\f_{q}$. Let $H$ be the image of $SL(2,q)$ in $PGL(2,q)$ under the natural homomorphism, which is isomorphic to $PSL(2,q)$. When $p\neq2$, $H$ and $\big\langle \U \big\rangle$ are independent in $PGL(2,q)$ with trivial intersection.
\end{proposition}
\begin{proof}
It is easy to see $\big|\U\big| = 2$, $\det U = t$. The determinant of any matrix representing an element in $H$ takes the form $t^{2k}\in \langle t^2\rangle$, for some $k$. But $t \notin \langle t^2\rangle$ as $q-1$ is even, so $H \bigcap\big\langle \U \big\rangle = 1$. Also $\big|\big\langle \U \big\rangle\big|\cdot|H| = {2\cdot|SL(2,q)|/2} = |SL(2,q)| = |PGL(2,q)|$, thus \eqref{equation:indep} holds.
\end{proof}

In $GL(2,q)$ there are more Ingleton-violating instances, which have
various intersections. So the requirement on the sources is not so strict
and we have a richer class of subgroups to work with. As in $PGL(2,q)$, there
exist trivially intersecting independent pairs, for example:

\begin{proposition}
In $GL(2,q)$, $SL(2,q)$ and $\langle B \rangle$ (or $\langle P \rangle$) are independent with trivial intersection.
\end{proposition}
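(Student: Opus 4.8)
The plan is to let the determinant do all the work. The key observation is that $\det : GL(2,q) \to \f_q^\times$ is a surjective homomorphism with kernel exactly $SL(2,q)$, and that $\det$ maps $\langle B \rangle$ (respectively $\langle P \rangle$) isomorphically onto $\f_q^\times \cong GL(2,q)/SL(2,q)$. In other words, $\langle B \rangle$ is a complement to $SL(2,q)$ in the determinant quotient, and both halves of the claim fall out of this single fact. I expect no genuine obstacle here; the argument runs parallel to the preceding proposition.

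First I would record the relevant orders and the action of the determinant. Since $t$ is a primitive element of $\f_q$, the matrix $B=\begin{bmatrix} 1 & 0 \\ 0 & t \end{bmatrix}$ satisfies $B^k=\begin{bmatrix} 1 & 0 \\ 0 & t^k \end{bmatrix}$, so $|B|=|t|=q-1$ and $\det B^k = t^k$. Hence $\det$ restricted to $\langle B \rangle$ is injective, with kernel $\{B^k : t^k=1\}=\{I\}$, and surjects onto $\f_q^\times$; that is, $\det$ carries $\langle B \rangle$ isomorphically onto $\f_q^\times$. The identical computation for $P=\begin{bmatrix} t & 0 \\ 0 & 1 \end{bmatrix}$ gives $\det P^k = t^k$ and $|P|=q-1$, so it suffices to treat $\langle B \rangle$ and invoke symmetry for $\langle P \rangle$.

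Next, the trivial intersection is immediate: any element of $SL(2,q)\cap\langle B \rangle$ is a power $B^k$ with $\det B^k = 1$, and by the injectivity just noted this forces $B^k=I$, so $SL(2,q)\cap\langle B \rangle = 1$. Finally I would verify the independence identity \eqref{equation:indep} for the source set $\{SL(2,q),\langle B \rangle\}$, which for two subgroups reads $|SL(2,q)|\cdot|\langle B \rangle| = |GL(2,q)|\cdot|SL(2,q)\cap\langle B \rangle|$. Using $|SL(2,q)| = |GL(2,q)|/(q-1)$ from Section~\ref{section:notation}, $|\langle B \rangle| = q-1$, and the trivial intersection, the left-hand side equals $|GL(2,q)|$ while the right-hand side equals $|GL(2,q)|\cdot 1$, so equality holds and the two subgroups are independent. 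The only point to phrase carefully is that independence here is an order identity rather than a direct product—$SL(2,q)$ is normal in $GL(2,q)$ but $\langle B \rangle$ need not be—so the cleanest statement is simply that $\langle B \rangle$ is a complement to $SL(2,q)$ modulo the determinant, which is exactly what makes $\det|_{\langle B \rangle}$ an isomorphism and yields both conclusions at once.
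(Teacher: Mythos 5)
Your proof is correct and follows essentially the same route as the paper: the trivial intersection comes from $\det B^k = t^k = 1$ forcing $B^k = I$, and independence is the order identity $|SL(2,q)|\cdot|\langle B\rangle| = \bigl(|GL(2,q)|/(q-1)\bigr)(q-1) = |GL(2,q)|$. The framing via the determinant quotient is a slightly more conceptual packaging of the same computation, but the substance is identical.
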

\begin{proof}
Obviously $\det B^k = 1$ iff $B^k = I$, so $SL(2,q)$ and $\langle B \rangle$ have trivial intersection. Also $|B|\cdot|SL(2,q)| = {(q-1)}\cdot|GL(2,q)|/(q-1) = |GL(2,q)|$, thus \eqref{equation:indep} is satisfied. The proof for $\langle P \rangle$ is similar.
\end{proof}

In general it is not easy to find many independent subgroups in a group. If the group is a direct product of $n$ of its subgroups, however, it admits a natural construction of $n$ independent subgroups:

\begin{proposition}\label{proposition:drctprod}
If $G = G_1\times G_2\times\cdots\times G_n$, then ${1 \times G_2\times\cdots\times G_n}$, $G_1 \times 1\times\cdots\times G_n$, \ldots, and $G_1 \times G_2\times\cdots\times 1$ are $n$ trivially intersecting independent subgroups in $G$.
\end{proposition}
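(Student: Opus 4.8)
The plan is to verify directly the two defining properties—trivial (total) intersection and independence in the sense of \eqref{equation:indep}—since both reduce to elementary bookkeeping of the factors of the direct product. Write $H_j$ for the $j$-th listed subgroup, i.e.\ the direct product in which only the $j$-th factor $G_j$ is replaced by the trivial group $1$; concretely $H_j = \{(x_1,\dots,x_n)\in G : x_j = 1\}$.

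First I would establish the trivial intersection. An element $(x_1,\dots,x_n)$ lies in $H_j$ exactly when its $j$-th coordinate is the identity, so it lies in every $H_j$ ($1\le j\le n$) precisely when all of its coordinates are the identity. Hence $\bigcap_{j=1}^n H_j = 1$, which gives the ``trivially intersecting'' claim.

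Next I would verify independence. Since $H_j$ omits only the $j$-th factor, its order is $|H_j| = |G|/|G_j|$, and therefore $\prod_{j=1}^n |H_j| = |G|^n / \prod_{j=1}^n |G_j| = |G|^n/|G| = |G|^{n-1}$. Taking $\Sr = \{1,\dots,n\}$ with the source subgroups equal to the $H_j$, the relevant intersection is $G_\Sr = \bigcap_{j=1}^n H_j = 1$, which has order $1$, so the right-hand side of \eqref{equation:indep} equals $|G|^{n-1}\cdot 1$. This matches the product just computed, so \eqref{equation:indep} holds and the subgroups are independent.

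There is no genuine obstacle here: the only points requiring minor care are matching the exponent $|\Sr|-1 = n-1$ with the telescoping $\prod_j(|G|/|G_j|)$, and confirming that the intersection appearing in \eqref{equation:indep} is the full intersection $G_\Sr$ (which is trivial) rather than a pairwise one. Once the order of each $H_j$ is identified as $|G|/|G_j|$, both the trivial intersection and the independence identity follow immediately.
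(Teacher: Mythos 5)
Your proof is correct and follows essentially the same route as the paper: the paper likewise observes that the total intersection is trivial and checks that both sides of \eqref{equation:indep} equal $\prod_{i=1}^n |G_i|^{n-1}$ (which is your $|G|^{n-1}$, since $|G|=\prod_i|G_i|$). You have simply written out the bookkeeping the paper calls ``easy to check.''
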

\begin{proof}
Trivial intersection is obvious, and it is easy to check that both sides of \eqref{equation:indep} are equal to $\prod^n_{i=1}|G_i|^{n-1}$.
\end{proof}


This construction is the generalization of the source construction for linear network codes, in which case the subgroup at source $s$ is the $W_{s}$ defined in Appendix~\ref{subsec:lnc-inclusion}. Also we see that using direct products we can obtain independent subgroups for an arbitrary number of sources, but the group order also grows.

If we further require the sources to be of the same alphabet size, then the independent subgroups must have the same order. In the above proposition, this can be simply achieved by choosing $G_i$ to be the same subgroup for each $i$. 
Additionally, for an arbitrary pair of independent subgroups, we have the following proposition.

\begin{proposition}
If $G_s$ and $G_r$ are independent in $G$, then $G_s\times G_r$ and $G_r\times G_s$ are independent in $G^2$ with the same order.
\end{proposition}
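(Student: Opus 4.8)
The plan is to verify the defining independence relation \eqref{equation:indep} directly for the pair of subgroups $G_s\times G_r$ and $G_r\times G_s$ viewed inside the ambient group $G^2$, and to dispose of the equal-order claim by inspection. Since here we are dealing with two source subgroups, $|\Sr|=2$, so the hypothesis that $G_s$ and $G_r$ are independent in $G$ is exactly the order identity $|G_s|\,|G_r| = |G|\,|G_s\cap G_r|$; this single equation is the only fact I intend to use.

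First I would observe that the equal-order assertion is immediate: $|G_s\times G_r| = |G_s|\,|G_r| = |G_r|\,|G_s| = |G_r\times G_s|$. Next I would compute the intersection of the two subgroups. Writing $H_1 = G_s\times G_r$ and $H_2 = G_r\times G_s$ as subgroups of $G^2$, a pair $(x,y)$ lies in $H_1\cap H_2$ precisely when $x\in G_s\cap G_r$ and $y\in G_s\cap G_r$; hence $H_1\cap H_2 = (G_s\cap G_r)\times(G_s\cap G_r)$ and $|H_1\cap H_2| = |G_s\cap G_r|^2$.

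Finally I would substitute these orders into \eqref{equation:indep} for the ambient group $G^2$, where $|G^2|=|G|^2$. The relation to be checked is $|H_1|\,|H_2| = |G^2|\,|H_1\cap H_2|$, i.e. $(|G_s|\,|G_r|)^2 = |G|^2\,|G_s\cap G_r|^2$, which is nothing but the square of the hypothesized independence identity in $G$ and therefore holds. The only step requiring any care is recognizing that the intersection of the two ``swapped'' product subgroups factors as the direct product of the componentwise intersections; once that observation is made, the conclusion is a one-line consequence of independence in $G$, so there is no genuine obstacle here.
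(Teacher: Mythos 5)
Your proof is correct and follows the same route as the paper's: compute the intersection $(G_s\times G_r)\cap(G_r\times G_s)=(G_s\cap G_r)\times(G_s\cap G_r)$ and observe that both sides of \eqref{equation:indep} in $G^2$ are the squares of the corresponding sides of the independence identity in $G$. You simply spell out the intersection computation and the equal-order claim more explicitly than the paper does.
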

\begin{proof}
$G_s$ and $G_r$ satisfy $|G_s||G_r|=|G||G_s\bigcap G_r|$. Thus for the direct product construction, the $LHS$ and $RHS$ of \eqref{equation:indep} are $|G_s|^2|G_r|^2$ and $|G|^2|G_s\bigcap G_r|^2$ respectively, which are equal.
\end{proof}

\section{Conclusion}
\label{section:conclusion}

Using a refined search we find the smallest group to violate the Ingleton inequality to be the 120 element group $S_5$. Investigating the detailed structure of the subgroups allowed us to determine that this is an instance of the Ingleton-violating family of groups $PGL(2,q)$ for prime powers $q\geq 5$. As this family has a nice interpretation in the theory of group actions, we generalize the idea to obtain more Ingleton violations in $PGL(n,q)$ and $GL(n,q)$. We also examine the preimage group $GL(2,q)$ of $PGL(2,q)$ and discover more families of violating subgroups. Nevertheless, even in $PGL(2,q)$ and $GL(2,q)$ for $q>5$, there might still exist more violation instances that we have not explored, let alone other interesting groups. For example, subsequent to our work Boston and Nan\cite{Boston-Nan-NewVioIngelton} find many new violations in the class of permutation groups. Presumably there are infinite families of Ingleton violating groups, so the list of such families to date is by no means comprehensive and is far from complete.

The $PGL$ and $GL$ groups violate the Ingleton inequality and, since they contain linear network codes inside them, can provide network codes more powerful than linear ones. Developing group network codes requires designing the source subgroups that satisfy independence (R1) and the edge subgroups that satisfy (R2) and (R3). The coding process requires two fundamental operations: (i) determining the intersection of all cosets from each incoming edge, and (ii) finding the appropriate coset for the outgoing overgroup of the intersected subgroups. Therefore constructing network codes from $PGL(n,q)$ and $GL(n,q)$ will require a thorough understanding of the structure of their subgroups and the corresponding coset operation. Investigating this issue may be a fruitful direction for future work.


%

\appendices
%

\section{Group Network Codes: Details}
\label{section:gnc-detail}

\subsection{Code Construction}

To establish the encoding and decoding process, we need an auxiliary lemma.
\begin{lemma}\label{lemma:coset_mapping}
Let $K_{1}, K_{2}$ be two subgroups of $G$ with $K_{1}\leq K_{2}$. Then the coset mapping
\begin{equation}\label{equation:coset_mapping}
\begin{array}{c}
\pi:G/K_{1}\to G/K_{2} \\
\phantom{\pi:}xK_{1} \mapsto xK_{2}
\end{array}
\end{equation}
is a well defined onto function, where $xK_{1}$ is mapped to the unique coset in $G/K_{2}$ that contains it. Furthermore, if $\Lambda_{1}$ is a uniform random variable on $G/K_{1}$, then $ \pi(\Lambda_{1})$ is uniform on $G/K_{2}$.
\end{lemma}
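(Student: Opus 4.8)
The plan is to verify the three assertions in turn: that $\pi$ is well defined, that it is onto, and that it preserves uniformity. The first two are essentially immediate from the containment $K_1 \leq K_2$, while the probabilistic claim will follow from a fiber-counting argument.

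First I would establish well-definedness. The natural observation is that because $K_1 \leq K_2$, every left coset $xK_1$ is entirely contained in the single left coset $xK_2$; since the cosets of $K_2$ partition $G$, this is the unique coset of $K_2$ containing $xK_1$, so the assignment $xK_1 \mapsto xK_2$ does not depend on the representative $x$. Concretely, if $xK_1 = x'K_1$ then $x^{-1}x' \in K_1 \subseteq K_2$, whence $xK_2 = x'K_2$. Surjectivity is then trivial: any $yK_2 \in G/K_2$ equals $\pi(yK_1)$.

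For the uniformity statement, I would compute the distribution of $\pi(\Lambda_1)$ directly. Fix a target coset $yK_2$. Its preimage $\pi^{-1}(yK_2)$ consists of exactly those cosets of $K_1$ that lie inside $yK_2$, and since $yK_2$ has $|K_2|$ elements partitioned into cosets of $K_1$ of size $|K_1|$ each, there are precisely $|K_2|/|K_1|$ of them. The key point is that this count is independent of $y$. Since $\Lambda_1$ places mass $|K_1|/|G|$ on each coset of $K_1$, summing over the fiber gives $P(\pi(\Lambda_1) = yK_2) = (|K_2|/|K_1|)\cdot(|K_1|/|G|) = |K_2|/|G|$ for every $y$, which is the uniform distribution on $G/K_2$.

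The only step requiring genuine care is the fiber count, i.e.\ verifying that each coset of $K_2$ decomposes into the same number $|K_2|/|K_1|$ of cosets of $K_1$; everything else is bookkeeping. This constancy of the fiber size is precisely what forces the pushforward of a uniform measure to remain uniform, and it is the crux of the lemma.
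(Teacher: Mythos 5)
Your proposal is correct and follows essentially the same route as the paper's proof: well-definedness from $K_1 \subseteq K_2$, and uniformity via the observation that every coset of $K_2$ decomposes into exactly $|K_2|/|K_1|$ cosets of $K_1$, so all fibers of $\pi$ have equal size. The paper exhibits the fiber of $xK_2$ explicitly as $\{(xy_i)K_1\}$ by translating a coset decomposition of $K_2$, while you count by cardinality, but this is only a cosmetic difference.
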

\begin{proof}
$\pi$ is well defined since $xK_{2}=x'K_{2}$ whenever $xK_{1}=x'K_{1}$. Note that $K_{2}$ is partitioned by the $m$ distinct cosets $\{y_{i}K_{1}: 1\leq i\leq m\}$, where $m = |K_{2}/K_{1}|$ and $y_{i}\in K_{2}$ for $i=1,2,\ldots,m$. Therefore, each $xK_{2}\in G/K_{2}$ is also partitioned by the $m$ cosets $\{(xy_{i})K_{1}: 1\leq i\leq m\}$, which are precisely the $m$ preimages of $xK_{2}$ under $\pi$. Thus $\pi(\Lambda_{1})$ is uniform on $G/K_{2}$.
\end{proof}

For any collection $\alpha$ of subgroups of $G$, the intersection mapping~\eqref{equation:intersection_mapping} is a bijection. Consider the collection of all source subgroups. Let $\X_{\Sr} = {\{(xG_s: s\in\Sr)\ |\ x\in G\}}\subseteq\prod_{s\in\Sr}\Y_{s}$, then we have the bijective intersection mapping $\Theta_{\Sr}: \X_{\Sr} \to G/G_\Sr$. But with (R1), $\left|\prod_{s\in\Sr}\Y_{s}\right| = |G/G_{\Sr}| = |\X_{\Sr}|$ and so
\[ \X_{\Sr}=\prod_{s\in\Sr}\Y_{s}. \]
This means that any coset tuple $(x_{s}G_s: s\in\Sr)$ in $\prod_{s\in\Sr}\Y_{s}$ can be represented in the form $(xG_s: s\in\Sr)$ for a common $x\in G$, and the intersection of $\{x_{s}G_s: s\in\Sr\}$ is equal to $xG_{\Sr}$. Therefore, we can rewrite the bijection $\Theta_{\Sr}$ as
\[ \Theta_{\Sr}: \prod_{s\in\Sr}\Y_{s} \to G/G_\Sr, \]
which maps a tuple to the intersection of all its cosets.

Moreover, let $t$ be an edge or a sink node, define $\X_{\In(t)} = \{(xG_f:f\in\In(t))\ |\ x\in G\}$ and $G_{\In(t)} = \bigcap_{f\in\In(t)}G_f$. Then the intersection mapping
\[ \Theta_{\In(t)}: \X_{\In(t)} \to G/G_{\In(t)} \]
is a bijection. With (R2) and (R3), we can also define coset mappings for edges and source/sink pairs as follows. For each edge $e$, since $G_{\In(e)}\leq G_{e}$ by (R2), define the coset mapping $\pi_{e}$ as \eqref{equation:coset_mapping} with $K_{1}=G_{\In(e)}$ and $K_{2}=G_{e}$. Similarly for each source $s$ with $u\in\D(s)$, since $G_{\In(u)}\leq G_{s}$ by (R3), define $\pi_{u,s}$ with $K_{1}=G_{\In(u)}$ and $K_{2}=G_{s}$.

Now we can define the encoding and decoding functions. At each edge $e$, let the encoding function be $\phi_{e} = \pi_{e}\circ\Theta_{\In(e)}$. For each source $s$ with $u\in\D(s)$, let the decoding function be $\phi_{u,s} = \pi_{u,s}\circ\Theta_{\In(u)}$. In other words, at an edge or a sink node $t$, the encoding/decoding function takes an input coset tuple $(Y_{f}:f\in\In(t))$ and first forms the intersection of them, which is a coset of $G_{\In(t)}$, then maps this coset to the unique coset of $G_{e}$ (or $G_{s}$, whichever is appropriate) that contains it. Such network operations define a proper network code, since by the proposition below the decoding functions always yield correct source symbols at each sink node.

\begin{proposition}\label{proposition:consistent_egde_symbols}
Assume (R1) holds, and let the encoding and decoding functions be defined as above. Then for some common $x \in G$, $\forall s\in\Sr$, $Y_{s} = xG_s$ and $\forall e\in\E$, $Y_{e} = xG_{e}$. Also for each source $s$ with $u\in\D(s)$, $Y_{s}$ is recovered by the decoding function $\phi_{u,s}$.
\end{proposition}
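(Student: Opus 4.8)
The plan is to pin down a single group element $x \in G$ at the sources and then propagate it through the entire network by induction along a topological order of the acyclic graph $\G$. First I would invoke (R1): since it forces $|\X_\Sr| = |\prod_{s\in\Sr}\Y_s|$ while the intersection mapping $\Theta_\Sr$ is a bijection onto $G/G_\Sr$, the actual source tuple $(Y_s : s\in\Sr)$ must already lie in $\X_\Sr = \prod_{s\in\Sr}\Y_s$. By the very definition of $\X_\Sr$, this means there is a common $x\in G$ (unique modulo $G_\Sr$) with $Y_s = xG_s$ for every $s\in\Sr$. This settles the source part of the statement and supplies the base of the induction.

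Second, I would order the edges of $\G$ topologically (possible since $\G$ is a directed acyclic graph) and show by induction that $Y_e = xG_e$ for the \emph{same} $x$. The inductive hypothesis is that every $f\in\In(e)$ already carries $Y_f = xG_f$. The key algebraic fact is that cosets sharing a representative intersect cleanly: $\bigcap_{f\in\In(e)} xG_f = x\bigl(\bigcap_{f\in\In(e)}G_f\bigr) = xG_{\In(e)}$, which is precisely the value $\Theta_{\In(e)}(Y_f : f\in\In(e))$ of the bijective intersection mapping on this tuple. In particular the received tuple lies in the domain $\X_{\In(e)}$, so $\phi_e$ is applicable. Applying $\phi_e = \pi_e\circ\Theta_{\In(e)}$ and using (R2), which guarantees $G_{\In(e)}\leq G_e$ so that $\pi_e$ is defined, the coset $xG_{\In(e)}$ is sent to the unique coset of $G_e$ containing it, namely $xG_e$. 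Hence $Y_e = xG_e$, closing the induction.

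Third, for decoding I would repeat the computation at a sink. Fix a source $s$ and $u\in\D(s)$. By the edge result just proved, every $f\in\In(u)$ satisfies $Y_f = xG_f$, so $\Theta_{\In(u)}$ returns $xG_{\In(u)}$; then (R3) gives $G_{\In(u)}\leq G_s$, so $\pi_{u,s}$ sends $xG_{\In(u)}$ to $xG_s$. Since $Y_s = xG_s$ by the first step, the decoder output $\phi_{u,s}(Y_f : f\in\In(u)) = xG_s = Y_s$ is exactly the correct source symbol.

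The step I expect to require the most care is the base of the induction, i.e.\ threading the source symbols into the edge recursion: one must check that whenever an edge $e$ draws directly on source-generated information, the input cosets genuinely share the representative $x$ fixed in the first step, so that the intersection-mapping argument applies uniformly across the base and inductive cases. Everything else reduces to the two facts already available---the bijectivity of the intersection mappings $\Theta_{\In(t)}$, and the fact that under (R2)/(R3) the coset mappings $\pi_e,\pi_{u,s}$ are well defined and send $xK_{1}$ to $xK_{2}$---together with the elementary identity $\bigcap_{f} xG_f = xG_{\In(e)}$ for cosets with a common representative.
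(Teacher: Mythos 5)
Your proposal is correct and follows essentially the same route as the paper: (R1) plus the bijectivity of $\Theta_{\Sr}$ yields a common representative $x$ at the sources, induction over the acyclic structure (the paper uses depth, you use a topological order, which is equivalent) propagates $Y_e = xG_e$ to every edge via the identity $\bigcap_{f}xG_f = xG_{\In(e)}$ and the coset mapping, and the decoding claim is the same computation at a sink. Your version merely spells out more explicitly the intersection-mapping step that the paper asserts in one line.
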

\begin{proof}
Let the source symbols $(Y_s: s\in\Sr)$ be an arbitrary tuple from $\prod_{s\in\Sr}\Y_{s}$. Since (R1) is true, as discussed above, for all $s\in\Sr$, $Y_{s} = xG_s$ with a common $x\in G$. As $\G$ is directed and acyclic, we can define the ``depth'' of each node $v$ as the length of the longest path from a source node to $v$, and define the depth of an edge to be the depth of its tail node. Note that  $e$ is always ``deeper'' than $f$ if $f\in\In(e)$.  Also if $Y_f = xG_{f}$ for all $f\in\In(e)$, then $Y_e = \phi_{e}(Y_f:f\in\In(e)) = xG_{e}$. So by induction on the depths of the edges, $Y_{e} = xG_{e}$ for all $e\in\E$.

Furthermore, for each $s\in\Sr$ with $u\in\D(s)$, since $Y_f = xG_{f}$ for all $f\in\In(u)$, $\phi_{u,s}{(Y_f:f\in\In(u))} = xG_{s} = Y_{s}$. Thus the source symbol $Y_{s}$ is successfully recovered at $u$.
\end{proof}

\begin{remark}
Note that the encoding/decoding function for an edge or a sink node $t$ is only defined on $\X_{\In(t)}$, but not on the entire Cartesian product $\prod_{f\in\In(t)}\Y_{f}$. This is because for an arbitrary tuple in $\prod_{f\in\In(t)}\Y_{f}$, it is possible that the intersection of all cosets is the empty set, which is not a coset of $G_{\In(t)}$. However, with (R1) this is not a problem, as Proposition~\ref{proposition:consistent_egde_symbols} guarantees that $(Y_{f}:f\in\In(t))$ is always a tuple in $\X_{\In(t)}$.
\end{remark}

\begin{remark}
From the proof above, even without (R1) these encoding and decoding functions still constitute a valid network code, if the sources cooperate in such a way that the transmit tuples are always from $\X_{\Sr}$. But in this case the source random variables are dependent.
\end{remark}

\subsection{The Entropy Vector}

Here we analyze the global mappings of this group network code, and show that the entropy vector is characterizable by the group $G$ and its subgroups $\{G_{t} : t\in\Sr\cup\E\}$ when the sources are independent and uniform. First we give another auxiliary lemma. 

\begin{lemma}\label{lemma:entropy_vector_pi}
Let $K\leq G$ and let $G_{i}$, $i=1,\ldots,n$, be subgroups of $G$ containing $K$. For each $i$ let $\pi_{i}$ be the coset mapping defined as \eqref{equation:coset_mapping} with $K_{1}=K$ and $K_{2}=G_{i}$. Let $\Lambda_{K}$ be a uniform random variable on $G/K$, and define $X_{i} = \pi_{i}(\Lambda_{K})$ for each $i$. Then the entropy vector of $\{X_1,X_2,\dots,X_n\}$ is exactly the group characterizable vector induced by $G$ and $\{G_1,G_2,\dots,G_n\}$.
\end{lemma}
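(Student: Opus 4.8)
The plan is to reduce the statement to the canonical group-characterizable construction already carried out in Section~\ref{subsec:group}. The key observation is that the variables $X_i = \pi_i(\Lambda_K)$ can be realized simultaneously as $X_i = \Lambda G_i$ for a single uniform $\Lambda$ on $G$. To set this up, I would first note that if $\Lambda$ is uniform on $G$, then $\Lambda K$ is uniform on $G/K$; this is the special case of Lemma~\ref{lemma:coset_mapping} with $K_{1}=\{e\}$ and $K_{2}=K$ (equivalently, an immediate coset-counting fact). Hence we may take $\Lambda_K = \Lambda K$ without altering its distribution. Since $K\leq G_i$ for every $i$, the coset $\Lambda K$ is contained in $\Lambda G_i$, so by the definition of the coset mapping in \eqref{equation:coset_mapping} we have $\pi_i(\Lambda K)=\Lambda G_i$. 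Thus $X_i = \Lambda G_i$ for all $i$ at once, driven by the common randomness $\Lambda$.

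Once this identification is made, the joint law of $(X_1,\dots,X_n)$ is exactly that of the tuple $(\Lambda G_1,\dots,\Lambda G_n)$ analyzed in Section~\ref{subsec:group}, and the remaining task is only to recompute the joint entropies. For each nonempty $\alpha\subseteq\n$, I would invoke the intersection mapping $\Theta_{\alpha}:\X_{\alpha}\to G/G_\alpha$, which was already shown to be a bijection. The tuple $(X_i:i\in\alpha)=(\Lambda G_i:i\in\alpha)$ lies in $\X_{\alpha}$, and $\Theta_{\alpha}(\Lambda G_i:i\in\alpha)=\bigcap_{i\in\alpha}\Lambda G_i=\Lambda G_\alpha$. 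Because $\Theta_{\alpha}$ is injective, $(X_i:i\in\alpha)$ and $\Lambda G_\alpha$ carry the same entropy, so $h_\alpha = H(\Lambda G_\alpha)$. As $\Lambda G_\alpha$ is uniform on $G/G_\alpha$, this gives $h_\alpha = \log\frac{|G|}{|G_\alpha|}=g_\alpha$, precisely the $\alpha$-component of the group-characterizable vector induced by $G$ and $\{G_1,\dots,G_n\}$. Letting $\alpha$ range over all nonempty subsets of $\n$ shows the two $(2^n-1)$-tuples coincide.

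The only point requiring genuine care is the first step: one must verify that every $X_i$ is a function of the \emph{same} underlying uniform variable, so that the joint distribution is correctly matched to the Section~\ref{subsec:group} construction rather than treated coordinatewise. Once this coupling $\Lambda_K=\Lambda K$ is pinned down and $\pi_i(\Lambda K)=\Lambda G_i$ is confirmed, the uniformity of $\Lambda G_\alpha$ on $G/G_\alpha$ together with the bijectivity of $\Theta_{\alpha}$ do all the remaining work, and I do not expect any substantial obstacle. I would therefore present the argument in exactly this order—first the coupling, then the per-$\alpha$ entropy computation—so that the reuse of the earlier results stays transparent.
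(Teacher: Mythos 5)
Your proposal is correct and follows essentially the same route as the paper: both arguments hinge on the bijectivity of the intersection mapping $\Theta_{\alpha}$ and on identifying $\Theta_{\alpha}(X_\alpha)$ with a uniform random variable on $G/G_\alpha$, giving $h_\alpha = \log\frac{|G|}{|G_\alpha|}$. The only cosmetic difference is that you realize the uniformity by lifting $\Lambda_K$ to a uniform $\Lambda$ on $G$ (so $X_i=\Lambda G_i$), whereas the paper stays on $G/K$ and applies Lemma~\ref{lemma:coset_mapping} directly to conclude $\pi_{\alpha}(\Lambda_K)$ is uniform on $G/G_\alpha$; the two are interchangeable.
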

\begin{proof}
For each nonempty subset $\alpha \subseteq \n$, since $K\leq G_{\alpha}$, we can define the coset mapping $\pi_{\alpha}$ with $K$ and $G_{\alpha}$. As in Section~\ref{subsec:group}, the alphabet of $X_\alpha$ is still $\X_{\alpha} = \{(xG_i:i\in\alpha)\ |\ x\in G\}$, and the intersection mapping $\Theta_{\alpha}$ is a bijection. Also $\Theta_{\alpha}(X_\alpha) = \pi_{\alpha}(\Lambda_{K})$, which is uniform on $G/G_{\alpha}$ by Lemma~\ref{lemma:coset_mapping}. So the joint entropy $H(X_\alpha) = H(\Theta_{\alpha}(X_\alpha) ) = \log\frac{|G|}{|G_{\alpha}|}$ and the lemma follows.
\end{proof}

For each $s\in\Sr$ define the coset mapping $\pi'_{s}$ as \eqref{equation:coset_mapping} with $K_{1}=G_{\Sr}$ and $K_{2}=G_{s}$. For every edge $e$ we can similarly define a new coset mapping $\pi'_{e}$ with $K_{1}=G_{\Sr}$ and $K_{2}=G_{e}$, since according to the following proposition, $G_\Sr \leq G_e$.

\begin{proposition}\label{proposition:source_intersection}
If (R2) is satisfied, then $\forall e\in\E$, $G_\Sr \leq G_e$.
\end{proposition}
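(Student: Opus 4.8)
The plan is to prove this by induction on the depth of the edge $e$, using the same depth function as in the proof of Proposition~\ref{proposition:consistent_egde_symbols}: the depth of $e$ is the length of the longest directed path from a source node to the tail of $e$. Since $\G$ is acyclic this is well defined, and every $f\in\In(e)$ has strictly smaller depth than $e$, because if $f$ ends at the tail $u$ of $e$, then the existence of the edge into $u$ forces the longest source-to-tail path for $f$ to be at least one shorter than that for $e$. The target inequality $G_\Sr\leq G_e$ is simply the group-theoretic restatement of the intuitive fact that the symbol on any edge is ultimately a function of the source symbols, so an induction that pushes this containment forward along the DAG is the natural route.

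For the inductive step, consider an edge $e$ of positive depth and assume $G_\Sr\leq G_f$ for every $f\in\In(e)$. Intersecting over $f$ and using monotonicity of intersection gives $G_\Sr\leq\bigcap_{f\in\In(e)}G_f=G_{\In(e)}$, and then (R2) supplies $G_{\In(e)}\leq G_e$. Transitivity yields $G_\Sr\leq G_e$, completing the step. This part of the argument is routine and uses only (R2) together with the definition $G_\Sr=\bigcap_{s\in\Sr}G_s$; notably it does not require (R1).

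The delicate point, which I expect to be the main obstacle, is the base case: the minimal-depth edges are exactly those emanating from a source node $s$, and for such an edge the recursion over incoming edges bottoms out. Here one must account for the fact that the information available at a source node is the source symbol $Y_s$ itself, so that the source subgroup $G_s$ is among the inputs feeding the encoding of $e$; with this accounting (R2) reads $G_s\leq G_e$. Since $G_\Sr\leq G_s$ holds trivially from the definition of $G_\Sr$, we obtain $G_\Sr\leq G_e$ and the base case follows. I would make this precise by treating each source symbol as a virtual input edge carrying the subgroup $G_s$, exactly as in the underlying network-coding model, so that the left-hand side $G_{\In(e)}$ in (R2) already incorporates $G_s$ for every edge leaving $s$; the induction then proceeds uniformly over all edges reachable from the sources. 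Once both cases are in hand, $G_\Sr\leq G_e$ holds for all $e\in\E$, which is precisely what is needed to legitimately define the coset mappings $\pi'_e$ with $K_1=G_\Sr$ and $K_2=G_e$.
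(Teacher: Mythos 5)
Your proof is correct and follows essentially the same route as the paper's: induction on edge depth, with the base case handled by edges leaving source nodes (where the source symbol acts as the input, so (R2) gives $G_s\leq G_e$ and hence $G_\Sr\leq G_e$) and the inductive step supplied directly by (R2). The paper compresses the base case to ``trivially true,'' so your more explicit accounting of the virtual source input is a faithful elaboration rather than a different argument.
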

\begin{proof}
The proposition is trivially true if $e$ is emitted from a source node. Also if $G_\Sr \leq G_f$ for all $f\in\In(e)$, then by (R2) we have $G_\Sr \leq G_e$. Similar to Proposition~\ref{proposition:consistent_egde_symbols}, by induction on the depths of the edges the proof follows.
\end{proof}

\begin{proposition}\label{proposition:global_mapping}
$\forall e\in\E$, the global mapping at $e$ for the above group network code is $\varphi_{e} = \pi'_{e}\circ\Theta_{\Sr}$. In other words, $\varphi_{e}$ first forms the intersection of all the source cosets to obtain a coset of $G_\Sr$, and then maps this coset to the unique coset of $G_{e}$ containing it.
\end{proposition}
\begin{proof}
Assume the source symbols $(Y_s: s\in\Sr)$ are transmitted and let $\Lambda_{\Sr} = \Theta_{\Sr}(Y_s: s\in\Sr)$. Then $\Lambda_{\Sr} = xG_{\Sr}$ for some $x\in G$, and $Y_{s} = xG_s = \pi'_{s}(\Lambda_{\Sr})$ for all $s\in\Sr$. By Proposition~\ref{proposition:consistent_egde_symbols}, $Y_{e} = xG_e = \pi'_{e}(\Lambda_{\Sr})$, so
$\varphi_{e} = \pi'_{e}\circ\Theta_{\Sr}$.
\end{proof}

Let the source random variables $\{Y_s: s\in\Sr\}$ be independent and uniformly distributed, so the joint distribution is uniform on $\prod_{s\in\Sr}\Y_{s}$. Let $\Lambda_{\Sr} = \Theta_{\Sr}(Y_s: s\in\Sr)$, then $\Lambda_{\Sr}$ is uniform on $G/G_{\Sr}$ as $\Theta_{\Sr}$ is bijective. From Proposition~\ref{proposition:global_mapping}, $\forall t\in\Sr\cup\E$, $Y_{t} = \pi'_{t}(\Lambda_{\Sr})$, and so by Lemma~\ref{lemma:entropy_vector_pi}, the entropy vector for $\{Y_{t}: t\in\Sr\cup\E\}$ is characterizable by the group $G$ and its subgroups $\{G_{t} : t\in\Sr\cup\E\}$.

\subsection{Inclusion of Linear Network Codes}\label{subsec:lnc-inclusion}

In this section we carry over the group theory notations in Section~\ref{section:notation} to vector spaces, but with additive notation. For example, the left coset is now written as $v+W$ for a vector $v$ and a subspace $W$. Further, we use $\oplus$ to denote the direct sum of vector spaces. In the following we show that for each linear network code, there exists an equivalent group network code, with essentially the same network operations and hence the same encoding/decoding results.

Consider a linear network code $\cd$ over a finite field $F$. For each $t\in\Sr\cup\E$, the alphabet $\Y_{t}$ is a finite dimensional vector space over $F$. Let $v$ denote the concatenation of all the source vectors $(Y_s: s\in\Sr)$, then $v$ is a vector in $V \triangleq \oplus_{s \in \Sr\;} U_s$, where $U_{s} \triangleq \Y_{s}$. Then for each edge $e$, the global mapping $\varphi_{e}$ is a linear transformation from $V$ to $\Y_{e}$, whose range is denoted by $U_{e}$. Also for each source $s$, let $\varphi_{s}:V\to U_{s}$ be the linear projection that maps $v\in V$ to its $s$-th section. Thus $\forall t\in\Sr\cup\E$, we can write $Y_{t} = \varphi_{t}(v)$. Let $W_{t}$ be the null space of $\varphi_{t}$, then by the First Isomorphism Theorem,
\[ \psi_{t} : v+W_{t} \mapsto \varphi_{t}(v) \]
is a vector space isomorphism between the quotient space $V/W_{t}$ and $U_{t}$.

Let $t$ be an edge or a sink node. If $Y_{f}=0$ for all $f\in\In(t)$, then $Y_{t} = 0$ as the encoding/decoding functions are linear. Thus $\bigcap_{f\in\In(t)}W_{f} \leq W_{t}$. Further, for each source $s$
\[ W_{s} = \{ v\in V\ |\ s\text{-th section of v is 0}\} \cong \oplus_{r \in \Sr\setminus\{s\}\;} U_r, \]
so $\bigcap_{s\in\Sr}W_{s} = 0$. Since $V/W_{s}\cong U_{s}$, we have $\prod_{s\in\Sr}|V/W_{s}| = |V|$. Let $G = V$, $G_t = W_{t}$ for all $t\in\Sr\cup\E$. As $V$ is a finite dimensional vector space over a finite field, $G$ is a finite group. It is straightforward to check that the requirements (R1)--(R3) are all satisfied, so we can define a group network code $\cd'$ with these groups.

This network code is equivalent to $\cd$, since $\{\psi_{t}: t\in\Sr\cup\E\}$ provides a set of bijections between their codewords at each source/edge, and these bijections respect the encoding/decoding operations. In particular, assume in $\cd$ the source vectors yield some $v\in V$, and so $Y_{t} = \varphi_{t}(v)$ is transmitted at each source/edge $t$. Then with $\psi_{t}$ the corresponding symbol for $\cd'$ is $v+W_{t}$, which is consistent with the encoding/decoding result of $\cd'$ at each edge/sink node by Proposition~\ref{proposition:consistent_egde_symbols}.

\begin{figure}[!t]
\centering
\input{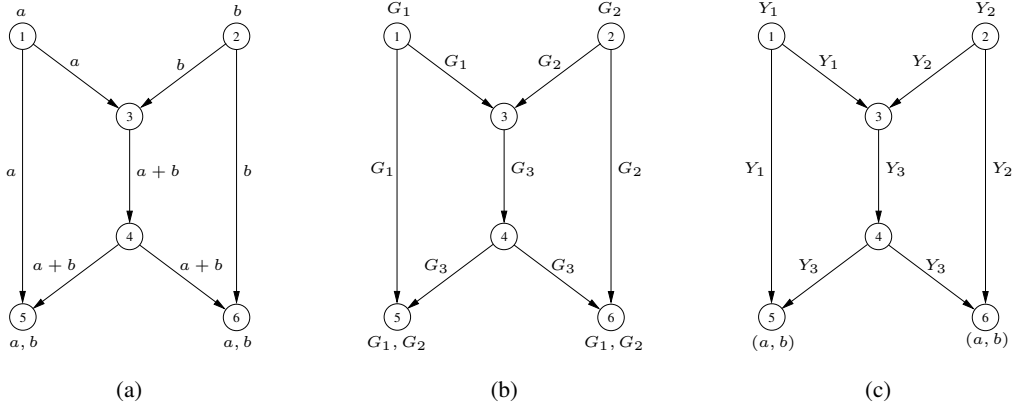}
\caption{Two network codes on the the butterfly network. (a) A linear network code; (b) the subgroup assignment for the corresponding group network code; (c) the transmitted symbols in the group network code. In (b), $G=\{(a,b): a,b\in\f_q\}$, $G_1=\{(0,x): x\in\f_q\}$, $G_2=\{(y,0): y\in\f_q\}$, and $G_3=\{(z,-z): z\in\f_q\}$. In (c), $Y_1 = \{(a,x): x\in\f_q\}$, $Y_2=\{(y,b): y\in\f_q\}$, and $Y_3=\{(a+z,b-z): z\in\f_q\}$.}
\label{figure:butterfly}
\end{figure}
For example, Fig.~\ref{figure:butterfly} demonstrates a linear network code over $\f_{q}$ for the well-known butterfly network (Fig.~\ref{figure:butterfly}-(a)), and the corresponding group network code (Fig.~\ref{figure:butterfly}-(b),(c)). Here for the linear network code, we have $V=\f_q^2$, $U_{1} = U_{2} = U_{e_{34}} = \f_{q}$, while $W_1=\{(0,x): x\in\f_q\}$, $W_2=\{(y,0): y\in\f_q\}$, and $W_{e_{34}}=\{(z,-z): z\in\f_q\}$. If we set $G=V$, $G_1=W_1$, $G_2=W_2$, and $G_3=W_{e_{34}}$, then the resulting group network code is equivalent to the original linear one.

\section{Proofs and Calculations in Section~\ref{section:gl}}

\subsection{Structures of $M,K,K',J,J'$}
\label{subsec:app_subgrp}

When the characteristic $p$ of $\f_q$ equals $2$, $K=K'$ and $J=J'$. So for the analysis of $K'$ and $J'$ we only consider the case $p\neq2$.

Observe that $|A_\alpha|=p$ for each $\alpha\in\f_q^\times$, and
\[ |C|=3,\quad |B_1|=2,\quad |B|=|B'|=|P|=|P'|=q-1. \]
As $(CB_1)^2 = I$, we have $M \cong D_6 \cong S_3$. It is easy to check that $\forall\alpha\in\f_q$,
\[ A_\alpha^B = A_{t^{-1}\alpha},\quad A_\alpha^{B'} = A_{-t^{-1}\alpha},\quad
A_\alpha^P = A_{t\alpha},\quad A_\alpha^{P'} = A_{-t\alpha}. \]
Therefore, $N$ is a normal subgroup of all $K,K',J,J'$ and
\[ K = N\cdot\langle B\rangle,\quad K' = N\cdot\langle B'\rangle,\quad
J = N\cdot\langle P\rangle,\quad J' = N\cdot\langle P'\rangle. \]
Also $N$ trivially intersects each of $\langle B\rangle, \langle B'\rangle, \langle P\rangle$ and $\langle P'\rangle$, thus
\[ K\cong N\rtimes\langle B\rangle,\quad K'\cong N\rtimes\langle B'\rangle,\quad
J\cong N\rtimes\langle P\rangle,\quad J'\cong N\rtimes\langle P'\rangle, \]
all of which are semidirect products $\z_p^m\rtimes\z_{q-1}$. We claim that $K\cong J$ and $K'\cong J'$. Moreover, in the case $p\neq2$, all the four groups are isomorphic if and only if $\fq$ is even.

To see this, first consider the bijections $\sigma: K\to J$ and $\sigma': K'\to J'$, where $\forall\alpha\in\f_q$, $\forall k\in\K_q$,
\[ \sigma\left(A_\alpha B^k\right) = A_\alpha P^{-k},\quad \sigma'\left(A_\alpha (B')^k\right) = A_\alpha (P')^{-k}. \]
Observe that $\forall\alpha,\beta\in\f_q$, $\forall k,l\in\K_q$,
\[ \sigma\left(A_\alpha B^k\cdot A_\beta B^l\right) = \sigma\left(A_{\alpha+t^k\beta}B^{k+l}\right) =
A_{\alpha+t^k\beta}P^{-k-l} = A_\alpha P^{-k}\cdot A_\beta P^{-l} =
\sigma\left(A_\alpha B^k\right)\cdot \sigma\left(A_\beta B^l\right), \]
so $\sigma$ is indeed an isomorphism. Similarly $\sigma'$ is also an isomorphism.

Next observe that in the case $p\neq2$, when $\fq$ is even, $\frac{q-1}{4}$ is an integer and so
\[ \left(\mfq\right)^2 = \left(\fq+1\right)^2 = \frac{(q-1)^2}{4} + (q-1) + 1 \equiv 1 \pmod{q-1}. \]
Thus $\left((B')^\mfq\right)^\mfq = B'$ and $\big\langle (B')^\mfq\big\rangle = \langle B'\rangle$. In addition, since $\f_q^\times$ is cyclic of an even order $q-1$, we have $-1=t^\fq$, and thus $(-t)^\mfq = \left(t^\mfq\right)^\mfq = t$. Consider $\tau: K\to K'$, where
\[ \tau\left(A_\alpha B^k\right) = A_\alpha (B')^{\mfq k},\quad \forall\alpha\in\f_q,\ \forall k\in\K_q. \]
Apparently $\tau$ is a bijection. Also we can show that it is a homomorphism by calculating $\tau\left(A_\alpha B^k\cdot A_\beta B^l\right)$ with the following fact:
\[ A_\alpha(B')^{\mfq k}\cdot A_\beta(B')^{\mfq l} = A_{\alpha+(-t)^{\mfq k}\beta}(B')^{\mfq(k+l)}
= A_{\alpha+t^k\beta}(B')^{\mfq(k+l)}. \] 
Thus when $\fq$ is even, $K\cong K'$ and the four groups are all isomorphic.

When $\fq$ is odd, however, $\tau$ is not a bijection anymore, because this time $B'\notin\big\langle (B')^\mfq\big\rangle$ and $\tau(K)\neq K'$. Furthermore, we can prove that in this case $K$ and $K'$ are not isomorphic, by showing that $K$ and $J$ have generalized flower structures whenever $q>2$, whereas if $p\neq2$, $K'$ and $J'$ only have flower structures when $\fq$ is even. Since $K\cong J$ and $K'\cong J'$, it is enough to only show the analysis of $K$ and $K'$. Pick $\alpha\in\f_q^\times$ and assume $k,l\in\K_q$. Similar to the $G_2$ in Section~\ref{subsec:pgl(2,q)}, we have the relation
\[ (B^k)^{A_\alpha} = B^l \iff k = l = 0, \]
thus $K$ has a generalized flower structure whenever $q>2$. On the other hand, for $K'$ we have
\[ (B'^k)^{A_\alpha} = B'^l \iff \begin{bmatrix} (-1)^k & 0 \\ t^k\alpha & t^k \end{bmatrix} = 
\begin{bmatrix} (-1)^l & 0 \\ (-1)^l\alpha & t^l \end{bmatrix}, \]
which requires $k = l$ and $t^l = (-1)^l$. Thus for $p\neq2$, $l$ can only be 0 or $\fq$. If $\fq$ is even, we have $(-1)^\fq = 1$ and so $k = l = 0$, then $K'$ also has a generalized flower structure (as expected since here $K\cong K'$). If $\fq$ is odd, however, this is not true: in this case $(-1)^\fq=-1$, so $k = l = 0$ or $\fq$ in the above relation. Thus $\forall\alpha\in\f_q^\times$, $\langle B'\rangle \bigcap \langle B'\rangle^{A_\alpha} = \langle -I \rangle \cong \z_2$. When $q=3$, $B' = -I$ and $K' = \langle A \rangle \times \langle -I \rangle \cong \z_3\times\z_2 \cong \z_6$; when $q>3$, $\langle B' \rangle$ and $\langle B' \rangle^{A_\alpha}$ are distinct groups but have nontrivial intersection. Therefore, in neither case does $K'$ have a generalized flower structure.

\subsection{Intersections in Instances 8 and 9}
\label{subsec:app_89int}

Let $p\neq2$. Observe that $K'$ and $J'$ are both subgroups of the $G_2$ in Instance~1, so all the intersections in both instances are subgroups of their respective counterparts in Instance~1. In Instance~8, since ${G_{12}\leq\langle tI, B_1 \rangle}$ and the (1,1)-entry for every matrix in $G_2=K'$ is always $\pm 1$, we have $G_{12}\leq\langle -I, B_1 \rangle$. This further limits the (2,2)-entry to be $\pm 1$ for each matrix in $G_{12}$. As the (2,2)-entry in $K'$ takes the form $t^k$ for some $k$, this $k$ can only be $0$ or $\fq$. By examining the parity of $\fq$, we have
\[ G_{12} = \left\{\begin{array}{cl}\langle B_1 \rangle \cong \z_2 & \text{if }\fq\text{ is even}\\
    \langle -I \rangle \cong \z_2 & \text{otherwise}\end{array}\right.,\quad
G_{123} = G_{124} = \left\{\begin{array}{cl}1 & \text{if }\fq\text{ is even}\\
    \langle -I \rangle \cong \z_2 & \text{otherwise}\end{array}\right.. \]
Similarly we can calculate $G_{12},G_{123}$ and $G_{124}$ for Instance~9.

In both instances, $G_{24}$ is simply the subgroup of all diagonal matrices in $G_2$, and $G_{23}\leq T$. As matrices in $K'$ and $J'$  can be respectively written as
\[ (-1)^k \begin{bmatrix} 1 & 0 \\ \alpha' & (-t)^k \end{bmatrix} =
(-1)^k \begin{bmatrix} 1 & 0 \\ \alpha' & (t^\mfq)^k \end{bmatrix}\quad
\text{and}\quad t^k\begin{bmatrix} 1 & 0 \\ \alpha'' & (-t^{-1})^k \end{bmatrix} =
t^k\begin{bmatrix} 1 & 0 \\ \alpha'' & (t^{\frac{q-3}{2}})^k \end{bmatrix} \]
for some $\alpha',\alpha''\in\f_q$ and $k\in \K_q$, we see that $G_{23} = \langle -B_3^\mfq \rangle$ and
$\langle tB_3^{\frac{q-3}{2}} \rangle$ respectively, where
\[ (-B_3^\mfq)^k = \begin{bmatrix} (-1)^k & 0 \\ t^k-(-1)^k & t^k \end{bmatrix},\quad
(tB_3^{\frac{q-3}{2}})^k = \begin{bmatrix} t^k & 0 \\ (-1)^k - t^k & (-1)^k \end{bmatrix}. \]
Thus $G_{23} \cong \z_{q-1}$ in both cases.

\subsection{The case $p=3$ for Instance 15}
\label{subsec:app_15p3}

In Instance~15, $G_1 = M = \langle C, B_1\rangle$ and $G_2 = (J')^E$. We can show that $G_1G_2 = G_2G_1$ when $p=3$, thus Condition~\ref{condition:g1g2} is satisfied. Observe that $G_2 = \left\{X_{\alpha,j}\,|\,\alpha\in\f_q, j\in\K_q \right\}$, where
\[ X_{\alpha,j} \triangleq
\begin{bmatrix} (-1)^j - \alpha & \alpha \\ (-1)^j - t^j - \alpha & t^j + \alpha \end{bmatrix}. \]
When $p=3$, we have $2=-1$. With this relation, it is easy to check that $C = X_{1,0} \in G_2$, and for each $\alpha$ and $j$
\[ X_{\alpha,j}^{B_1} =
\begin{bmatrix} (-1)^j + \alpha & -\alpha \\ (-1)^j - t^j + \alpha & t^j - \alpha \end{bmatrix}
= X_{-\alpha,j}\in G_2. \]
Thus $G_1$ normalizes $G_2$. In particular, $\forall X \in G_2$ and $\forall Y \in G_1$, we have $X^Y\in G_2$ and $X^{Y^{-1}}\in G_2$, which imply $XY\in G_1G_2$ and $YX\in G_2G_1$ respectively. Therefore $G_1G_2 = G_2G_1$.

\subsection{Intersections in Instances 12--15}
\label{subsec:app_1215int}

Most intersections are easily obtained by comparing the formulae of the matrices in the subgroups involved. For the intersection of $M$ with any of $J^E,(J')^E,J^Q$ or $(J')^Q$, we can utilize the properties below to facilitate calculation. Let $\vec{c}_i(X)$ denote the $i$-th column of a matrix $X$, we have
\[ \vec{c}_1(X)+\vec{c}_2(X)=\begin{bmatrix} 1 \\ 1 \end{bmatrix},\ \forall X\in J^E;\qquad
\vec{c}_1(X)+\vec{c}_2(X)=\pm\begin{bmatrix} 1 \\ 1 \end{bmatrix},\ \forall X\in (J')^E; \]
\[ \vec{c}_1(X)-2\vec{c}_2(X)=\begin{bmatrix} 1 \\ -2 \end{bmatrix},\ \forall X\in J^Q;\qquad
\vec{c}_1(X)-2\vec{c}_2(X)=\pm\begin{bmatrix} 1 \\ -2 \end{bmatrix},\ \forall X\in (J')^Q. \]
Thus, we need only seek elements of $M$ which share these properties.

We also want to mention the calculation of $G_{34}$ for Instances 13 and 15 when $p>3$. In Instance 13, finding $G_{34}$ is equivalent to solving the following set of equations:
\[ \left\{\begin{array}{c}(-1)^j-\alpha = (-1)^i+2\beta \\
\alpha = \beta \\
(-1)^j-t^j-\alpha = 2\left(t^i-2\beta-(-1)^i\right) \\
t^j+\alpha = t^i-2\beta
\end{array}\right. \iff
\left\{\begin{array}{c}\alpha = \beta \\
3\beta  = (-1)^j-(-1)^i \\
t^i = (-1)^j \\
t^j = (-1)^i
\end{array}\right.. \]
From the last two equations, we can see that $i$ and $j$ can only be 0 or $\fq$. If $\fq$ is even, then $(-1)^\fq=1$, so $i$ and $j$ must both be 0, which yields that $G_{34} = 1$. If $\fq$ is odd, then $i=0$ implies that $j=0$, and $i=\fq$ implies that $j=\fq$. In both cases $\alpha=\beta=0$, therefore $G_{34} = \langle -I \rangle$. For $G_{34}$ in Instance 15, we have similar equations and the same discussion also applies.

\section*{Acknowledgment}

The authors would like to thank Michael~Aschbacher and Amin~Shokrollahi for very helpful discussions on the conditions and the generalizations of the violations,
and on expanding the group structures. They would also like to
thank Ryan~Kinser for reminding them of the case $PGL(2,q)$.

\ifCLASSOPTIONcaptionsoff
  \newpage
\fi



\bibliographystyle{IEEEtran}
\bibliography{IEEEabrv,IngletonVioIT}
\end{document}